\documentclass[journal]{IEEEtran}

\usepackage{amssymb,amsthm,amsfonts,amsbsy}
\usepackage{geometry}
\usepackage{hyperref}
\usepackage{graphicx}
\usepackage{mathrsfs}
\usepackage{booktabs}
\usepackage{enumitem}
\usepackage{mathtools}  

\DeclareMathAlphabet\mathbfcal{OMS}{cmsy}{b}{n}
\usepackage{framed}
\usepackage{enumitem}
\usepackage{tocloft}
\usepackage{tikz}

\usepackage{float}
\usepackage{comment}
\usepackage{tikz}

\usepackage{lipsum}

\title{\textbf{A Multiuser Channel Capacity Region}}

\author{%
John~M.~Cioffi\thanks{J.~M.~Cioffi is with Stanford University.
This work was financially supported by funding from Samsung, Intel, and Ericsson.}
}
\date{September 14, 2026}

\newcommand{\Prev}[2]{\operatorname{prev}_{\pi_{#1}(#2)}}
\newcommand{\xvec}{\mbox{\boldmath ${x}$}}

\newcommand{\yvec}{\mbox{\boldmath ${y}$}}

\newcommand{\nvec}{\mbox{\boldmath ${n}$}}

\newcommand{\zerovec}{\mbox{\boldmath ${0}$}}

\newcommand{\bvec}{\mbox{\boldmath ${b}$}}

\newcommand{\thetavec}{\mbox{\boldmath ${\theta}$}}

\newcommand{\rxx}{R_{\xvec \xvec}}

\newcommand{\beq}{\begin{equation}}
\newcommand{\eeq}{\end{equation}}

\newcommand{\isdef}{\stackrel{\Delta}{=}}

\newcommand{\cI}{{\cal I}}

\newcommand{\cU}{{\cal U}}
\newcommand{\cbI}{{\mathbfcal I}}

\newcommand{\R}{\mathbb{R}}

\newtheorem{theorem}{Theorem}
\newtheorem{lemma}{Lemma}
\newtheorem{corollary}{Corollary}
\newtheorem{definition}{Definition}

\begin{document}

\maketitle

\begin{abstract}
A finite structural characterization of the $U$-user multiuser channel-capacity region appears here.
This region relies upon three concepts: (i) subset-based message atomization, (ii) synchronized receiver chain-rule/Fano reduction, and
(iii) successive-decoding achievability via a finite-super-symbol closure. The resulting region is a finite union of order-indexed polytopes
parameterized by the synchronized minimum mutual-information vector $\cI_{min}$. 
The framework removes auxiliary-random-variable proliferation 
and makes explicit the finite geometric structure underlying general multiuser converses, with the interference channel providing the primary illustration.
For the linear matrix Gaussian multiuser-channel special case under per-user trace covariance constraints,
Gaussian signaling is capacity-region-achieving within this framework.   Also shown is that maximum rate sum for any Gaussian multiple-user-channel derives
from simple iterative procedures. 

\end{abstract}
\begin{IEEEkeywords}
Multiuser information theory, interference channel, capacity region, Generalized Decision Feedback Equalization (GDFE), successive decoding
\end{IEEEkeywords}

\section{Introduction\label{sec1}}

The multiuser channel has a general capacity region with a compact structural description.
Traditional formulations introduce auxiliary random variables whose operational meaning is often implicit and whose cardinalities
obscure the underlying geometry.  While this work's initial target was interference channel (IC), the presented atomization framework applies to general multiuser channels, where each receiver induces a chain-rule expansion.  The characterization is exact for memoryless multiuser channels without
feedback or transmitter cooperation.

This paper develops a finite, order-based formulation that assigns all rates to subset-atoms and all converse inequalities
arise from chain-rule expansions constrained by Fano's inequality \cite{fano1961transmission}, \cite{coverbook}, \cite{csiszar2011information}. 
As in essentially all modern synchronized block-coded communication systems \cite{tse2005fundamentals} (e.g., MAC, BC, cellular, DSL, fiber, cable, Wi-Fi, and cell-free architectures), this work assumes {\bf a common codeword/block timing} across all participating transmitters and receivers. 
Accordingly,
both the region's converse and its achievability employ a single global time-sharing
variable that synchronizes decoding-order realizations across receivers.
Asynchronous decoding constitutes a different communication model and
lies outside this paper's scope.

As with Shannon's general capacity formula, this work permits arbitrary finite-dimensional super-symbol input distributions and obtains the final capacity region by normalized closure over super-symbol length.  Rate normalization by the super-symbol length with the closure over all finite lengths yields the complete capacity region.
Consequently, an important object becomes a minimum mutual-information vector, $\cI_{min}$, that collects a
component-wise minimum across receivers. The resulting region is a projection of a finite union of polytopes of dimensionality maximally $U \cdot (2^{U -1})$ where $U$ is the number of users, with a final step summing the atomic contributions to each user $u \in [1:U]$'s data rate.  For the IC, the atomization cardinality slightly reduces to $U \cdot (2^{U-1}-1)$ and significantly reduces to $U$ for multiple-access and broadcast channels.  This approach leverages a chain-rule/generalized-decision-feedback approach from the early work in \cite{cioffi1994gdfe}, later appearing in more detail in two book chapters~\cite{cioffi1997gdfe} \cite{cioffi_ee379}, which approach connects multiuser ordering and elimination structure to interference management.  
Reference  \cite{cioffi_ee379} further develops these concepts.  Section \ref{sec3} also introduces terminal atoms that lead to Gaussian multiuser channels capacity regions being exhausted by all atoms/users using good independent single-user Gaussian-channel codes.  

Subsection~\ref{sec1.1} summarizes this paper's main contributions, followed by Subsection~\ref{sec1.2}'s brief reader guide, Subsection~\ref{sec1.3}'s provides notational definitions and a three-user atomization example, while Subsection~\ref{sec1.4} presents the system model. 

\subsection{Main Contributions\label{sec1.1}}

This work has five primary contributions:

\begin{enumerate}
\item \textbf{Finite message atomization:}
{\em Any} reliable multiuser code admits a subset-based atomization indexed by {\bf receiver decodability} patterns. This converts implicit message splitting into a finite, explicit structure with at most $U\cdot (2^{U-1})$ atoms.

\item \textbf{A unified converse template:}
This converse applies Fano's inequality receiver-by-receiver, and then expands via the chain rule to yield polymatroidal constraints whose extreme points correspond to decoding orders. Intersecting these constraints' state-dependent results across receivers produces a component-wise minimum operation, summarized by the minimum mutual-information vector $\cbI_{\min}$.

\item \textbf{Finite-order geometry:}
The capacity region forms a finite union of order-indexed polytopes' projections.  This eliminates auxiliary-random-variable cardinality proliferation and exposes the underlying combinatorial structure.

\item \textbf{Structural one-and-multi-letter representation:} The subset atomization begins as a one-letter representation theorem. Every reliable code induces a finite receiver-decodability structure that uses single-letter mutual-information increments.   This structural reduction -- the atom taxonomy together with the converse of Section~\ref{sec2} -- requires no multi-letter construction at all.
However, Section \ref{sec2}'s development also includes finite super-symbol extensions to match the conventional Shannon closure over arbitrary block codes. Unlike the atomization and converse, this closure is not, in general, merely a matter of formal completeness: Section~\ref{sec2} discusses channels \cite{nair} for which a length-1 super-symbol distribution provably fails to attain $\cI_{\min}$'s value, so general achievability does need the closure's multi-letter construct. Section~\ref{sec3} shows this multi-letter requirement is absent for linear Gaussian multiuser channels.

\item \textbf{Gaussian optimality as a structural corollary:}
For linear matrix additive white-Gaussian-Noise (AWGN) multiuser channels with per-user trace covariance constraints, Gaussian signaling maximizes every chain-rule increment and therefore the entire $I_{\min}$ vector, implying Gaussian signaling exhausts the capacity region within the template.
\end{enumerate}

\subsection{Reader Guide and Scope of the Converse\label{sec1.2}}

Figure~\ref{fig:bigpicture} summarizes the logical reduction underlying the development. This paper separates the {\em structural} reduction from the subsequent {\em distribution optimization} and formal capacity completion.
\begin{figure*}[t]
\begin{framed}
\centering
\includegraphics[width=0.9\linewidth]{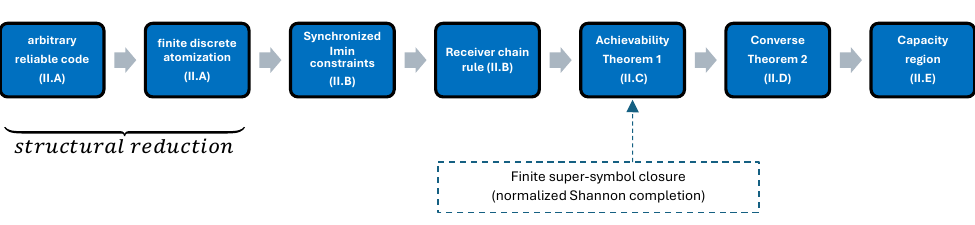}
\caption{Logical progression of the development. The first three stages reduce an arbitrary reliable code to a finite one-letter atomic representation, and its associated $\cI_{min}$ characterization. The final capacity statement then completes through conventional, normalized, finite-super-symbol closure.}
\label{fig:bigpicture}
\end{framed}
\end{figure*}

\begin{itemize}
\item  Section \ref{sec2} develops the finite atomic representation of arbitrary
reliable codes' decoders. Beginning with receiver-subset aggregation and complete
atomization, it introduces the resulting $\cI_{min}$ characterization together with the
associated synchronized time-sharing representation, derives synchronized receiver chain-rule constraints,
and then establishes
both achievability and the converse. The section concludes with the
conventional normalized finite super-symbol closure that completes the
Shannon capacity characterization (see Fig.~\ref{fig:bigpicture}), and discusses that closure's scope: it is a Verd\'u--Han-type (\cite{verduhan1994general}) existence statement in general, made explicit and computable for Gaussian channels in Section~\ref{sec3}.
\item Section~\ref{sec3} establishes Gaussian optimality for linear Gaussian
multiuser channels.  The proof first introduces an atomic representation
suited to Gaussian interpolation, beginning with the degraded Gaussian
broadcast channel and then a representative interference channel.  The
subsequent terminal-atom recursion establishes Gaussian extremality for all
strict priority orderings, extends the result to tied priorities by
continuity, and finally recovers the conventional Gaussian superposition
representation.
\item Section \ref{sec4} provides some optimal-BC and optimal-IC-design example uses along with their capacity regions.  This section also addresses rate sums and provides a simpler single-user-equivalent explanation of Gaussian extremality, complementing Section \ref{sec3}'s formal proof.  It also finds the Gaussian MUC's maximum rate sum follows iterations related to water-filling and worst-case noise covariance matrices. 
\item Section \ref{sec5a} contrasts this work with auxiliary-variable approaches and shows that the classical Han–Kobayashi inner bound is a one-letter special case of the atom construction.  Since HK is not the capacity region for all discrete interference channels, this comparison illustrates why a one-letter inner bound alone is insufficient. Accordingly, the paper retains\footnote{The finite one-letter atomization and the converse are this paper's principal structural results, and both hold without any multi-letter construction. The later finite-extension closure is not, however, solely a matter of formal completeness: Section~\ref{sec2} shows it is generally required for achievability outside the Gaussian case of Section~\ref{sec3}, per the documented single-letter gap of \cite{nair}.} the conventional Shannon normalized finite-extension closure to preserve equivalence with the general capacity theorem.
\end{itemize}

{\bf Claims:}
The converse uses only Fano's inequality, the chain rule, and receiver-wise decoding requirements. Within this paradigm, the derived region is complete: every reliable code induces an admissible atomization and therefore satisfies the $I_{\min}$ bounds.

{\bf Assumptions not required:}
This work imposes no specific message-splitting architecture, auxiliary random variables, or MAP decoding assumptions. Successive decoding appears only as an achievability mechanism.

{\bf Clarification for formality;}
A common concern is whether the atomization restricts coding generality. Section \ref{sec2}'s embedding lemma addresses this directly: atomization occurs \emph{after} starting from an arbitrary reliable code by grouping message bits according to decodability sets. Thus the converse applies to arbitrary codes, not only to superposition or rate-splitting constructions.

\subsection{Notation, Structural Objects, and a 3-User Example\label{sec1.3}}

\begin{itemize}[leftmargin=2em]
\item Users have indices $i \in [1:U] \triangleq \{1,\dots,U\}$.
\item Nonempty receiver subsets are $S \subseteq [1:U]$, $S \neq \emptyset$.  
\item The atom index set has size $U \cdot (2^{U-1})$ and is
\beq
\mathcal A \triangleq \{(i,S): i\in S \subseteq [1:U]\}.
\label{defineAset}
\eeq
\item Atom $a=(i,S)$ carries rate $b_a \equiv b_{i,S}$ and is decodable by receivers $i \in S$.
\item A linear projection from atom rates to user rates is\footnote{The summation index in (\ref{batomsum}) means to sum over all subsets $S$ that contain the index $i$.}:
\beq
b_i = \sum_{S \ni i} b_{i,S} \;\; .
\label{batomsum}
\eeq
\end{itemize}

A receiver $r$ has
\[
\mathcal A_r \triangleq \{a \in \mathcal A : r \in S(a)\},
\]
which contains all atoms that receiver $r$ can decode for the given input $p_{x_1, ..., x_U}$ distribution.  

Receiver $r$ decodes the atoms in $\mathcal A_r$.
The remaining atoms need not be decoded at receiver $r$; equivalently, they are incorporated into the induced marginal channel obtained by averaging over their channel inputs.  Equivalently decoder $r$ operates over the marginal distribution obtained by averaging over the undecoded codewords. 
 Receiver $r$ uses the corresponding marginal channel $p_{y_r / \mathcal A_r}$ distribution to decode $\{ a \mid a \in \mathcal A_r \}$. 
 For example with $U=3$, the resulting general multiuser atom set contains $3\cdot 2^{2}=12$ atoms.  
 
 {\bf Periodic Atom Table Analogy and Example:} Figure \ref{periodic} enumerates the complete set of 12 atoms for a $3 \times 3$ multiuser channel.  The complete atomization is channel {\bf in}dependent.   Specific channels and target user rates determine those atoms that best activate.   The numbers inside the atoms are the set $S$, while the color is the index $i$.  The 3 users each have 4 associated (same color) atoms.   The periodic-table analogy views the atoms as basic multiuser elements, while the channel effectively selects combinations of these building blocks that best serve users needs.  

The rows of the periodic table have similar elements.
Three ($U=3$) {\bf Global user atoms} (at the top of Figure \ref{periodic} on the left) are atoms that every decoder will decode if those global atoms activate.   By contrast, $U=3$ {\bf private user atoms} (Figure \ref{periodic} at the bottom left) only need be each decoded at their own user's receiver.   In the middle are 6 {\bf pairwise user atoms}, 2 for each user.  Figure \ref{periodic}'s right side illustrates that a particular channel (and target rates possibly) determine which atoms activate.   Only 4 active atoms appear (from the 12 total on the left) active for the example channel. 

\begin{figure*}[t]
\begin{framed}
\centering
\includegraphics[width=0.9\linewidth]{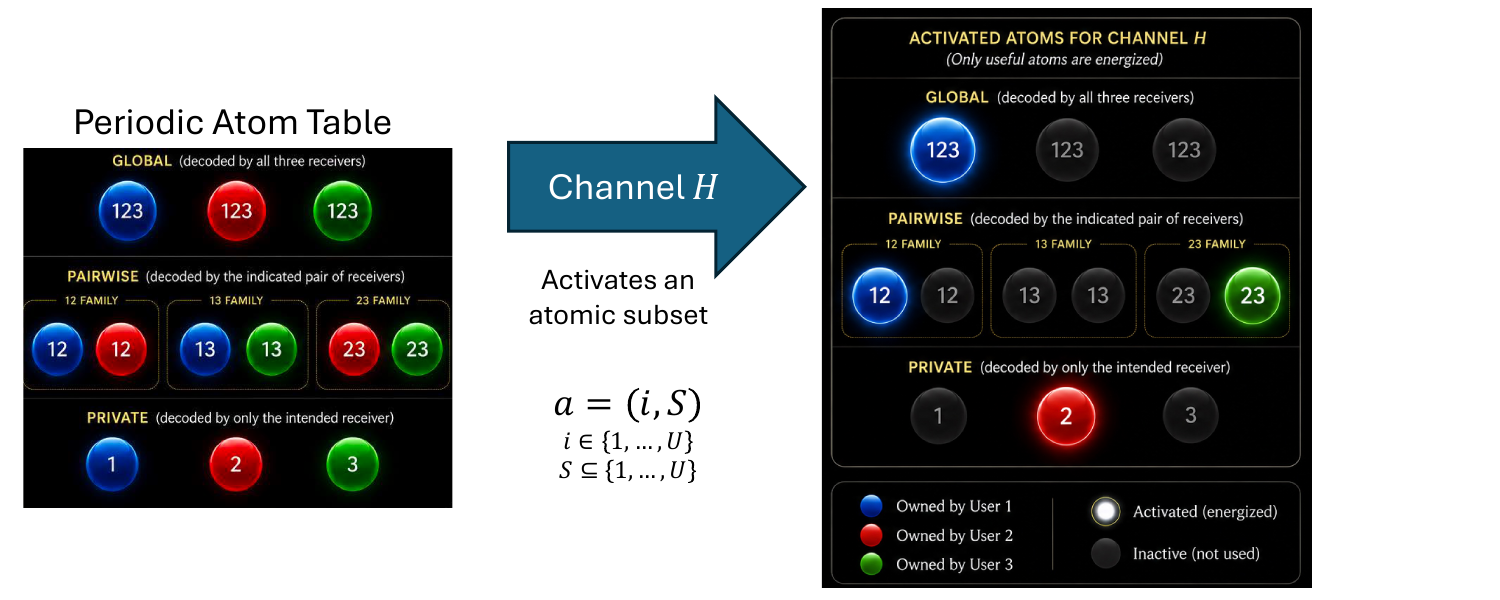}
\caption{Multiuser Periodic Atom Table to Channel: The channel does not create information atoms.
The channel selects and activates them to meet users rate needs.}
\label{periodic}
\end{framed}
\end{figure*}
The structural approach replaces auxiliary variables with Figure \ref{periodic}'s {\bf Periodic Table} in analogy to chemistry's periodic table, only in this case to characterize a network's, as opposed to a material's, constituency.  Different networks have different atomic decompositions, just as do materials made of atoms.   The channel (or ``use case'') determines the set of atoms to use.
 
\subsection{System Model\label{sec1.4}}
The system model is a memoryless $U$-user channel with user outputs $y_u$ and inputs $x_u$ where $u \in [1:U]$ and each $x_u$ and $y_u$ can correspond to random vectors that transmit and receive respectively at separate locations. 
\[
p(y_1,\dots,y_U \mid x_1,\dots,x_U).
\]
User $i$ transmits message $M_i$ with bits per channel use $b_i$.
Throughout this paper
\begin{itemize}
\item $i,j$ will be transmitter user indices.
\item $k,t$ will be time indices.
\item $r,s$ will be receiver/user indices. 
\item $S$ will be a subset of receivers.
\item $T$ and $\mathcal{T}$ will be a subset of transmitters/users.
\item $a=(i,S)$ will be an atom owned by the transmitter $i$ and decoded by receivers in $S$. 
\item $p \isdef \prod_{i=1}^U p (x_i)$ is the input distribution of independent users. Any user $i$'s distribution factors into independent atoms $(a,S)$ distributions (without loss of generality as becomes evident), $p_a$ that correspond to the subsets $p_a \in {\mathcal P} ([1:U] \setminus \emptyset )$, where $\mathcal P$ is the collection of all finite-dimensional product super-symbol distributions.  Multi-letter codes appear only in achievability results.
\end{itemize}

\section{Finite Geometric Capacity-Region Results\label{sec2}}

This section follows Figure \ref{fig:bigpicture}'s logical development.  Subsection \ref{sec2.1} begins with any reliably transmitted multiuser-code decoders' expanded-finite-dimensional discrete atomization, before proceeding to Subsection \ref{sec2.2}'s application of information theory's chain rule to this atomization.
Subsection \ref{sec2.2} then proceeds to prove this atomization's completeness in representing any reliable multiuser-channel code using a minimum-mutual-information construct.
Subsection \ref{sec2.3} confirms the region's achievability, while Subsection \ref{sec2.4} provides a converse.  Finally Subsection \ref{sec2.5} provides the multiuser capacity region's compact description and the higher-dimensional atom rate vector's projection into the final $U$-dimensional rate region. 

Equations (\ref{defineAset}) and (\ref{batomsum}), and Figure \ref{periodic}'s example above, anticipate this representation through an index that refines (\ref{defineAset})'s atom index set:
An atom $a = (i,S)$ carries rate $b_{(i,S)}=b_a$ and exactly the decoders $r \in S$ must decode it.  User $i$'s total rate is in (\ref{batomsum}).

Mutual-information quantities appear in single-letter notation. As in Shannon's original capacity theorem \cite{shannon1948}, these letters may themselves denote arbitrary finite-dimensional super-symbols. Thus no reduction from multiletter to single-letter occurs; the notation simply suppresses the super-symbol dimension until Theorem \ref{achieve:shanclose} later explicitly takes its normalized closure.

\subsection{Receiver-Subset Aggregation\label{sec2.1}}

This subsection begins from an arbitrary reliable code without assuming any particular message splitting, codebook construction, decoding order, or block structure.  It proceeds to show that every such code admits an equivalent finite-dimensional representation whose coordinates correspond to receiver-decoding subsets.  This structural reduction forms the basis for later subsections' subsequent chain-rule characterization and capacity-region derivation.
The reduction proceeds in two steps:  

{\bf First,} all of any user's reliably decodable information aggregates according to the subset of receivers that decode it.  Any finer internal-component decomposition, which the same receivers decode, is rate-wise redundant.  This is because those components may combine without changing either the total user rate or any receiver's decoding obligations.  Consequently, each user requires no more than $2^{U-1}$ aggregated message classes, and there are thus no more than $U\cdot 2^{U-1}$ classes over all users.

{\bf Second,} a corresponding information atom represents each aggregated message class.  
This complete atomization is purely representational and places no restriction on the original code construction. Section \ref{sec3} retains this representation only for the Gaussian extremality proof, after which atoms belonging to the same user recombine into the conventional user-layer representation.
 It merely provides an enlarged, but finite, atomic coordinate system into which every reliable code can be related.  
Subsequent receiver chain rules then assign mutual-information increments to these atomic coordinates, ultimately leading to the information constraints that define the capacity region.

Each length-$m$ channel block may expand to a single super-symbol with distribution $p^{(m)}$.  This enlarges the admissible input distributions' possibilities without changing the finite atomic representation.  Thus, multi-letter coding may alter the atomic mutual-information increments' attainable values, but it introduces no additional receiver-decoding classes beyond the finite atomic representation established below.

\noindent
{\bf Interpretation of Lemma \ref{lem:atomization}:}
Lemma~\ref{lem:atomization} is a structural representation theorem.  It begins from an arbitrary reliable code without assuming superposition coding, independent codebooks, rate splitting, successive decoding, or any other coding architecture.  The lemma does not claim that arbitrary algebraic functions of a message admit an independent decomposition.  
 Rather, it shows that every reliable code induces an equivalent finite-dimensional atomic-rate representation whose coordinates correspond to receiver-decoding subsets.
 Lemma \ref{lem:subsetatom} subsequently characterizes the induced atomic representation through Fano's inequality, introducing only the standard asymptotically vanishing Fano term. The proof first aggregates information according to receiver-decoding subsets and then represents each aggregated class by a single information atom. These atoms' information-theoretic characterization follows in Subsection~\ref{sec2.2}.

\begin{lemma}{\bf Receiver-Subset Aggregation and Complete Atomization}
\label{lem:atomization}

Every reliable code induces an equivalent finite-dimensional atomic-rate representation having the following properties for a memoryless multiuser channel (MUC).
\begin{enumerate}
\item
For each user $i$, the independent rate-bearing components of its
independent user message $M_i$ aggregate according to the subset of
receivers that reliably decode them. Components that the same receiver
subset decodes are rate-wise equivalent to a single aggregated message
class. Recoverable functions of one or more such components do not
constitute additional independent message components and therefore do
not create additional atom-rate coordinates.
\item
One information atom represents each aggregated message class. Consequently, each user requires no more than $2^{U-1}$ atoms, and there are no more than $U\cdot2^{U-1}$ atoms over all users. The {\bf resulting atomic-rate vector}
\[
\mathbf b=\{b_{i,S}:i\in S\}
\]
therefore provides a complete finite-dimensional representation of every reliable MUC code. (Figure \ref{periodic} illustrates this by example.)
\end{enumerate}
\end{lemma}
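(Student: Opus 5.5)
The plan is to argue by an explicit construction on an arbitrary reliable code. Fix a memoryless MUC and a sequence of $(2^{nb_1},\dots,2^{nb_U},n)$ codes whose error probabilities at every receiver vanish. For each receiver $r$, let $\hat M^{(r)}$ be the collection of quantities that decoder $r$ reliably recovers, where the collection of rate-bearing components is defined by the code. The first step is to write each independent user message $M_i$ as a tuple of independent, uniformly distributed components $M_i=(M_{i,1},\dots,M_{i,L_i})$; for example, the message bits themselves are such a tuple. To each component $M_{i,\ell}$ I attach its \emph{decodability set} $S(i,\ell)=\{r: M_{i,\ell}\text{ is reliably recovered at } r\}$. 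Reliability of the original code forces $i\in S(i,\ell)$ for every $\ell$, because receiver $i$ must recover all of $M_i$. Hence every decodability set is a subset of $[1:U]$ that contains $i$, and there are exactly $2^{U-1}$ such subsets.

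The second step is the aggregation. For each pair $(i,S)$ with $i\in S$, I define $M_{i,S}\triangleq(M_{i,\ell}: S(i,\ell)=S)$ and set $b_{i,S}\triangleq \frac1n\log|\mathcal M_{i,S}|$, taking $b_{i,S}=0$ when the class is empty. The components are independent and uniform, so the rates add, which gives (\ref{batomsum}). The classes also partition the components of $M_i$, so $M_i$ is in bijection with $(M_{i,S})_{S\ni i}$, and no rate is lost or double counted. I then need to verify that each receiver's obligations are unchanged. A component is decoded by $r$ exactly when $r\in S$, so receiver $r$ must decode precisely $\{M_{i,S}: r\in S\}$, which is the set $\mathcal A_r$. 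A union bound over finitely many components gives reliable joint decoding of the aggregated class from the component-wise decoders. Splitting the class again into components changes neither the totals nor the obligations, which shows the "rate-wise equivalent" claim. The resulting vector $\mathbf b=\{b_{i,S}\}$ therefore has at most $U\cdot2^{U-1}$ coordinates and is independent of $n$ and of the code's internal structure. This gives item 2, with one atom per nonempty class.

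The third step deals with recoverable functions. Suppose a receiver decodes some function $f(M_{i,1},\dots)$ that is not itself a component, for example a parity of bits decoded by different receiver sets. I handle this by refining the choice of components rather than adding coordinates. Such a function is determined by the components, and so it carries no entropy beyond $H(M_i)$. It therefore cannot raise $\sum_S b_{i,S}$. It only changes which receivers' Fano bounds involve it, and Lemma \ref{lem:subsetatom} treats that later.

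The main obstacle is making the decodability set well defined when a receiver recovers only partial information about a component, for instance a coset or an algebraic combination. In that case no decomposition into independent components has clean decodability sets. My first attempt will be to define the components adaptively. At receiver $r$, take the information $I(M_i;Y_r^n)$ that is recoverable in the Fano sense, and use the lattice of $\sigma$-algebras generated by the decodable functions at each receiver. The atom for $S$ then corresponds to the information common to exactly the receivers in $S$. The step I expect to need the most care is justifying that this can be realized by independent uniform components, possibly after an asymptotically negligible rate loss. The lemma's own wording only asserts this for "independent rate-bearing components", so the fallback is to restrict to that reading and defer any residual slack to the vanishing Fano term of Lemma \ref{lem:subsetatom}.
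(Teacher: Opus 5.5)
Your skeleton matches the paper's proof. You partition the independent components of $M_i$ by decodability set, note that only the $2^{U-1}$ subsets containing $i$ can occur, and add rates. You also scope the lemma to independent rate-bearing components, as the paper does; its Step~3 simply asserts that each aggregated class is reliably recovered by every $r\in S$ ``by construction.'' Where you try to justify that assertion, however, the argument fails as written.

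Take your own example of message bits as components. Then $L_i=nb_i$ grows with $n$, and the union bound gives only $\Pr[\hat M_{i,S}\neq M_{i,S}]\le\sum_\ell P_{e,\ell}$, which need not vanish. A receiver that sees each bit through a binary symmetric channel with crossover $n^{-1/2}$ has per-bit error tending to $0$ but block error tending to $1$. Moreover, ``bit $\ell$ is reliably recovered at $r$'' is not well defined. A single component of a length-$n$ code has an error probability, not a reliability status, and the components change with $n$.

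The Fano step of Lemma~\ref{lem:subsetatom} needs block-level reliability of $M_{\mathcal A_r}$, so impose decodability at the class level instead. For each $n$, choose the partition so that each class $M_{i,S}$ is jointly decoded with vanishing error at every $r\in S$; the all-private choice $S=\{i\}$ always qualifies. The union bound then runs over at most $U\cdot 2^{U-1}$ classes, a number independent of $n$. Finally, pass to a subsequence along which the $b^{(n)}_{i,S}$ converge.

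Second, your fallback of absorbing the ``residual slack'' from partially recovered information into the vanishing Fano term cannot work, because that slack is $\Theta(n)$, not $o(n)$. For $U\ge 4$, let $M_i=(A,B)$ with $A,B$ independent uniform $nR$-bit strings, and let three other receivers recover $A$, $B$, and $A\oplus B$, respectively. All pairwise mutual informations among these three are zero. So in any decomposition into independent components with clean decodability sets, the atoms decoded at the three receivers would be disjoint and each carry rate $R$. That totals $3R$ inside a message of rate $2R$. Hence neither refining the components (your third step) nor your $\sigma$-algebra-lattice construction can produce such a decomposition; the negative co-information rules it out.

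The correct fallback, which is what the paper's Steps~1 and~3 actually do, is to \emph{shrink} decodability sets. Information without an independent component having a clean decodability set is assigned to a smaller $S$, ultimately $(i,\{i\})$. It is then left embedded in $Y_r^n$, where it affects only the mutual-information terms. This is always legitimate, because the Fano argument uses only that each $r\in S$ decodes $M_{i,S}$, never that receivers outside $S$ fail to decode it.
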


\begin{proof}

{\bf Step 1: Receiver-subset aggregation.}

An arbitrary reliable sequence of length-$N$ multiuser codes is
\[
{\cal C}^{(N)}
=
(\{\mathcal M_i\},f_i^{(N)},g_r^{(N)}),
\]
with independent user messages
\[
M_i\sim{\rm Unif}(\mathcal M_i),
\]
which the corresponding channel encoders transmit.
Again, there are no assumptions that regard message splitting,
superposition coding,
independent codebooks,
successive decoding,
or any other coding architecture.
Each transmitter has 1-to-1 encoder function $\tilde f_{i}^{(N)}:\mathcal M_i \to M_i$.
For each receiver $r$ and each user $i$, this proof considers a reliable decoder
$\tilde g_{i,r}^{(N)}:\mathcal Y_r^N\to M_i$ that 
estimates $M_i$ from $\mathcal Y_r^N$, and has error probability
$\tilde P_{e,i\to r}^{(N)}\triangleq\Pr[\tilde g_{i,r}^{(N)}(\mathcal Y_r^N)\neq M_i] \to 0$.

This proof continues by classifying independent rate-bearing
communication obligations rather than coding structure.
For each user, the proof aggregates its independent rate-bearing
message components according to the receiver subset that reliably
decodes each component.  If the same receiver subset reliably decodes
several such components, their individual identities are immaterial to
the converse because only their aggregate independent rate affects both
the user rate and every receiver's decoding obligation.

A receiver may additionally recover or exploit functions of one or more
message components that it does not individually decode.  Such
recoverable functions do not constitute additional independent
rate-bearing message components and therefore do not create additional
message classes.  They remain properties of that receiver's observation
and are retained in the induced channel distribution and subsequent
mutual-information bounds.

Consequently, the independent rate-bearing components having the same
receiver-decoding subset combine into a single aggregated message class
as follows:
User $i$'s aggregated message class that receiver subset $S$ reliably decodes is 
\[
M_{i,S},
\qquad
i\in S\subseteq[1:U],
\]
and that aggregate's rate is 
\[
b_{i,S}
=
\frac1N \cdot \log |\mathcal M_{i,S}| \; .
\]
Aggregation preserves the total user rate,
\[
b_i
=
\sum_{S\ni i}
b_{i,S},
\]
while preserving every receiver's decoding obligations.
Since there are only $2^{U-1}$ receiver subsets containing user $i$, each user requires at most $2^{U-1}$ aggregated message classes and the entire MUC requires at most
$U\cdot2^{U-1}$ message classes.

{\bf Step 2: Finite atomization.}

One information atom represents each aggregated message class.
This assignment is purely representational; it introduces no additional coding assumptions and leaves the underlying reliable code unchanged.
It embeds every reliable code into the enlarged, but complete, finite atomic-coordinate space of $\mathbf b$, 
whose dimensionality is at most
$U\cdot2^{U-1}$.

Arbitrary multi-letter codes treat each length-$m$ channel block as a single super-symbol having distribution $p^{(m)}$.
This enlarges the admissible input distributions but does not alter the finite atomic representation established above.
These distributions' explicit inclusion appears later in Theorem \ref{achieve:shanclose}'s proof. 

\smallskip
{\bf Step 3: Aggregated message classes' reliability.}

By construction, every aggregated message class collects only information
that every receiver in its associated subset reliably recovers.
Since aggregation merely combines communication obligations having the
same receiver-decoding pattern, no receiver acquires a new decoding obligation, and none loses an existing one.

Consequently, if
$r\in S$,
receiver $r$ reliably decodes the aggregated message class
$M_{i,S}$.
If
$r \notin S$,
receiver $r$ is not required to decode that independent message class.
The class remains embedded in receiver $r$'s observation, together with
any structure or functions of multiple message classes that receiver
$r$ may be able to infer from that observation.  Thus the atomic
representation does not discard receiver-side information merely
because that information is not itself an independent rate-bearing
message atom.

\smallskip
\smallskip
\noindent
{\bf Step 4: Conclusion.}

Steps~1--3 establish that every reliable multiuser code induces an equivalent finite-dimensional atomic-rate representation whose coordinates correspond to receiver-decoding subsets. This representation preserves the aggregate independent user rates and
every required receiver-decoding obligation, while leaving unchanged
all additional receiver-side information obtainable from the original
observation and introducing no assumptions regarding the underlying
code construction. Therefore every reliable code admits the atomic representation claimed in Lemma~\ref{lem:atomization}. {\bf QED.}
\end{proof}
Every reliable MUC code therefore induces a receiver-{\bf decodability} pattern represented by subset atoms; Figure \ref{fig:example} illustrates this structure explicitly for the three-user case.

\subsection{Receiver Chain Rules, Fano, and Completeness\label{sec2.2}}

Lemma~\ref{lem:atomization} establishes that every reliable multiuser code admits a finite atomic representation whose coordinates correspond to receiver-decoding subsets. Consequently, the remaining task is no longer structural but information theoretic: that is, it determines which atomic-rate vectors satisfy all receivers' decoding requirements.

Lemma \ref{lem:subsetatom} applies Fano's inequality to Lemma~\ref{lem:atomization}'s aggregated message classes and then Lemma \ref{lem:convexified-chain-rule} invokes the receiver chain rule to derive per-receiver mutual-information increments. These increments become the atomic constraints that ultimately define later subsections' achievable and converse capacity regions.
The atomic coordinates account only for independent rate-bearing
message information; they need not enumerate every function of that
information that a receiver may infer from its observation.  Such
receiver-side information remains embedded in the corresponding
mutual-information quantities.  The receiver chain rule introduced
next assigns these quantities to ordered conditional increments, so
that each increment represents only the residual information after
preceding atoms in that receiver's order have been properly addressed.
Thus useful structure in a receiver observation may affect the
chain-rule increments, but it cannot create an additional independent
rate coordinate or cause the same information to be counted more than
once.

Having established in Lemma~\ref{lem:atomization} that every reliable code admits a finite atomic representation, this subsection introduces a receiver decoding order
$\pi=[\pi_1,\pi_2,\ldots,\pi_U]$ for all receivers,
Each ordering $\pi$ may be any choice from the
$[U \cdot (2^{U-1})]! < \infty$ permutations of its decoded atoms.

\begin{definition}{\bf Chain-rule increment}.
For receiver $r$ and atom $a$,
\beq
\Delta_r(a;\pi_r,p) \isdef \cI_p(Y_r; M_{r(a)} \mid M_{\mathrm{prev}_r(a)}) \;\; ,
\label{increments}
\eeq
where $M_{r(a)}$ and $M_{\mathrm{prev}_r (a)}$ are the respective message sets of all atoms in receiver $r$'s order $\pi_r$ that follow and include, or respectively precede, atom $a$, and $\cI$ is the mutual information.  These are in one-to-one correspondence with the respective message atoms. 
\end{definition}

\begin{definition}{\bf Minimum Mutual Information} $\cI_{\min}$.  
For atom $a=(i,S)$,
\[
\cI_{\min}(a;\pi,p) \triangleq \min_{r \in S} \Delta_r(a;\pi_r ,p) \; . \label{Iminatom}
\]
\end{definition}
The relation $b_i = \sum_{S \ni i} b_{i,S}$ tacitly implies an intermediate temporary expansion of rate-region dimensionality. 
An $\cbI_{min}$ vector simply stacks elements in (\ref{Iminatom}) over all that expanded space's atoms $a$.

{\bf Completeness of the} $\cI_{\min}$ {\bf representation:}
This following establishes that Lemma~\ref{lem:atomization}'s finite atomic representation
 is information-theoretically complete:
every reliable code induces an atom-rate vector satisfying the
$\cI_{\min}$ constraints for some receiver-order realization.

Lemma \ref{lem:subsetatom} establishes an upper bound on atom rate increments that correspond to chain-rule terms.   
Lemma \ref{lem:convexified-chain-rule} refines this bound to the $\cI_{min}$ for the atom's rate.   These relationships then allow progression to achievability and a converse in Subsections \ref{sec2.3} and \ref{sec2.4} respectively.
Lemma \ref{lem:subsetatom} converts receiver reliability into subset-rate inequalities on the
atoms that a receiver decodes.

\begin{lemma}{\bf Per-receiver subset-atom rate constraints}.
\label{lem:subsetatom}
For a fixed $p \in \mathcal{P}$ and, as in Lemma \ref{lem:atomization},
there exists the same reliable sequence of length-$N$
$U$-user-channel codes
\[
\mathcal{C}^{(N)}
=
\big(
\{\mathcal{M}_i\}_{i=1}^U,
\{f_i^{(N)}\}_{i=1}^U,
\{g_r^{(N)}\}_{r=1}^U
\big),
\]
with independent messages
$M_i \sim \mathrm{Unif}(\mathcal{M}_i)$ and
$P_{e,r}^{(N)}\to 0$ for every receiver $r$.

Lemma \ref{lem:atomization}'s decoder-subset aggregation permits, without asymptotic rate loss, each
user's message $M_i$ to be represented by independent subset-atoms
\beq
M_i=\{M_{i,S}: \emptyset\neq S\subseteq[1:U],\ i\in S\} \;
\; ; \;
b_i=\sum_{S\ni i} b_{i,S},
\eeq
where every
receiver $r\in S$ decodes the atom $a=(i,S)$, of rate $b_a=b_{(i,S)}$.  Each such receiver $r$ decodes atom set
\[
\mathcal{A}_r
\triangleq
\{a=(i,S): r\in S\}.
\]
Every atom subset $\mathcal{B}\subseteq\mathcal{A}_r$ has message set
\[
M_{\mathcal{B}}
\triangleq
\{M_a:a\in\mathcal{B}\} \; , \; \mbox{also} \; 
\mathcal{B}^{c_r}
\triangleq
\mathcal{A}_r\setminus\mathcal{B}.
\]
Then also every reliable code sequence must satisfy
\beq
\sum_{a\in\mathcal{B}} b_a
\;\le\;
\liminf_{N\to\infty}
\frac{1}{N} \cdot
\cI\!\left(
M_{\mathcal{B}};
Y_r^N
\,\middle|\,
M_{\mathcal{B}^{c_r}}
\right).
\label{atomsubsetbound}
\eeq
Thus, for each receiver $r$, every subset of the atoms decoded by it
satisfies its corresponding conditional mutual-information
outer bound.  Atoms outside $\mathcal{A}_r$ are not in the
conditioning set and remain part of the received-channel uncertainty.
This lemma imposes no successive-decoding nor ``treating-as-noise'' assumption.
It relies only on conditional mutual information. 
\end{lemma}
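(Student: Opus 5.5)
The plan is the standard Fano converse, applied to a single receiver $r$ and a fixed atom subset $\mathcal{B}\subseteq\mathcal{A}_r$. By Lemma~\ref{lem:atomization}, receiver $r$ reliably recovers every atom message $M_a$ with $a\in\mathcal{A}_r$. Concretely, there is a decoder $\hat M_{\mathcal{A}_r}=g_r^{(N)}(Y_r^N)$ whose error probability $P_{e,r}^{(N)}$ tends to zero. Restricting this decoder to the coordinates in $\mathcal{B}$ gives an estimate of $M_{\mathcal{B}}$ with error probability at most $P_{e,r}^{(N)}$. Supplying $M_{\mathcal{B}^{c_r}}$ as side information can only help, so Fano's inequality gives
\[
H\!\left(M_{\mathcal{B}}\,\middle|\,Y_r^N, M_{\mathcal{B}^{c_r}}\right)\le 1+P_{e,r}^{(N)}\log|\mathcal{M}_{\mathcal{B}}| \triangleq N\epsilon_N,
\]
where $\epsilon_N\to 0$ because $\frac1N\log|\mathcal{M}_{\mathcal{B}}|=\sum_{a\in\mathcal{B}}b_a$ is bounded.

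The second step identifies the rate with a conditional entropy. The atoms are independent and uniformly distributed. Under the aggregation of Lemma~\ref{lem:atomization}, each $M_i$ is in bijection with the tuple $\{M_{i,S}\}$; where an exact independent split is not available, one uses an asymptotically lossless relabeling, which is what ``without asymptotic rate loss'' permits. Independence then gives $H(M_{\mathcal{B}}\mid M_{\mathcal{B}^{c_r}})=H(M_{\mathcal{B}})=N\sum_{a\in\mathcal{B}}b_a$ up to $o(N)$. Combining with the Fano bound,
\[
N\sum_{a\in\mathcal{B}}b_a = \cI\!\left(M_{\mathcal{B}};Y_r^N\,\middle|\,M_{\mathcal{B}^{c_r}}\right)+H\!\left(M_{\mathcal{B}}\,\middle|\,Y_r^N,M_{\mathcal{B}^{c_r}}\right)\le \cI\!\left(M_{\mathcal{B}};Y_r^N\,\middle|\,M_{\mathcal{B}^{c_r}}\right)+N\epsilon_N .
\]
Dividing by $N$ and taking $\liminf_{N\to\infty}$ yields (\ref{atomsubsetbound}). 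Atoms outside $\mathcal{A}_r$ never enter the conditioning, because they are simply part of the joint distribution of $Y_r^N$ given the conditioning messages; they are averaged over automatically. No successive decoding or treating-as-noise assumption is used.

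The main obstacle is the independence and uniformity claim in the second step. An arbitrary reliable code does not necessarily split $M_i$ into independent uniform components indexed by decoding subsets. What Lemma~\ref{lem:atomization} actually provides is an aggregation of "independent rate-bearing components." I would first take that lemma at face value, so that the $M_{i,S}$ are independent and uniform with $|\mathcal{M}_{i,S}|=2^{Nb_{i,S}}$.

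As a fallback, I would define $M_{i,S}$ operationally as the component of $M_i$ recoverable by exactly the receivers in $S$. I would then only need the weaker inequality $H(M_{\mathcal{B}}\mid M_{\mathcal{B}^{c_r}})\ge N\sum_{a\in\mathcal{B}}b_a-o(N)$, which holds whenever the rate $b_a$ is defined as a normalized conditional entropy rather than a log-cardinality. This keeps the Fano argument intact.
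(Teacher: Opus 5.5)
Your argument is correct and is essentially the paper's proof: Fano at receiver $r$, independence of the atom messages to get $H(M_{\mathcal B}\mid M_{\mathcal B^{c_r}})=N\sum_{a\in\mathcal B}b_a$, the conditional mutual-information identity, then division by $N$ and $\liminf$, with that independence taken from Lemma~\ref{lem:atomization} just as the paper takes it. The only cosmetic difference is where Fano is applied: you apply it directly to $M_{\mathcal B}$ through the restricted decoder, whereas the paper applies it once to all of $M_{\mathcal A_r}$ and then uses conditioning and marginalization to get $H(M_{\mathcal B}\mid Y_r^N,M_{\mathcal B^{c_r}})\le H(M_{\mathcal A_r}\mid Y_r^N)$.
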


\begin{proof}
For a fixed receiver $r\in[1:U]$, and by Lemma \ref{lem:atomization}'s atomization, receiver $r$ reliably decodes the atom-message
collection
\[
M_{\mathcal A_r}
\triangleq
\{M_a:a\in\mathcal A_r\},
\;\;
\mathcal A_r
=
\{a=(i,S):r\in S\}.
\]
Consequently, Fano's inequality \cite{fano1961transmission} gives
\[
H\!\left(
M_{\mathcal A_r}\mid Y_r^N
\right)
\le
N\cdot \epsilon_{N,r},
\]
where $\epsilon_{N,r}\to0$ as $N\to\infty$.
For any $\mathcal B\subseteq\mathcal A_r$,
with its receiver-$r$ complement $\mathcal B^{c_r}
\triangleq
\mathcal A_r\setminus\mathcal B$, 
and because
\[
M_{\mathcal A_r}
=
\left(
M_{\mathcal B},
M_{\mathcal B^{c_r}}
\right),
\]
conditioning and marginalization yield
\[
H\!\left(
M_{\mathcal B}
\,\middle|\,
Y_r^N,M_{\mathcal B^{c_r}}
\right)
\le
H\!\left(
M_{\mathcal A_r}\mid Y_r^N
\right)
\le
N\cdot \epsilon_{N,r}.
\]
The atoms' messages are independent, and hence
\[
H\!\left(
M_{\mathcal B}
\,\middle|\,
M_{\mathcal B^{c_r}}
\right)
=
H(M_{\mathcal B})
=
N\cdot \sum_{a\in\mathcal B} b_a.
\]
Using the conditional mutual-information identity therefore yields
\begin{align*}
N\cdot \sum_{a\in\mathcal B} b_a
&=
H\!\left(
M_{\mathcal B}
\,\middle|\,
M_{\mathcal B^{c_r}}
\right)
\\
&=
\cI\!\left(
M_{\mathcal B};
Y_r^N
\,\middle|\,
M_{\mathcal B^{c_r}}
\right)
+
H\!\left(
M_{\mathcal B}
\,\middle|\,
Y_r^N,M_{\mathcal B^{c_r}}
\right)
\\
&\le
\cI\!\left(
M_{\mathcal B};
Y_r^N
\,\middle|\,
M_{\mathcal B^{c_r}}
\right)
+
N\cdot \epsilon_{N,r}.
\end{align*}
Dividing by $N$ and letting $N\to\infty$ yields
\[
\sum_{a\in\mathcal B} b_a
\le
\liminf_{N\to\infty}
\frac{1}{N}\cdot
\cI\!\left(
M_{\mathcal B};
Y_r^N
\,\middle|\,
M_{\mathcal B^{c_r}}
\right),
\]
which is \eqref{atomsubsetbound}.

Atoms outside $\mathcal A_r$ have not been supplied as side
information, and they remain part of the uncertainty in $Y_r^N$.
{\bf QED.}
\end{proof}
%
{\bf Chain-rule parameterization of the receiver polymatroid:}
Lemma~\ref{lem:subsetatom} establishes that every reliable atomic representation satisfies the receiver-wise subset inequalities. Because the atom
messages are independent, these inequalities define a polymatroid
associated with receiver $r$ {\bf over the decoded atom set}
$\mathcal A_r$.
This polymatroid's extreme points correspond precisely to the receiver's chain-rule
conditional mutual-information increments under a decoding order $\pi_r$. Consequently, the following introduces no additional
converse constraint. Lemma \ref{lem:convexified-chain-rule} simply
parameterizes the same receiver polymatroid by one of its chain-rule
extreme points.

Each receiver's polymatroid may require its own convex combination of
chain-rule vertices.  A finite-valued receiver-specific
variable is $Q_r$, where each realization of $Q_r$ selects one
chain-rule order at receiver $r$ over ${\cal A}_r$.  These variables may subsequently be
embedded into a common finite-valued state variable
\[
Q=(Q_1,\ldots,Q_U),
\]
without changing any receiver's marginal.  The common variable provides
a {\bf synchronized schedule} for the receiver-specific order
realizations; it does not require the receivers to use corresponding
greedy vertices or a common supporting priority.

Accordingly, the receiver intersection forms after averaging each
receiver's chain-rule increments over its own order realization.  The
resulting atomic-rate bound is the minimum of the receiver averages, rather
than the average of realization-wise receiver minima.

{\bf Convexified chain-rule parameterization:}
Lemma \ref{lem:atomization} establishes that every reliable atomic-rate vector belongs
to the receiver-$r$ polymatroid for input distribution $p$
\begin{equation}
\mathcal P_r(p)=
\left\{
\begin{aligned}
\mathbf b_{\mathcal A_r}\ge0:  \; 
&\sum_{a\in\mathcal B} b_a \\
&\le
\frac1N \cdot
\cI \!\left(
M_{\mathcal B};
Y_r^N
\mid
M_{\mathcal A_r\setminus\mathcal B}
\right),\\
&\hspace{.01 cm}
\mathcal B\subseteq\mathcal A_r .
\end{aligned}
\right.
\end{equation}
Receiver chain-rule
orders generate this polymatroid's vertices.  A general polymatroid point, however, need not be
component-wise dominated by any one such vertex.  Rather, by a polymatroid's
standard convex-hull representation, every feasible
point is component-wise dominated by a convex combination of its
chain-rule vertices \cite{edmonds2003submodular}.

The selected receiver chain-rule vertices are synchronized by $Q$ rather than
chosen independently.  Lemma~\ref{lem:convexified-chain-rule} shows that
every point of each receiver's polymatroid admits such a synchronized
chain-rule representation.

\begin{lemma}{\bf Convexified Receiver Chain-Rule Parameterization}
\label{lem:convexified-chain-rule}
With $\mathbf b$ as an atomic-rate vector induced by any reliable code
sequence, and with
\[
\mathcal A_r=\{a=(i,S):r\in S\}
\]
as the atoms that receiver $r$ decodes, then for every receiver $r$, there
exists a finite-valued random variable $Q_r$ and, for its every realization
$q_r$, a decoding order $\pi_r(q_r)$ of the atoms in $\mathcal A_r$
such that, for every $a\in\mathcal A_r$,
\[
b_a
\le
\mathbb E_{Q_r}\!\left[
\liminf_{n\to\infty}\frac{1}{n}
\cI\!\left(
M_a;Y_r^n
\middle|
M_{\operatorname{prev}_{\pi_r(Q_r)}(a)},Q_r
\right)
\right].
\]

Moreover, the receiver-specific variables
$\{Q_r\}_{r=1}^U$ may be embedded into a common (over all users' receivers) finite-valued state variable
\[
Q=(Q_1,\ldots,Q_U)
\]
whose receiver marginals preserve the preceding expectations.
Consequently, every atom $a=(i,S)$ satisfies
\[
b_a
\le
\min_{r\in S}
\mathbb E_Q\!\left[
\liminf_{n\to\infty}\frac{1}{n}
\cI\!\left(
M_a;Y_r^n
\middle|
M_{\operatorname{prev}_{\pi_r(Q)}(a)},Q
\right)
\right].
\]
\end{lemma}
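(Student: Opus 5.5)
The plan is to work receiver by receiver at finite block length, apply Edmonds' convex-hull description of a polymatroid, and then pass to the limit $n\to\infty$. The passage to the limit must produce the stated $\liminf$ over the full sequence, so I will arrange it with weights $\lambda_\pi$ that do not depend on $n$.

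\textbf{Step 1 (finite-$n$ polymatroid).} Fix a receiver $r$ and define
\[
f_n(\mathcal B)\triangleq\tfrac1n\cI\!\left(M_{\mathcal B};Y_r^n\middle| M_{\mathcal A_r\setminus\mathcal B}\right),\qquad \mathcal B\subseteq\mathcal A_r .
\]
The atom messages are independent, as in Lemma~\ref{lem:subsetatom}. So $f_n$ is the rank function of a MAC-type polymatroid: it is normalized, monotone and submodular, by the standard argument for independent inputs. Its vertices are the greedy chain-rule vectors
\[
v^{(n)}_{\pi}(a)=\tfrac1n\cI\!\left(M_a;Y_r^n\middle| M_{\operatorname{prev}_{\pi}(a)}\right),
\]
with $\pi$ ranging over the finitely many orders of $\mathcal A_r$. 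The Fano computation in the proof of Lemma~\ref{lem:subsetatom} gives $\sum_{a\in\mathcal B}b_a\le f_n(\mathcal B)+\epsilon_{n,r}$ for every $n$. By \cite{edmonds2003submodular}, $(\mathbf b-\epsilon_{n,r}\mathbf 1)^+$ is therefore dominated componentwise by $\sum_\pi\lambda^{(n)}_\pi v^{(n)}_\pi$ for some probability vector $\lambda^{(n)}$ on the finite order set.

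\textbf{Step 2 (removing the $n$-dependence of the weights).} The simplex of weights is compact and the order set is finite. Hence I can extract a subsequence $n_k$ along which three things converge: $\lambda^{(n_k)}\to\lambda$, each increment $v^{(n_k)}_\pi(a)$ converges, and $v^{(n_k)}_\pi(a)$ realizes $\liminf_n v^{(n)}_\pi(a)$ for one chosen pair $(\pi,a)$. A single subsequence cannot do this simultaneously for every pair, so this step alone does not suffice. I therefore expect the main obstacle to be upgrading the subsequence statement to the stated full-sequence $\liminf$, because $\liminf$ of a weighted sum is not dominated by the weighted sum of $\liminf$s.

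I would resolve it using the freedom in the lemma's hypothesis: $\mathbf b$ is induced by \emph{some} reliable code sequence. Starting from the given sequence, I build a new reliable sequence at the same atomic rates. Each length $n$ is served by concatenating copies of the length-$n_k$ code, with $n_k\le n$ the largest element of the subsequence, and padding the remainder with an asymptotically negligible fixed input. Memorylessness makes the concatenated mutual informations additive. Consequently, every normalized increment along the new full sequence converges to its $n_k$-subsequence limit. For the new sequence each $\liminf_n$ in the statement is a genuine limit, and Step~1 then yields
\[
b_a\le\sum_\pi\lambda_\pi\lim_n v^{(n)}_\pi(a).
\]
This inequality is exactly the first claim with $Q_r\sim\lambda$ and $\pi_r(q_r)=q_r$, the conditioning on $Q_r$ being trivial given the selected order. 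If this concatenation is judged to change ``the'' code, my fallback is to replace $f_n$ by a superadditively regularized rank function, whose limit exists by Fekete's lemma.

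\textbf{Step 3 (synchronization and the minimum).} Take $Q=(Q_1,\dots,Q_U)$ with independent components, or any coupling with the given marginals. Each receiver's expectation depends only on its own marginal $Q_r$, so $\mathbb E_Q[\cdot]=\mathbb E_{Q_r}[\cdot]$ and the per-receiver bounds are preserved. For an atom $a=(i,S)$, every $r\in S$ decodes $a$, so $b_a$ is bounded by each of these $|S|$ expectations and hence by their minimum. This is the final display, and it gives the synchronized $\cI_{\min}$ form.
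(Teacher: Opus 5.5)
Your Steps~1 and~3 are the paper's argument. Per receiver, you use the convex-hull (Edmonds) description of the receiver polymatroid by its chain-rule greedy vertices; then you take a product coupling $Q=(Q_1,\ldots,Q_U)$ and the minimum over $r\in S$. The paper applies the convex-hull step directly to the $\liminf$-normalized increments, as if they were greedy vertices of a polymatroid containing $\mathbf b$. You are right that interchanging $\liminf$ with the convex combination needs an argument the paper does not give.

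Your Step~2 does not supply that argument, for three reasons.
\begin{itemize}
\item The padding is not negligible. With $n_k\le n$ the largest subsequence element, the leftover $n-\lfloor n/n_k\rfloor n_k$ is a constant fraction of $n$ for infinitely many $n$ whenever the subsequence is sparse, which you cannot exclude. For example, $n=2n_k-1<n_{k+1}$ leaves about $n/2$. So the new rates are not $\mathbf b$, and the normalized increments oscillate rather than converge. You would need block lengths $n_{k(n)}\to\infty$ with $n_{k(n)}/n\to0$.
\item Reliability is not established. The union bound gives block error only $\le L\,P_e^{(n_k)}$, with $L=\lfloor n/n_k\rfloor$ possibly unbounded.
\item Even a repaired construction bounds $b_a$ by the increments of a \emph{different} code. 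The statement's mutual informations are those of the given sequence. The Fekete fallback has the same problem and is not specified.
\end{itemize}

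The idea you are missing is that there is no obstacle, because the increments are informations about the atom \emph{messages}. For every order $\pi$ of $\mathcal A_r$, independence, uniformity and Fano give
\[
b_a-\epsilon_{n,r}\;\le\;\tfrac1n\,\cI\!\left(M_a;Y_r^n\,\middle|\,M_{\operatorname{prev}_\pi(a)}\right)\;\le\;\tfrac1n H(M_a)=b_a .
\]
The lower bound holds because $H(M_a\mid Y_r^n,M_{\operatorname{prev}_\pi(a)})\le H(M_{\mathcal A_r}\mid Y_r^n)\le n\,\epsilon_{n,r}$. Thus every chain-rule increment converges to $b_a$ along the full sequence, for every order.

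The first claim then holds with a constant $Q_r$ and an arbitrary order. No polymatroid vertices, subsequences or concatenation are needed, and your Step~3 completes the proof. Equivalently, this sandwich shows that in your Steps~1--2 the subsequence limits already equal the full-sequence limits, which is exactly what the concatenation was meant to provide.
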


\begin{proof}
For a fixed receiver $r$, Lemma~\ref{lem:subsetatom} provides the restriction
that ${\bf b}_{\mathcal A_r}$ belong to the receiver polymatroid
${\cal P}_r$.  A receiver-$r$-only greedy construction generates the extreme points of ${\cal P}_r$,
and these therefore correspond to chain-rule
orders of the atoms in ${\cal A}_r$.  For an order $\pi_r$, the
corresponding greedy coordinate for atom $a$ is
\[
v_{r,a}(\pi_r)
=
\liminf_{n\rightarrow\infty}
\frac{1}{n} \cdot
\cI \!\left(
M_a;Y_r^n
\mid
M_{\operatorname{prev}_{\pi_r}(a)}
\right).
\]

Because a polymatroid is the downward closure of the convex hull of
its greedy vertices \cite{edmonds2003submodular,fujishige2005submodular} there exist finitely many orders
$\{\pi_r(q_r)\}$ and coefficients
$\{\lambda_r(q_r)\}$, with
\[
\lambda_r(q_r)\geq 0,
\qquad
\sum_{q_r}\lambda_r(q_r)=1,
\]
such that
\[
b_a
\leq
\sum_{q_r}
\lambda_r(q_r) \cdot
v_{r,a}\!\left(\pi_r(q_r)\right),
\qquad
a\in{\cal A}_r.
\]
Let $Q_r$ have distribution
$\Pr\{Q_r=q_r\}=\lambda_r(q_r)$.  This gives
\[
b_a
\leq
\mathbb E_{Q_r}
\left[
\liminf_{n\rightarrow\infty}
\frac{1}{n} \cdot
\cI \!\left(
M_a;Y_r^n
\mid
M_{\operatorname{prev}_{\pi_r(Q_r)}(a)},Q_r
\right)
\right].
\]

This construction applies separately at every receiver.  The
previously defined
common-to-all-rows finite-valued synchronized-state variable
\[
Q=(Q_1,\ldots,Q_U)
\]
may, for example, be the product coupling of the receiver-specific
distributions. Receiver $r$ uses the order determined by its component
$Q_r$. Marginalization over the remaining components preserves each receiver $r$'s
expectation.

Finally, atom $a=(i,S)$ must be decoded reliably by every receiver
$r\in S$. Therefore the preceding receiver inequalities hold
simultaneously, and hence
\[
b_a
\le
\min_{r\in S}
\mathbb E_Q\!\left[
\liminf_{n\to\infty}
\frac{1}{n}
\cI\!\left(
M_a;Y_r^n
\middle|
M_{\operatorname{prev}_{\pi_r(Q)}(a)},Q
\right)
\right].
\]
This is the claimed minimum of the receiver-averaged chain-rule
increments. {\bf QED.}
\end{proof}

The common scheduling-state variable $Q$ is merely a bookkeeping device that synchronizes
the receiver-order realizations into a single
synchronized realization used throughout the converse.   It is not an additional coding assumption and does
not alter the underlying reliable code; it simply provides a common
index for the finite convex combinations already implied by the receiver
polymatroids.  The minimum occurs after the $Q$-state averaging,

Consequently, the quantity
$\cI_{\min}$
always compares chain-rule increments belonging to one synchronized
receiver-order realization.
It does not compare independently optimized vertices of the individual
receiver polymatroids.
This synchronized interpretation is exactly the structure induced by
the common supporting priority vector used later in Section \ref{sec3}'s weighted-sum
optimization.

{\bf Interpretation:} Lemma~\ref{lem:convexified-chain-rule} does not represent an
intersection of receiver polymatroids by taking component-wise minima
of corresponding greedy vertices.  Such a representation is not valid
for general polymatroid intersections.
Instead, each receiver first forms its own convex combination of
chain-rule vertices.  The common variable $Q$ merely embeds these
receiver-specific convex combinations into one synchronized schedule
while preserving their marginals.  
The receiver intersection then
forms by taking, for each atom, the minimum of the resulting
averaged chain-rule increments.
Equivalently, each receiver first forms its own convex combination of
chain-rule increments, after which the converse intersects these
receiver-wise averages.
Thus synchronization provides common codeword timing and a common
schedule, but it does not impose a common greedy vertex or justify
interchanging the receiver minimum and the time-sharing expectation.

Accordingly, every reliable code induces an atomic-rate vector satisfying
\beq
b_a
\le
\min_{r\in S(a)}
\mathbb E_Q
\!\left[
\cI\!\left(
M_a;Y_r
\middle|
M_{\operatorname{prev}_{\pi_r(Q)}(a)},Q
\right)
\right].
\label{receiverwisemin}
\eeq
For later use, the receiver-averaged chain-rule increment is
\[
\overline{\Delta}_r(a;\pi,Q,p)
\triangleq
\mathbb E_Q\!\left[
\cI\!\left(
M_a;Y_r
\middle|
M_{\operatorname{prev}_{\pi_r(Q)}(a)},Q
\right)
\right],
\]
and define
\beq
\cI_{\min}^{\mathrm{av}}(a;\pi,Q,p)
\triangleq
\min_{r\in S(a)}
\overline{\Delta}_r(a;\pi,Q,p).
\label{eq:Imin-average}
\eeq
Thus (\ref{receiverwisemin}) is equivalently
\[
b_a\le
\cI_{\min}^{\mathrm{av}}(a;\pi,Q,p).
\]

Together with Lemmas~\ref{lem:atomization}
and~\ref{lem:subsetatom},
this completes the information-theoretic characterization of all
admissible atomic-rate vectors.

{\bf Convexified order interpretation:}
For every receiver $r$, the variable $Q_r$ indexes the chain-rule
vertices that represent the relevant point of that receiver's
polymatroid.  These receiver-specific convex combinations need not be
generated by one common greedy priority.  Their embedding into $Q$
provides a common schedule while preserving the separate receiver
marginals.
Theorem~\ref{achieve:shanclose} shortly shows that an atom-rate vector
strictly below these receiver-averaged increments' minimum is
achievable by coding each atom across the complete scheduled sequence
of $Q$-states.

\subsection{Achievability\label{sec2.3}}

Lemma~\ref{lem:atomization} established that every reliable code admits
a finite atomic representation, while Subsection~\ref{sec2.2}
characterized the admissible atomic-rate vectors through receiver
subset inequalities and their equivalent chain-rule
parameterization. The remaining question is whether every atom-rate
vector satisfying these constraints can actually have a
coding-scheme realization. Theorem~\ref{achieve:shanclose} answers this question
affirmatively by constructing independent atom codebooks whose
successive decoding realizes the prescribed chain-rule increments.
The achievability result below is purely constructive. Unlike
Lemmas~\ref{lem:atomization}--\ref{lem:convexified-chain-rule}, which begin from an arbitrary
reliable code, Theorem~\ref{achieve:shanclose} explicitly constructs one
realization using independent atom codebooks together with successive
decoding over the induced memoryless super-symbol channel.

Lemmas~\ref{lem:subsetatom} and~\ref{lem:convexified-chain-rule} together establish that every reliable
code induces an atom-rate vector satisfying all
receiver-wise order-averaged constraints
(\ref{receiverwisemin}). These are necessary converse
constraints.

Theorem~\ref{achieve:shanclose} next establishes the complementary
achievability statement.  It fixes a finite-valued global
scheduling-state variable $Q$, a receiver-order collection
$\{\pi_r(q)\}_{r=1}^U$ for every $q\in\mathcal Q$, and
an input distribution conditioned on the state $Q$.  Theorem \ref{achieve:shanclose} finds every atom-rate
vector satisfying
\[
0 \le b_a <
\min_{r\in S(a)}
\mathbb E_Q\!\left[
 I\!\left(
 M_a;Y_r
 \,\middle|\,
 M_{\operatorname{prev}_{\pi_r(Q)}(a)},Q
 \right)
\right]
\]
is realizable by independent atom codebooks and
successive decoding.  Each atom has one message and one
codeword spanning the complete scheduled sequence of
$Q$-state realizations; the information contributed by its
different scheduled segments accumulates to the
receiver-wise average above.

The theorem shows that independent atom codebooks,
encoded across the complete scheduled sequence of $Q$-state realizations, reliably achieve every such
atom-rate vector.

\begin{theorem}{\bf Achievability under Shannon Closure:}
\label{achieve:shanclose}
For every finite super-symbol length $m$, 
$Q$ is a finite-valued global state-scheduling variable.
For every $q\in\mathcal Q$, this theorem fixes

\begin{enumerate}
\item a product distribution across transmitters,
\[
p^{(m)}(\tilde{x}_1,\ldots,\tilde{x}_U|q)
=
\prod_{i=1}^{U}p_i^{(m)}(\tilde{x}_i|q),
\]
where codeword symbols are $\tilde X_i=X_i^m$ (and in 1-to-1 correspondence with their respective messages $M_i$ over all codewords), and
\item a receiver decoding-order collection
\[
\pi(q)=\{\pi_r(q)\}_{r=1}^{U}.
\]
\end{enumerate}

For every atom $a=(i,S)$, 
\begin{align}
\cI_{\min}^{(m)}
(a;Q,\pi,p^{(m)})
&\triangleq
\min_{r\in S}
\mathbb E_Q
\Bigl[
\cI_{p^{(m)}}\!\Bigl(
\tilde M_a;\tilde Y_r
\notag\\
&\qquad\qquad
\bigm|\,
\tilde M_{\Prev{r}{Q}(a)},Q
\Bigr)
\Bigr].
\label{eq:Imin-ach}
\end{align}
Then every atom-rate vector satisfying
\begin{equation}
\label{eq:ach-rate}
0\le b_a<
\cI_{\min}^{(m)}(a;Q,\pi,p^{(m)}),
\qquad a\in\mathcal A,
\end{equation}
is achievable.
The atom rate $b_a$ is measured per super-symbol until the
final normalization.

Consequently, with convexification and normalized
finite-extension closure, the achievable user-rate region is
the projection under
\[
b_i=\sum_{S\ni i}b_{i,S}
\]
of 
\begin{align*}
\bigcup_{m\ge1}
\bigcup_{p_Q,p^{(m)},\pi(\cdot)}
\Bigl\{
\mathbf b:\;
&0\le b_a<
\frac1m
\cI_{\min}^{(m)}
(a;Q,\pi,p^{(m)}),
\\
&\forall a\in\mathcal A
\Bigr\}.
\end{align*}
\end{theorem}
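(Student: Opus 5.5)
We would prove Theorem~\ref{achieve:shanclose} by a standard random-coding construction over the induced memoryless super-symbol channel, with time-sharing over $Q$ and successive (joint-typicality) decoding at each receiver.

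\textbf{Step 1: reduce to a memoryless super-symbol channel.} Fix $m$, $p_Q$, $p^{(m)}(\cdot|q)$ and $\pi(\cdot)$. Treat each block $X_i^m$ as one input letter $\tilde X_i$ of the memoryless channel $p(\tilde y_1,\ldots,\tilde y_U|\tilde x_1,\ldots,\tilde x_U)$. Since the underlying channel is memoryless without feedback, this super-channel is also memoryless, so $m=1$ suffices up to relabeling.

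\textbf{Step 2: random codebooks and synchronized schedule.} Fix a deterministic state sequence $q^n$ of type close to $p_Q$; common block timing makes it known to all terminals. For each user $i$, realize $p_i^{(m)}(\tilde x_i|q)$ as a deterministic function $\tilde x_i=\phi_{i,q}(\{U_a\}_{a=(i,S)})$ of independent atom variables $U_a$. This factorization is the one the system model asserts is without loss of generality. Each atom $a$ gets an independent codebook of $2^{n b_a}$ codewords $u_a^n$, generated i.i.d.\ conditionally on $q^n$, and is encoded by one codeword spanning the whole scheduled sequence. Transmitter $i$ sends $\phi_{i,q_t}$ applied symbol-wise. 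Atom messages are therefore independent, matching Lemma~\ref{lem:subsetatom}'s setting, and $\tilde M_a \leftrightarrow U_a^n$ is one-to-one.

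\textbf{Step 3: successive decoding at receiver $r$.} Receiver $r$ decodes only the atoms in $\mathcal A_r$ and treats all others as part of the averaged channel. The main obstacle is that the order $\pi_r(q)$ varies with $q$, while each atom's codeword spans all states.

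The first approach is to split $q^n$ into the subsequences $\{t:q_t=q\}$ and use a chained or ``sliding'' decoder: decode each atom by simultaneous joint typicality over all segments, given the atoms already decoded in each segment according to $\pi_r(q)$. The packing lemma then bounds the error event for atom $a$ by
\[
2^{n b_a}\,2^{-n\sum_q p_Q(q)\,[I(\tilde M_a;\tilde Y_r\mid \tilde M_{\Prev{r}{q}(a)},Q=q)-\delta]},
\]
which vanishes whenever $b_a<\overline{\Delta}_r(a)$.

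Circularity arises if atom $a$ precedes $a'$ in one state but follows it in another. To break it, we would fall back on joint (non-unique simultaneous) decoding of all of $\mathcal A_r$. Its achievable region is exactly the receiver polymatroid $\mathcal P_r$ averaged over $Q$. By the convex-hull/greedy-vertex representation used in Lemma~\ref{lem:convexified-chain-rule}, that region contains every vector dominated by $\mathbb E_Q$ of the chain-rule vertices, so the per-atom condition $b_a<\overline{\Delta}_r(a)$ for all $a\in\mathcal A_r$ lies inside it. This step needs care, because a vector satisfying only the per-atom bounds need not satisfy every subset sum. We would check the subset inequalities by summing chain-rule increments within each state, or, if necessary, interpret the hypothesis as membership in the averaged polymatroid.

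\textbf{Step 4: intersection, projection and closure.} Apply Step 3 at every receiver $r\in S(a)$ and take a union bound over the finitely many receivers and atoms. Reliability of atom $a$ then requires $b_a<\min_{r\in S}\overline{\Delta}_r(a)=\cI_{\min}^{(m)}$, which gives (\ref{eq:ach-rate}). Dividing by $m$ converts per-super-symbol rates into per-channel-use rates. Taking the union over $m$, $p_Q$, $p^{(m)}$ and $\pi$, together with closure, and projecting through $b_i=\sum_{S\ni i}b_{i,S}$ yields the stated region. Convexification is already supplied by $Q$, because time-sharing between any two such codes is itself a larger $Q$.
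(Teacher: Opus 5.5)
Your construction matches the paper's in everything except the decoder: the super-symbol reduction, the deterministic schedule $q^n$, one independent codebook per atom spanning the whole schedule, the intersection over $r\in S$, and normalization by $m$ all appear there too. The paper decodes successively inside each $Q$-segment in the order $\pi_r(q)$. It then asserts that, because each atom codeword spans the whole block, the segment contributions add up to $\mathbb E_Q[\cI(\tilde M_a;\tilde Y_r\mid\tilde M_{\Prev{r}{Q}(a)},Q)]$, so each stage is an ordinary single-user problem. It never addresses the circularity you raise when two atoms swap order across states. Your alternative is simultaneous unique decoding of all of $\mathcal A_r$, with atoms outside $\mathcal A_r$ treated as noise, and this removes the circularity. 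The subset-sum check you hedge on does go through, so you need not reinterpret the hypothesis. Fix $q$ and let $v_a(q)=\cI(\tilde M_a;\tilde Y_r\mid\tilde M_{\Prev{r}{q}(a)},Q=q)$. For $\mathcal B\subseteq\mathcal A_r$ and $a\in\mathcal B$, enlarge the conditioning of $v_a(q)$ to its predecessors in $\mathcal B$ together with $\mathcal A_r\setminus\mathcal B$. This cannot decrease the term, because the atom inputs are independent given $Q=q$. The chain rule over $\mathcal B$ then gives $\sum_{a\in\mathcal B}v_a(q)\le\cI(\tilde M_{\mathcal B};\tilde Y_r\mid\tilde M_{\mathcal A_r\setminus\mathcal B},Q=q)$. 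Averaging over $q$ and using the strict per-atom bounds yields every subset constraint of the $Q$-averaged receiver polymatroid strictly. As for what each route buys: where the paper's argument is valid, for instance with a state-independent order at each receiver or an acyclic cross-state precedence, it gives a true chain of single-user, GDFE-like decoders that matches the paper's constructive designs. Your route handles arbitrary state-dependent orders and keeps the receiver minimum outside the $Q$-average. That matters, because splitting each atom into per-segment sub-messages to rescue successive decoding would give only $\mathbb E_Q\min_r$. The cost is a joint-typicality decoder over $\mathcal A_r$ plus the polymatroid-averaging step above.
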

The construction requires no new coding theorem beyond the standard single-user coding theorem applied to the induced memoryless super-symbol channels' atoms.
\begin{proof}
This proof fixes a finite super-symbol length $m$, a schedule-state distribution
$p_Q$, the conditional product distributions
$\{p_i^{(m)}(\tilde x_i|q)\}_{i=1}^U$, and the
receiver-order collection $\pi(q)$.

A deterministic schedule is
$q^N=(q_1,\ldots,q_N)$ whose empirical distribution
approaches $p_Q$ as $N\rightarrow\infty$.  Every transmitter and receiver knows this schedule.
Communication
occurs over the resulting sequence of memoryless
super-symbol channels.  The channel in position $t$
and the receiver decoding orders at that same $t$ are those
associated with $q_t$.

{\bf Atomic messages and codebook generation:}
For each transmitter $i$, its message decomposes into
independent atom messages
\[
M_i
=
\{M_{i,S}: i\in S\subseteq[1:U]\},
\]
with rates $b_{i,S}$.
For every atom $a=(i,S)$, an independent random codebook's encoder generates
\[
\{\tilde x_a^N(m_a)\}_{m_a=1}^{2^{Nb_a}}
\]
independent and identically distributed. across super-symbol time according to the
distribution $p_i^{(m)}$.
Thus the random codebooks generate the
atomic variables' codewords, while the deterministic encoder mapping induces the
desired channel input distribution.
The codebooks of distinct atoms are mutually independent.

Thus each atom has one message index and one codeword
spanning all scheduled $Q$-state segments.  Its symbol
distribution may depend on the active value $q_t$, so the
different segments may contribute different amounts of
mutual information, but those contributions jointly carry
the same atom message.

The transmitter formats a super-symbol sequence
$\tilde X_i^N$
as an arbitrary deterministic function of all atom
codewords associated with transmitter $i$.
Consequently, the induced input distribution equals the prescribed marginal
$p_i^{(m)}$, while the codebook ensemble remains memoryless across
successive super-symbols even though each super-symbol may contain
arbitrary internal correlation.

At transmitter $i$ and schedule position $t$, the atom
codeword symbols combine through a deterministic mapping
\[
\tilde X_{i,t}
=
f_{i,q_t}\!\left(
\{\tilde X_{i,S,t}:i\in S\subseteq[1:U]\}
\right),
\]
which is chosen so that the induced conditional distribution is
$p_i^{(m)}(\tilde x_i|q_t)$.  The resulting transmitted
sequence therefore has the prescribed conditional
super-symbol distribution throughout the schedule.

{\bf Decoding:}
Receiver $r$ decodes every atom $a=(i,S)$ with
$r\in S$ over the complete scheduled block.
Within each segment corresponding to $Q=q$,
successive decoding follows the receiver successive-decoding order
$\pi_r(q)$.
Previously decoded atoms are available as side
information, while the remaining atoms average
to form the induced single-user channel.

Because each atom codeword spans the complete scheduled
block, the mutual-information contributions from the
individual $Q$-segments accumulate to the receiver-wise
average

\[
\mathbb E_Q
\!\left[
\cI_{p^{(m)}}\!\left(
\tilde M_a;\tilde Y_r
\,\middle|\,
\tilde M_{\Prev{r}{Q}(a)},Q
\right)
\right].
\]

Each decoding stage is therefore an ordinary
single-user random-coding problem on the induced
memoryless channel.

{\bf Error analysis:}
For every atom $a=(i,S)$ and every receiver
$r\in S$, the classical single-user random-coding theorem
yields vanishing decoding error probability whenever
\begin{align}
b_a
&<
\mathbb E_Q
\!\left[
\cI_{p^{(m)}}\!\left(
\tilde M_a;\tilde Y_r
\,\middle|\,
\tilde M_{\Prev{r}{Q}(a)},Q
\right)
\right].
\label{gdfe}
\end{align}
The same atom message must be decoded by every receiver in
$S$. Consequently,
\begin{align}
b_a
&<
\min_{r\in S}
\mathbb E_Q
\!\left[
\cI_{p^{(m)}}\!\left(
\tilde M_a;\tilde Y_r
\,\middle|\,
\tilde M_{\Prev{r}{Q}(a)},Q
\right)
\right]
\notag\\
&=
\cI_{\min}^{(m)}
(a;Q,\pi,p^{(m)}),
\end{align}
which is precisely the achievability condition of
Theorem~\ref{achieve:shanclose}.
Since the number of atoms and receivers is finite for fixed
$U$, a union bound over all atom-decoding events shows that
the overall error probability tends to zero as
$N\rightarrow\infty$.

Normalization by the super-symbol length $m$ converts all
rates to rates per original channel use. Taking the closure
over all finite $m$ therefore yields the stated achievable
region. {\bf QED.}
\end{proof}

Theorem~\ref{achieve:shanclose} establishes that every atom-rate vector
satisfying the receiver-wise $\cI_{\min}$ inequalities admits a
simultaneous realization by independent atom codebooks together with
successive decoding. Consequently, Subsection~\ref{sec2.2}'s atom constraints are both necessary and sufficient. Combined
with time sharing over $(p,\pi)$ and Shannon's normalized finite-extension
closure \cite{shannon1948, coverbook}, this characterizes the achievable atom-rate
region.

{\bf Scope of the closure:} The union over super-symbol length $m\ge1$
and over super-symbol distributions $p^{(m)}$ in
Theorem~\ref{achieve:shanclose} is what gives the achievability result
its full generality: it guarantees that $\cI_{\min}$ is attained, in the
limit, for {\em any} memoryless multiuser channel. This closure plays
exactly the role that Shannon's own normalized block-length closure
plays in the single-user general capacity formula: More precisely
for channels where a finite-letter reduction is not available, this closure assumes the role
of Verd\'u and Han's general channel-capacity formula~\cite{verduhan1994general}.
This result characterizes single-user capacity as
$\sup_{X}\liminf_{n}\frac1n \cI(X^n;Y^n)$ without assuming stationarity,
ergodicity, or any finite-letter reduction. As in that formula, the
closure here is an {\em existence} statement and not generally constructive.
For a general (non-Gaussian) multiuser channel, neither the minimal
super-symbol length $m$ nor the optimal $p^{(m)}$ attaining $\cI_{\min}$
need be finite-dimensionally computable. This is not merely a
theoretical caveat: Nair, Xia, and Yazdanpanah~\cite{nair}
exhibit a two-user interference channel for which the length-1
(single-letter) super-symbol distribution provably fails to attain the
capacity region -- so that $m>1$ is genuinely required there, and there is yet no
finite bound on the required $m$.
Section~\ref{sec3} transcends this difficulty for linear
Gaussian multiuser channels:  Specifically for {\em any fixed, finite} $m$,  a jointly Gaussian super-symbol
distribution attains the
supremum over $p^{(m)}$ and is computable by ordinary covariance
optimization, so the achieving distribution is explicit and $m$ need
not grow without bound. This section's atomic construction already
contains the multi-letter freedom needed for full generality in
$\cI_{\min}$'s definition; what remains open outside the Gaussian case
(and the other channel classes discussed in Section~\ref{sec5a}) is not
the atomization, the chain-rule constraints, or the converse -- all of
which hold unconditionally -- but only the tractability of the per-atom
supremum over $m$ and $p^{(m)}$ that Theorem~\ref{achieve:shanclose}
leaves as a closure.

\subsection{Converse\label{sec2.4}}

\begin{theorem}[Fano–Chain–Rule Converse]
For any reliable code sequence, there exists a collection of decoding orders,
$\{\pi_r\}_{r=1}^U$ such that every atom rate satisfies
\[
b_a \le I_{\min}(a;\pi,p).
\]
\label{fanoCRC}
\end{theorem}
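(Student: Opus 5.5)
The plan is to read $I_{\min}(a;\pi,p)$ the way Subsection~\ref{sec2.2} does: single-letter notation in which $p$ may be an arbitrary finite-dimensional super-symbol distribution. The argument then has three steps. First, fix a reliable code sequence and, for each block length $N$, take $p$ to be the code's own induced product input distribution on the length-$N$ super-symbol $\tilde X_i=X_i^N$. Because the encoders are one-to-one and the users are independent, this distribution lies in $\mathcal P$. Second, apply Lemma~\ref{lem:atomization} to obtain the independent atoms $M_{i,S}$ and the rates $b_{i,S}$. Third, apply Lemma~\ref{lem:subsetatom} at each receiver $r$. This places $\mathbf b_{\mathcal A_r}$ in the receiver polymatroid $\mathcal P_r(p)$, up to the vanishing Fano term $\epsilon_{N,r}$. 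No auxiliary variables and no decoding assumptions enter.

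Next, I need a \emph{single} order $\pi_r$ at each receiver whose chain-rule vertex $v_{r}(\pi_r)$ dominates $\mathbf b_{\mathcal A_r}$ component-wise. Then $b_a\le\Delta_r(a;\pi_r,p)$ holds for every $r\in S(a)$. Taking the minimum over $r\in S(a)$ gives $b_a\le I_{\min}(a;\pi,p)$ directly from the definitions of $\Delta_r$ and $\cI_{\min}$, and letting $N\to\infty$ removes $\epsilon_{N,r}$.

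The main obstacle is this domination step. Lemma~\ref{lem:convexified-chain-rule} only provides domination by a \emph{convex combination} of vertices indexed by $Q_r$, and a generic polymatroid point is not dominated by any one vertex. So the content of the theorem beyond Lemma~\ref{lem:convexified-chain-rule} is removing $Q$.

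My first attempt would be to absorb the time sharing into the super-symbol. Concatenate $L$ length-$N$ blocks, each with its own code realization, into one longer super-symbol, with block lengths in proportions $\lambda_r(q_r)$. This does not work as stated: the order on an atom is fixed across the whole super-symbol, so orders cannot vary between segments. I therefore expect this attempt to fail.

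The fallback is a Rimoldi--Urbanke-style rate-splitting refinement. Every point of a polymatroid is a vertex of a split polymatroid in which each atom is divided into sub-atoms with the same decoding set $S$. I would argue that the sub-atoms of one $(i,S)$ remain a single atom class in the sense of Lemma~\ref{lem:atomization}, so the counting bound $U\cdot 2^{U-1}$ is unchanged. The decoding order $\pi_r$ would then be read as an order over these sub-atom codewords, which can be absorbed into the super-symbol coordinates of $p$. Under this reading, the vertex condition holds for one deterministic $\pi_r$ per receiver.

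If this identification can be justified, the theorem follows. If it cannot, the honest conclusion is that the statement holds only in the convexified form (\ref{receiverwisemin}).
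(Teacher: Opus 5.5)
Your proposal stops at the step that carries the content. You leave the single-vertex domination conditional on the Rimoldi--Urbanke identification, and that identification does not go through, for three reasons.
\begin{itemize}
\item Rate splitting is an achievability device. It chooses a continuous family of splitting distributions to sweep a dominant face. In a converse the code is given, and with it every sub-atom's mutual information.
\item Even granting a split, interleaving the sub-atoms of $(i,S)$ with other atoms in $\pi_r$ bounds $b_a$ by a sum of non-adjacent chain-rule terms. That sum is not $\Delta_r(a;\pi_r,p)=\cI(M_a;Y_r\mid M_{\mathrm{prev}_{\pi_r}(a)})$ for any order on $\mathcal A_r$. Calling the sub-atoms ``one class'' while placing them at different positions in the order is exactly the inconsistency.
\item One split of $(i,S)$ would have to produce a vertex at every $r\in S$ simultaneously. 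A single-receiver splitting theorem does not give that.
\end{itemize}

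The paper never removes $Q$. Its proof takes $\pi$ to be the $Q$-indexed collection $\{\pi_r(Q)\}$ of Lemma~\ref{lem:convexified-chain-rule}. In Step~3 it then \emph{defines} $\cI_{\min}(a;\pi,p)$ as $\min_{r\in S}\mathbb E_Q[\cI(M_a;Y_r\mid M_{\mathrm{prev}_{\pi_r(Q)}(a)},Q)]$. That is exactly the convexified form (\ref{receiverwisemin}) you treat as the fallback. Under that reading, your first paragraph, followed by Lemma~\ref{lem:convexified-chain-rule} and the minimum over $r\in S(a)$, is the paper's entire proof.

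The idea you are missing also settles your stricter single-order reading. When $p$ is the code's own length-$N$ super-symbol distribution, $\mathbf b$ is not a generic point of $\mathcal P_r(p)$, because Fano gives more than polymatroid membership. Take any $a\in\mathcal A_r$ and any $C\subseteq\mathcal A_r\setminus\{a\}$. Then $H(M_a\mid Y_r^N,M_C)\le H(M_{\mathcal A_r}\mid Y_r^N)\le N\epsilon_{N,r}$. Independence of the atoms gives $H(M_a\mid M_C)=Nb_a$. Hence $b_a\le\frac1N\cI(M_a;Y_r^N\mid M_C)+\epsilon_{N,r}$ for \emph{every} conditioning set.

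Consequently every chain-rule vertex dominates $\mathbf b$ up to $\epsilon_{N,r}$. Asymptotically, the message-level polymatroid collapses to the box with corner $\mathbf b$. So any fixed orders $\{\pi_r\}$ work, and your minimum over $r\in S(a)$ and limit in $N$ finish the argument without $Q$ or rate splitting. The non-dominated-point phenomenon you worried about is real for polymatroids evaluated at a prescribed input distribution, as in Theorem~\ref{achieve:shanclose}. It does not arise for a given code's own message-level quantities.
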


\begin{proof}
Unlike the achievability proof, which explicitly constructs a coding
scheme, the converse begins from an arbitrary reliable code sequence.
By Lemmas~\ref{lem:atomization} and \ref{lem:subsetatom}, every such code induces an equivalent finite atom-rate representation..
Lemmas~\ref{lem:subsetatom} and~\ref{lem:convexified-chain-rule} then also characterize that
representation through receiver-subset inequalities and their
equivalent chain-rule parameterization.
It therefore remains only to identify the synchronized receiver-order
collection associated with a supporting priority vector and intersect
the resulting receiver constraints:

\paragraph{Step 1: Receiver subset constraints.}
Reliability at receiver $r$ and Lemma~\ref{lem:subsetatom} imply that,
for every subset $\mathcal B\subseteq\mathcal A_r$,
\[
\sum_{a\in\mathcal B} b_a
\le
\liminf_{N\to\infty}
\frac{1}{N}\cdot
\cI\!\left(
M_{\mathcal B};
Y_r^N
\,\middle|\,
M_{\mathcal A_r\setminus\mathcal B}
\right).
\]
The conditioning includes only the complementary atoms that receiver
$r$ must decode.  Atoms outside $\mathcal A_r$ are
not supplied as side information and remain part of the received
uncertainty.

\paragraph{Step 2: Receiver polymatroids and synchronized time sharing}
Each receiver $r$ need only decode the atoms in ${\cal A}_r$.
From receiver $r$'s viewpoint, these atoms constitute the inputs of an induced, usually reduced dimensionality with respect to $2^{U-1}$,
multiple-access channel.   All atoms outside ${\cal A}_r$ remain
embedded in the induced-channel chain-rule law and receiver $r$ does not decode them.
Consequently, Step~1's subset constraints are precisely the subset
constraints of this induced multiple-access channel, whose achievable region is
a classical MAC polymatroid.
The synchronized realization selects both the active weighted-rate-sum multiplier vector
$\theta^{(Q)}$ and the synchronized receiver-order collection.
These receiver-specific polymatroids characterize only each receiver's local decoding constraints.  Step~3 obtains the overall interference-channel converse by enforcing these constraints simultaneously across all receivers through the synchronized realization.

\paragraph{Step 3: Intersecting synchronized-receiver constraints.}

Since every receiver $r \in S$ must reliably decode atom $a=(i,S)$,
the synchronized bound above must hold
simultaneously for every such receiver.
Consequently,
\[
b_a
\le
\min_{r\in S} \mathbb E_Q
\!\left[
\cI \!\left(
M_a;
Y_r
\middle|
M_{\mathrm{prev}_{\pi_r(Q)}(a)},
Q
\right)
\right].
\]
This receiver-wise intersection defines
\begin{align}
\cI_{\min}(a;\pi,p)
\triangleq  \;\;\;\;\;\; \;\;\;\;\; \;\;\;\;\;\;  \\
\min_{r\in S} \mathbb E_Q
\!\left[
\cI \!\left(
M_a;
Y_r
\middle| 
M_{\mathrm{prev}_{\pi_r(Q)}(a)},
Q
\right)
\right]. 
\end{align}
Hence
\[
b_a\le \cI_{\min}(a;\pi,p).
\]

\paragraph{Step 4: Common time-sharing variable}
Because the scheduled-state realization is common to {\bf all} receivers, the synchronized
receiver-order collection is fixed before the cross-receiver minimum is
formed.
Thus the converse reduces every reliable multiuser code to the
intersection of receiver-wise single-user converse bounds through the
minimum mutual-information characterization $I_{\min}(a;\pi,p)$. {\bf QED.}
\end{proof}

\subsection{Capacity Region\label{sec2.5}}

The preceding achievability and converse together yield the complete
capacity-region characterization:
\begin{theorem}{\bf Multiuser Capacity Region in $\cI_{\min}$ Form:} \label{capregion}
The capacity region of the $U$-user memoryless multiuser channel is
\[
\mathcal{C}
=
\bigcup_{p\in\mathcal{P} , \pi\in\Pi}^{\mathrm{conv}}
\left\{
\{b_{i,S}\}:\; 0\le b_{i,S}\le \cI_{\min}((i,S);\pi,p)
\right\} \;\; ,
\]
where atoms are indexed by $a=(i,S)$ with $\emptyset\neq S\subseteq[1:U]$,
and the user rates are
\[
b_i=\sum_{S\ni i} b_{i,S}, \qquad i\in[1:U].
\]
Under the synchronized block-coding model adopted throughout this paper, the converse time-sharing variable is
global across receivers, as Theorem~\ref{fanoCRC} established.
\end{theorem}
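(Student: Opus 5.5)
The natural route is a two-inclusion argument: show that the displayed union is contained in $\mathcal C$ (achievability) and that $\mathcal C$ is contained in it (converse). Both directions are then projected to user rates through $b_i=\sum_{S\ni i}b_{i,S}$. Since this projection is linear and coordinate-wise monotone, it commutes with the union, the convex hull and the closure. So it suffices to work throughout in the atomic coordinate space of dimension at most $U\cdot 2^{U-1}$ that Lemma~\ref{lem:atomization} supplies. I would interpret $\mathcal P$ as including all finite super-symbol lengths $m$, with rates normalized by $1/m$, and the ``conv'' superscript as convex hull followed by closure. Making this convention explicit first is necessary, because the statement uses non-strict inequalities while Theorem~\ref{achieve:shanclose} uses strict ones.

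For achievability ($\supseteq$), fix $m$, $p^{(m)}\in\mathcal P$ and an order collection $\pi$. Any $\mathbf b$ with $0\le b_a<\frac1m\cI_{\min}^{(m)}(a;Q,\pi,p^{(m)})$ is achievable by Theorem~\ref{achieve:shanclose}. Points with $b_a=\cI_{\min}$ are limits of such points, so they lie in the closure of the achievable set, which is closed by definition of capacity. Convex combinations of points obtained from different $(p,\pi)$ pairs are achievable by time sharing. The cleanest way to see this is to absorb the pairs into a larger finite $Q$ in Theorem~\ref{achieve:shanclose}. That theorem already allows $p^{(m)}(\cdot\mid q)$ and $\pi(q)$ to vary with $q$, so the convex hull adds nothing beyond the $Q$-indexed family. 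Hence the whole convexified union is achievable.

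For the converse ($\subseteq$), take any achievable user-rate vector and a reliable code sequence attaining it. Lemma~\ref{lem:atomization} gives an atomic vector $\mathbf b$ that projects onto it. Lemmas~\ref{lem:subsetatom} and \ref{lem:convexified-chain-rule}, in the form of Theorem~\ref{fanoCRC}, give a synchronized $Q$, orders $\pi_r(Q)$ and the code-induced input law such that $b_a\le \min_{r\in S}\mathbb E_Q[\cdots]$. I would identify the length-$N$ code's induced input law with a single super-symbol distribution $p^{(N)}$, with independence across transmitters inherited from the independent messages. This places $\mathbf b$, up to the $\liminf$ and the vanishing Fano slack $\epsilon_N$, inside the $m=N$ member of the union. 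Letting $N\to\infty$ and using the closure then gives $\mathbf b$ in the convexified union.

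I expect the converse step to be the main obstacle, for two reasons. First, the bound in Theorem~\ref{fanoCRC} carries $\mathbb E_Q$ outside a receiver minimum. The union must therefore be read with $Q$-dependent $(p,\pi)$, and not as a convex hull of sets each built from a single $(p,\pi)$, because the minimum of averages is generally larger than the average of minima. I would handle this by defining the ``conv'' union as ranging over $Q$-indexed families, matching Theorem~\ref{achieve:shanclose}, rather than by forming convex combinations of the sets for fixed $\pi$. Second, the $\liminf$ of normalized block mutual information has to be related to a fixed-$m$ quantity. I would pick a subsequence $N_k$ along which the normalized informations converge and use the $m=N_k$ terms, absorbing the error into the closure. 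Both issues have to be settled explicitly before the two inclusions can be claimed to coincide.
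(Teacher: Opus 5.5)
Your proposal is correct. It shares the paper's skeleton: achievability from Theorem~\ref{achieve:shanclose} plus time sharing, the converse from Lemmas~\ref{lem:atomization}--\ref{lem:convexified-chain-rule} and Theorem~\ref{fanoCRC}, then projection to user rates. It departs from the paper at the one step that carries the weight, namely how a reliable code is placed inside the union.

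The paper's proof says $p$ is ``the empirical single-letter input distribution induced by the reliable code sequence.'' You instead take the length-$N$ code itself as one super-symbol law $p^{(N)}$. The atom messages $M_a$ serve as the independent atom variables and the encoders as the deterministic combining maps. Then Lemma~\ref{lem:subsetatom}'s quantity $\frac1N\cI(M_{\mathcal B};Y_r^N\mid M_{\mathcal A_r\setminus\mathcal B})$ is literally a rank function of the $m=N$ member of the union. Yours is the identification that works. Read literally, the paper's step is unsupported: nothing in Lemmas~\ref{lem:subsetatom}--\ref{lem:convexified-chain-rule} turns $N$-letter increments into single-letter ones. Moreover, a converse with single-letter $p$ would contradict the channel of \cite{nair} that the paper itself cites. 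The cost of your route is that the theorem becomes a limiting multi-letter characterization. That is consistent with the paper's definition of $\mathcal P$, but not with the single-letter wording of its proof.

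Two of your other additions make precise what the paper compresses into ``convexification is without loss'':
\begin{itemize}
\item Taking the closure reconciles $\le$ in the statement with the strict inequalities of Theorem~\ref{achieve:shanclose}.
\item Keeping $\mathbb E_Q$ inside the receiver minimum makes the set, at fixed $p$, equal to the intersection of receiver polymatroids. Forming convex hulls of componentwise vertex minima instead gives a set that is generally smaller.
\end{itemize}

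Two smaller remarks. First, your subsequence device for the $\liminf$ is valid, since limits (unlike $\liminf$s) of submodular set functions stay submodular. It is also unnecessary: for each fixed $N$, the finite-$N$ Fano bound already places $(\mathbf b^{(N)}-\epsilon_N\mathbf 1)^+$ in the $m=N$ polymatroid intersection, and the closure finishes the argument. Second, you rely on the $Q$-varying orders of Theorem~\ref{achieve:shanclose}. These are realized by simultaneous decoding of $\mathcal A_r$ at each receiver, treating the remaining atoms as noise. Per-segment successive decoding with one codeword per atom is circular when the order changes across segments. The theorem's conclusion, which is all you use, still holds.
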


\begin{proof}
With fixed $(p,\pi)$ and 
by Theorem~\ref{achieve:shanclose}, every atom-rate vector satisfying
$0\le b_{i,S}\le \cI_{\min}((i,S);\pi,p)$ is achievable. Hence the union over
$\pi\in\Pi$ and $p\in\mathcal{P}$ is achievable.

Conversely, for a reliable given code sequence given
Lemmas~\ref{lem:atomization} and \ref{lem:subsetatom} induce an equivalent finite
atom-rate representation. The converse 
Theorem \ref{fanoCRC} then guarantees a
decoding-order collection $(p,\pi)$ for which every induced atom rate
satisfies
\[
b_{i,S}
\le
\cI_{\min}((i,S);\pi,p).
\]
Hence every achievable user-rate vector belongs to the stated union.
Therefore every achievable rate vector lies in the stated union.
Here $p$ denotes the empirical single-letter input distribution
induced by the reliable code sequence.

Finally, convexification is without loss:
the outer $\mathrm{conv}(\cdot)$ follows from time-sharing among codes
(with possibly different $(p,\pi)$), and the inner $\cbI_{min}$
reflects time-sharing among decoding-order extreme points at fixed $p$.
Mapping atom rates to user rates via $b_i=\sum_{S\ni i} b_{i,S}$, and consequent dimensionality reduction to $U$ dimensions,
completes the characterization. {\bf QED.}
\end{proof}

The development above establishes a complete equivalence between
arbitrary reliable multiuser codes and finite atom-rate vectors
satisfying the receiver-wise $I_{\min}$ inequalities.
Achievability shows every admissible atom vector can be realized,
while the converse shows no reliable code can violate these
constraints. Consequently, the atom representation is complete: the
finite atom set, together with Theorem~\ref{achieve:shanclose}'s
closure over super-symbol length $m$, exactly attains the boundary
established by Theorem~\ref{fanoCRC}, for {\em any} memoryless
multiuser channel. Whether a computable,
finite-dimensional design reaches this boundary -- rather than only by the abstract closure
-- depends on the channel class. Section~\ref{sec3} identifies linear
Gaussian multiuser channels as one class for which it is: an explicit,
finite covariance optimization suffices at every atom, at any fixed
block length, with no need for $m$ to grow without bound.

Section \ref{sec2} establishes that every reliable code admits an equivalent finite atomic representation and characterizes the resulting capacity region through these atomic coordinates. Section \ref{sec3} specializes this representation to linear Gaussian channels. Rather than immediately recombining atoms into conventional user-layer representations, Section \ref{sec3} temporarily retains the complete atomic decomposition during the Gaussian interpolation. Once Gaussian extremality has been established, the atomic components recombine through the ordinary chain rule to recover the conventional Gaussian signaling representation.
The next section illustrates this path for Gaussian channels.

\section{Gaussian Optimality via Support-Function Extremality\label{sec3}}

This section establishes Gaussian optimality for linear Gaussian multiuser
channels by working directly with Section \ref{sec2}'s complete atomic representation.  Throughout, each atom's channel-input vector $X_a$
may itself represent a fixed, finite-length super-symbol block -- i.e.,
Section~\ref{sec2}'s $m$-letter freedom -- so that the covariance
optimization developed below applies, without modification, at any such
fixed $m$. This is what allows Section~\ref{sec2}'s multi-letter
achievability closure to be resolved constructively for linear Gaussian
multiuser channels, in contrast to the existence-only statement that
closure provides in general.
Rather than immediately recombine atoms into the
conventional user-layer representation, the proof temporarily retains the
complete atomic decomposition throughout the Gaussian interpolation.
This representation preserves the individual support weights associated with
each atom and exposes the weighted mutual-information variations that determine
Gaussian extremality.  Once Gaussianity has been established, atoms belonging
to the same user recombine through the ordinary chain rule to recover the
conventional Gaussian superposition representation.

Section~\ref{sec3.1} first illustrates this atomic viewpoint using the
degraded Gaussian broadcast channel, where the complete atomic decomposition is
chain-rule equivalent to the familiar two-layer superposition representation.
The same viewpoint then generalizes to a representative $3\times3$ Gaussian
interference channel, illustrating a terminal atom, synchronized receiver
orderings, and the remaining multicast atoms that retain the receiver-minimum
operation $\cI_{min}$.  These examples motivate the general Gaussian
interpolation and terminal-atom recursion developed in this section's remainder.

A priority (row) vector $\thetavec \succeq \zerovec $ is such that 
\beq
\thetavec \cdot ( \bvec - \bvec^\prime ) = 0 \;\; ,
\label{tangentplane}
\eeq
where $\bvec^\prime$ is an extremal data-rate vector on the capacity-region's boundary,
as in Figure \ref{fig:tangent}.  
\begin{figure*}[!t]
\begin{framed}
\centering
\vspace{-.3in}
\includegraphics[width=0.9\linewidth]{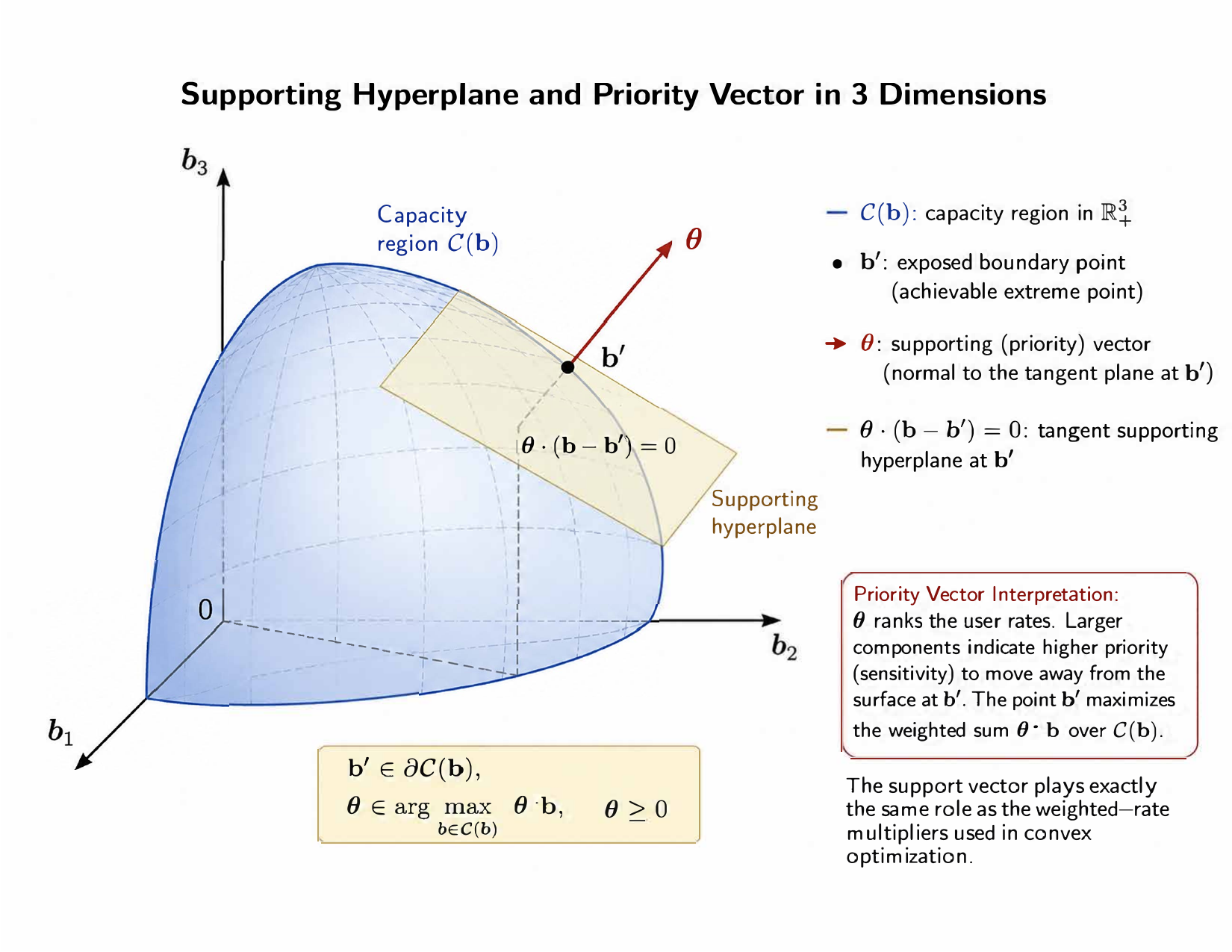}
\vspace{-.5in}
\caption{Priority illustration as tangent plane to capacity region extrema.}
\label{fig:tangent}
\end{framed}
\end{figure*}
$\thetavec$ is normal to the boundary at $\bvec^\prime$ \cite{rockafellar1970convex,schneider2014convex}.  Equation (\ref{tangentplane}) describes a hyperplane that divides the $U$-dimensional space $\R^U$ into half spaces and is tangent to the capacity region ${\cal C} (\bvec )$ at $\bvec^\prime$. For small $\Delta \bvec \isdef \bvec - \bvec^\prime$, $\thetavec$ measures the incremental change for each user's rate.  Thus, larger $\theta_i$ values measure larger sensitivity or deviation from the surface and thus their higher priority in realization of the extremal point $\bvec^\prime$.  The set $\{ \thetavec \mid \thetavec \succeq \zerovec \}$ traces the capacity region's outer boundary in the first orthant.  
The largest priority
$\theta_{u^*}\isdef\max_u\theta_u$
determines the terminal private atom
$(u^*,\{u^*\})$.
Retaining the complete atomic representation ensures that this terminal atom
appears explicitly throughout the subsequent interpolation rather than being
absorbed into an aggregated user layer.
Consequently, all remaining active atoms either precede the terminal atom in
the synchronized decoding order or remain only through surviving
$\cI_{min}$ constraints.
These other atoms' inclusion via marginal averaging (or as components of interference on Gaussian channels) does not activate them.  This is because user $u^*$'s sensitivity is largest\footnote{The elements of $\thetavec$ can be considered rate-constraint-multipliers in optimization and so measure incremental change - the largest constraint multiplier thus should dominate over other constraints' multipliers, meaning deactivation of any atoms later in the decoding order.} and their inclusion in previous decodings only degrades user~$u$'s rate, when it has highest priority, a contradiction.  Any information carried by other atoms not previously decoded cannot improve the extremal point for a particular $\thetavec$.   The terminal atom changes over as many as $U$ possibilities as $\thetavec$ varies.  Nor all $\thetavec \succeq \zerovec$ need be normal to a particular capacity region, something occuring with channels often known as degraded or rank-degraded. 

\subsection{Atomic Preparation\label{sec3.1}}

\subsubsection{Degraded Broadcast Channel Recast\label{sec3.1.1}}

The degraded Gaussian broadcast channel provides the smallest nontrivial
example that illustrates why the present proof temporarily retains a complete
atomic decomposition before reverting to the conventional superposition
representation.
The example considers the physically degraded two-user channel:
\begin{align}
Y_1 &= 2\cdot X+N_1, \nonumber\\
Y_2 &= X+N_2,
\end{align}
where
\[
X=X_1+X_2,
\]
the total transmit energy is unity, and
$N_1$ and $N_2$ are independent unit-variance Gaussian noises.
For a boundary point whose supporting hyperplane has priority vector
\[
\theta_1>\theta_2>0,
\]
user~1 is the terminal user with terminal atom $(1, \{ 1\})$.
The conventional degraded-BC support functional is therefore
\begin{equation}
\theta_1 \cdot I(X_1;Y_1|X_2)
+
\theta_2 \cdot  I(X_2;Y_2).
\label{eq:bc-support}
\end{equation}

For evaluating the Gaussian capacity region, the two-user representation
(\ref{eq:bc-support}) is entirely sufficient.
However, for the Gaussian extremality proof it is advantageous to postpone
this aggregation.
Instead, the degraded BC's user-1 layer is represented by its complete atomic
decomposition,
\[
(1,\{1\}),\qquad
(1,\{1,2\}),\qquad
(2,\{1,2\}),
\]
where the first two atoms both belong to user~1.
Because the channel is physically degraded, the multicast atom
$(2,\{1,2\})$
has a fixed bottleneck receiver, so its
$\cI_{\min}$ operation is implicit in the usual degraded-BC formulation.
For this degraded channel $\theta_1 > \theta_2 \ge 0$, which restricts any search over $\thetavec$ possibilities to a subregion. 
Thus, the complete atomic representation is exactly equivalent to the
conventional two-layer superposition description, differing only in the
bookkeeping used during the proof.
The temporary separation therefore changes neither the achievable region nor
the eventual Gaussian signaling structure; it merely delays user-layer
aggregation until after the Gaussian interpolation's completion.
This temporary delay preserves the individual priority-weighted contributions
of the stronger-user atoms throughout the Gaussian interpolation; only after
Gaussian extremality has been established are they recombined through the
chain rule.

\begin{framed}
Importantly, this temporary atomic separation does not decouple the optimization across users or receivers. Throughout the proof, all atoms remain coupled by the common priority vector $\thetavec$,  the corresponding synchronized receiver decoding orders that it induces, and, for multicast atoms, the receiver-minimum operation $\cbI_{min}$.  The decomposition merely exposes the chain-rule components of a single synchronized support functional; it does not create independent Gaussian extremality problems.
\end{framed}

The relationship with the conventional degraded-BC formulation can be made
explicit by introducing the temporary notation
\[
A\isdef (1,\{1\}),\qquad
B\isdef (2,\{1,2\}),\qquad
C\isdef (1,\{1,2\}).
\]
Thus, $A$ and $C$ are the two atoms belonging to the stronger user~1,
whereas $B$ is the atom belonging to the weaker user~2.  The transmitted
signal may consequently be written as
\[
X=A+B+C.
\]
In the conventional two-layer superposition representation, the weaker-user
auxiliary variable is
\[
U=B,
\]
while the stronger-user layer consists of the combined signal
\[
X_1=A+C.
\]
Hence,
\begin{align}
\cI (U;Y_2)
&=
\cI (B;Y_2),
\label{eq:BCchain1}
\\
\cI (X;Y_1|U)
&=
\cI (A,C;Y_1|B)
\nonumber\\
&=
\cI (C;Y_1|B)
+
\cI (A;Y_1|B,C),
\label{eq:BCchain2}
\end{align}
where the last equality follows from the chain rule in the synchronized
decoding order.

The conventional degraded-BC support functional
\begin{equation}
\theta_2 \cdot \cI (U;Y_2)
+
\theta_1 \cdot \cI (X;Y_1|U)
\label{eq:BCconventionalSupport}
\end{equation}
is therefore exactly equal to
\begin{equation}
\theta_2 \cdot \cI (B;Y_2)
+
\theta_1 \cdot \cI (C;Y_1|B)
+
\theta_1 \cdot \cI (A;Y_1|B,C).
\label{eq:BCatomicSupport}
\end{equation}
Thus, the complete atomic representation does not replace the conventional
combined support functional by separate receiver optimizations.  Rather, it
uses the chain rule to expose the two contributions of the stronger-user
layer while preserving their common coefficient $\theta_1$ and their joint
coupling to the weaker-user atom $B$.

In particular, the separation of $A$ and $C$ does not assert that these atoms
can be optimized independently.  Their distributions, decoding positions,
and mutual-information variations remain coupled through the single common
priority vector $\thetavec$, the corresponding synchronized decoding order, and the common
channel input $X=A+B+C$.  The atomic representation merely postpones the
recombination
\[
(A,C)\longmapsto X_1
\]
until after the Gaussian interpolation has established extremality for the
complete synchronized support functional.

For $\theta_1>\theta_2$, the private atom
\[
A=(1,\{1\})
\]
is the terminal atom.  The
largest priority $\theta_1$ determines the terminal atom, but its Gaussian extremality is not an
independent single-layer conclusion.  Through the synchronized chain rule,
the resulting interpolation ordering constrains the preceding stronger-user
atom
\[
C=(1,\{1,2\})
\]
and the weaker-user atom
\[
B=(2,\{1,2\}).
\]
After Gaussianity propagates through this coupled atomic ordering, the
two stronger-user atoms $A$ and $C$ recombine to recover the conventional
Gaussian layer $X_1=A+C$, yielding precisely the familiar two-layer
degraded-broadcast-channel representation.

This temporary separation is the key distinction between the present proof
and the conventional degraded-BC representation \cite{coverbook,bergmans1974simple}.
Equation (\ref{eq:BCatomicSupport}) is therefore not a different optimization problem; it is simply a chain-rule expansion of (\ref{eq:BCconventionalSupport}) in which the stronger-user layer has not yet recombined its atomic components.
During the Gaussian interpolation, the synchronized support functional's variation is the sum of its atomic contributions.
The contributions associated with the two higher-priority user-1 atoms are therefore retained separately until the weighted comparison completes,
rather than being combined prematurely into a single user layer.
The subsequent interpolation argument shows that the synchronized support functional is monotone under Gaussian interpolation because the terminal-atom recursion preserves the priority ordering induced by $\thetavec$.
Only after Gaussianity has been established are the two user-1 atoms
recombined into the conventional Gaussian superposition layer, recovering the
standard degraded broadcast-channel formulation.

\subsubsection{A $3\times3$ Gaussian Interference-Channel Example\label{sec3.1.2}}

Figure~\ref{fig:example}
illustrates a representative
$3\times3$
Gaussian interference channel having
priority vector
\[
\theta_3>\theta_1>\theta_2>0 \;.
\]
The resulting priority ordering induces one synchronized decoding order at
each receiver, as in Fig.~\ref{fig:example}'s three columns.
Figure 4 should be read column by column. The first column shows the highest-priority receiver, where the terminal atom appears last. The middle and right columns show how Gaussianity propagates backward while only two multicast atoms continue to require an $\cI_{min}$ operation.
For this receiver's (upper table = IC) ordering, each receiver initially contains
at most five potentially active atoms, consistent with Section~\ref{sec2}'s
pairwise-restricted
$U=3$
atomic representation.
\begin{figure*}[!t]
\begin{framed}
\centering
\includegraphics[width=0.9\linewidth]{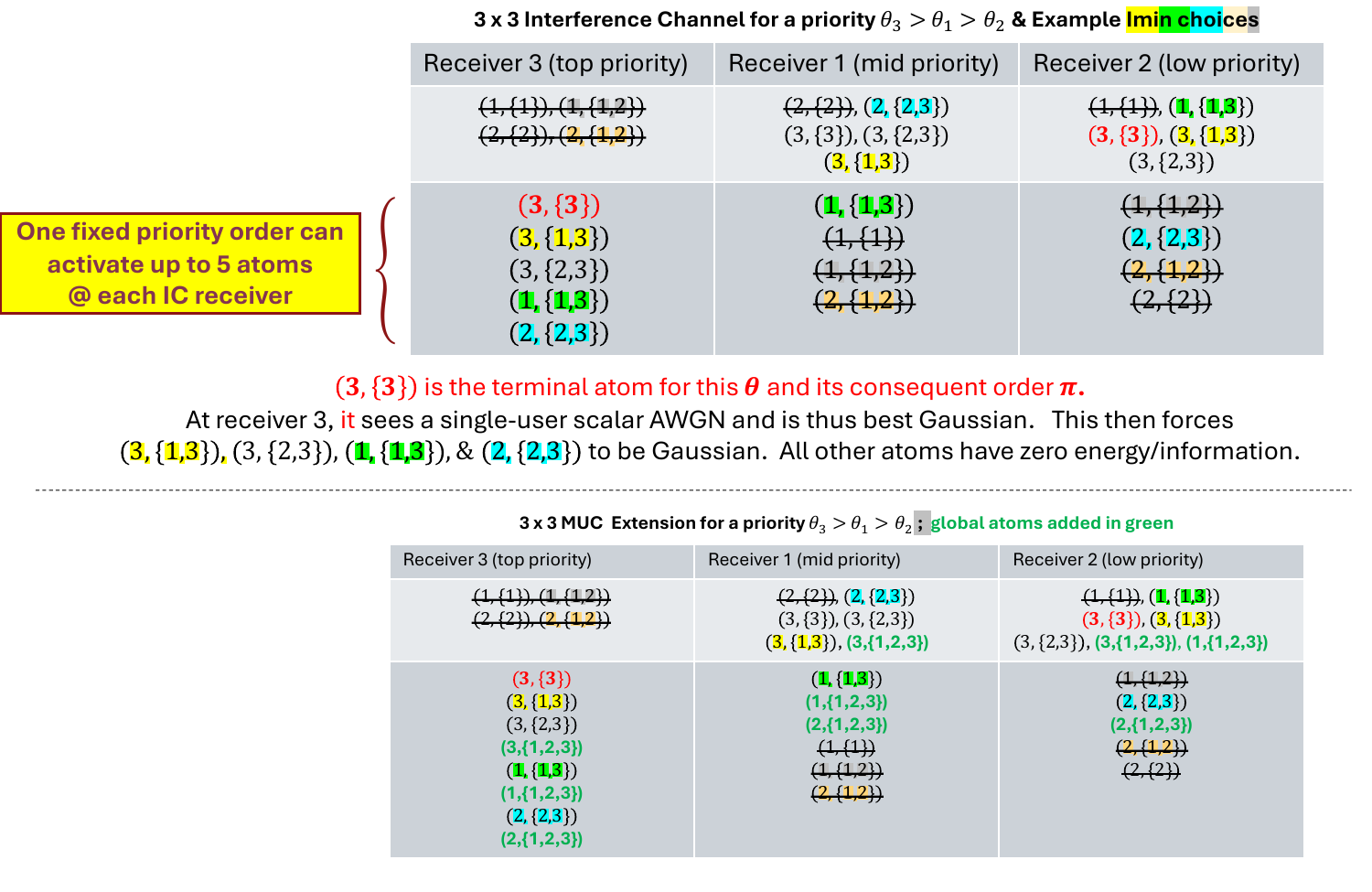}
\vspace{-.2in}
\caption{Terminal-atom Gaussian recursion for a
$3\times3$
Gaussian interference channel with $\theta_3 > \theta_1 > \theta_2$.}
\label{fig:example}
\end{framed}
\end{figure*}

The example's private atom
$(3,\{3\})$
appears last in Receiver~3's decoding order and therefore
constitutes the terminal atom for this priority vector $\thetavec$.
For this $\thetavec$, the atoms $(1, \{1 \} )$, $(1, \{1,2 \} )$, $(2, \{2 \} )$, and $(2, \{1, 2\} )$ necessarily carry zero information and energy\footnote{Intuitively, with formal proof later in this section, the priority maximum $\theta_3$ at user 3, causes these other atoms zeroing.  The values $\theta_i$ control sensitivity to each users' change in rate, and nonzero rate on any of these 4 atoms would prioritize them above user 3.}.
This leaves the terminal atom $(3, \{ 3 \} )$ as the last decoded. 
After conditioning on all previously decoded atoms,
Receiver~3 observes this atom as an effective
single-user Gaussian channel.
Consequently this terminal atom's optimal distribution is Gaussian.

A proof arrives shortly, but intuitively the next few paragraphs build insight into the atomic decoding ordering. A corresponding MMSE innovation recursion propagates this Gaussianity backward through Receiver~3's ordered chain-rule representation \cite{GuoShamaiVerdu2005}.
The incremental support differences associated with the
preceding chain-rule terms therefore force (as shown by the upcoming terminal-atom lemma) the remaining
surviving atoms in Receiver~3's order,
$(3,\{1,3\})$,
$(3,\{2,3\})$,
$(1,\{1,3\})$,
and
$(2,\{2,3\})$,
also to be Gaussian,
while the support functional's optimum assigns every incompatible atom a zero covariance (zero energy)
(and therefore zero rate) at this support optimum.  The strike through in Figure \ref{fig:example}'s tables indicates deactivation. There is always at least one private terminal atom, which for any MUC will associate with the user atom $(i^* , \{ i^* \})$ corresponding to the largest $\theta_{i^*} = \max_{i} \{ \theta_i \}$.  This terminal atom successively causes other atoms in Receiver 3's column (moving down the column as a succession of single-user Gaussian channels) to be optimally Gaussian, and they maintain this Gaussian nature then of course at other receivers.  The other receivers reverse-successively then also have Gaussian interference as shown because any receiver's last decoded atom sees only Gaussian interference.   The IC admits the possibility of other active atoms' non-Gaussian (so not "worst-case added interference") affecting user 3.   However, as the ensuing Gaussian interpolation shows formally, such possibility does not expand the capacity region. 

Two receivers both decode two surviving atoms, Receivers 3 and 1 for atom $(1, \{1,3 \})$ and Receivers 3 and 2 for atom $(2, \{2, 3\})$.
These two surviving atoms' achievable rates must not exceed the
smaller of the corresponding receiver mutual-information (chain-rule) increments, $\cI_{min} (a, \pi(\thetavec), p )$.  These pairs have green and blue highlighted entries in
Fig.~\ref{fig:example}'s upper table.
Thus ``Gaussianity'' distributes through the surviving innovation
structure but does not remove the multicast receiver-minimum constraint
captured by
$\cI_{\min}$.

This example intuitively suggests a general structure that this section proves below.
Every strict priority ordering possesses such a terminal private
atom. Retaining the complete atomic decomposition throughout the
interpolation allows this terminal atom to initiate the recursive
Gaussian propagation developed below.
The following subsections show that this recursive terminal-atom construction extends to arbitrary user numbers and receiver orders. Only this multicast atoms surviving the recursion retain receiver-minimum constraints.

The same supporting vector $\thetavec$ also arises as the multiplier
associated with the rate constraints in the dual weighted-energy
optimization defining the Gaussian operating point.
The present development requires only the existence of such
a supporting vector, not its explicit computation.
Accordingly, this section's remainder proceeds entirely
from the induced priority ordering
(\ref{eq:theta-order}).
Section \ref{sec4} has example computations for various Gaussian MUC's capacity regions as well as constructive designs for points within them. 

Figure \ref{fig:example}'s lower table illustrates a general MUC (same~$\mathbf b$), for which the additional 3 global atoms now appear.
The same process applies.  The global atoms also are Gaussian, and the same $(3,{3})$ terminal atom initiates the "Gaussianization." 
For $U=3$ users, more general MUC's allow collaboration between proper subsets of the users, so for instance users $1$ and $2$ could collaborate in receiving (or transmitting).  
These MUCs bridge the gap between the IC and the MAC and BC.  There are 6 possible pairings that do not reduce the channel to two users. 

\subsection{Supporting Hyperplanes and Priority Orderings
\label{sec3.2}}

Section~\ref{sec2} established the complete capacity-region
representation for arbitrary memoryless multiuser channels, though
generally only as a closure over super-symbol length (Section \ref{sec2}'s
Scope of the closure); this subsection identifies linear Gaussian
multiuser channels as a class that attain this boundary via an explicit, finite construction.
Because this region is convex, every exposed boundary point in Figure \ref{fig:tangent} 
admits at least one supporting hyperplane. Consequently,
for every exposed achievable rate vector
$\mathbf b^\star$,
there exists a nonnegative supporting vector
$\thetavec \succeq \mathbf 0$
such that
\begin{equation}
\mathbf b^\star
\in
\arg\max_{\mathbf b \in\mathcal C}
\thetavec^T \cdot \mathbf b \; .
\label{eq:support-functional}
\end{equation}

The components of
$\thetavec$
define only a relative priority.
Accordingly, without loss of generality, they may be
ordered as
\begin{equation}
\theta_{u_U}
\ge
\cdots
\ge
\theta_{u_2}
\ge
\theta_{u_1}
\ge
0,
\label{eq:theta-order}
\end{equation}
where
$u_1,\ldots,u_U$
denote the corresponding user priority ordering.
Strict inequalities correspond to unique priorities,
whereas equal components identify boundary faces for which
multiple equivalent orderings exist.
Section~\ref{sec3.4} considers the latter case

Again, the support vector
$\thetavec$
needs no explicit computation. 
Its existence follows directly from the capacity region's convexity; any exposed boundary point admits at
least one such supporting vector.
For Gaussian multiuser channels, this priority ordering determines the recursive Gaussian construction that now follows more formally in this section.

\subsection{Terminal-Atom Gaussian Propagation
\label{sec3.3}}

Section~\ref{sec3.1}'s example suggests that every strict
priority ordering contains a distinguished terminal private atom.
The following lemma formalizes this observation:
\begin{lemma}[Terminal Atom]
\label{lem:terminalatom}
Consider any strict user-priority ordering
\[
\theta_{u_U}
>
\cdots
>
\theta_{u_2}
>
\theta_{u_1}
>
0.
\]
The synchronized receiver orderings induced by this priority vector
identify a unique terminal private atom associated with the
highest-priority user $u_{U}$.
Within every receiver's chain-rule-based decoding of this atom,
its mutual-information increment is the final chain-rule term and
is therefore not followed by any additional decoded atom.
\end{lemma}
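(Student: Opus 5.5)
The plan is to derive the terminal-atom property from the support-functional maximization (\ref{eq:support-functional}) together with the polymatroid/greedy structure already used in Lemma~\ref{lem:convexified-chain-rule}. I fix a strict ordering $\theta_{u_U}>\cdots>\theta_{u_1}>0$ and a boundary point $\mathbf b^\star$ that it supports. In atom coordinates, the objective $\thetavec^T\mathbf b$ becomes $\sum_{a=(i,S)}\theta_i b_{i,S}$, so every atom inherits the priority of its owner. For a single receiver $r$, the weighted sum over its polymatroid $\mathcal P_r(p)$ is maximized by the greedy (Edmonds) vertex that decodes atoms in order of increasing weight, so the largest-weight atoms come last \cite{edmonds2003submodular}. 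I take this as the definition of the synchronized orders $\pi_r(\thetavec)$: at every receiver, the atoms of $u_U$ occupy the final chain-rule positions.

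Next I show that, among user $u_U$'s atoms, the private atom $(u_U,\{u_U\})$ is the last one decoded, and that nothing follows it. The argument is an exchange step. First, receiver $u_U$ is the only receiver that decodes $(u_U,\{u_U\})$, so its rate is limited only by the single increment $\Delta_{u_U}$, with no $\cI_{\min}$ penalty. Second, any atom $(i,S)$ placed after it at receiver $u_U$ has either $i\neq u_U$, and hence $\theta_i<\theta_{u_U}$, or $i=u_U$ with $|S|\ge2$. In either case I move that atom ahead of the private atom. By submodularity of the polymatroid rank function, this swap weakly increases the private atom's increment by exactly the amount it decreases the moved atom's increment. Under strictly larger weight $\theta_{u_U}$, or under an equal weight combined with the extra $\cI_{\min}$ constraints at other receivers in $S$, the swap does not decrease $\thetavec^T\mathbf b$. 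Iterating places $(u_U,\{u_U\})$ last at receiver $u_U$. At other receivers $r\neq u_U$ the atom is not in $\mathcal A_r$, so it is averaged into the induced channel and the claim holds vacuously. This yields the statement that, within every chain-rule decoding of this atom, its increment is the final term.

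For uniqueness, suppose a different private atom $(j,\{j\})$ with $j\neq u_U$ could also serve as terminal atom. Then it would occupy the final position at receiver $j$ with weight $\theta_j<\theta_{u_U}$. The strict inequality makes the greedy ordering unique up to atoms of equal owner, so the terminal atom is identified by $u_U$ alone. Ties are excluded by hypothesis and are deferred to Section~\ref{sec3.4}.

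I expect the exchange step to be the main obstacle. The intersection of receiver polymatroids is not itself a polymatroid, so the single-receiver greedy argument does not transfer directly. Moving a multicast atom earlier at receiver $u_U$ also changes its $\cI_{\min}$ over other receivers. My first attempt will hold the orders at all receivers $r\neq u_U$ fixed and note that moving an atom earlier at $u_U$ can only lower $\Delta_{u_U}$ for that atom. I then need to show that the loss is either not binding in the minimum, or is more than recovered by the private atom's gain, which carries the strictly largest weight $\theta_{u_U}$. If that fails, I will fall back on showing that any atom decoded after the private atom at receiver $u_U$ must carry zero rate at the optimum, in the spirit of the footnoted multiplier argument, so that it can be deleted.
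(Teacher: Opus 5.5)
Your proposal is correct in substance, and its core step is the paper's own proof. The paper argues only that a strict priority vector induces orders in which every receiver decoding user $u_U$ decodes it last, so that user's increment is the final chain-rule term. In other words, the induced order is taken as definitional, exactly as you do once you invoke Edmonds.

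You add two things the paper does not supply:
\begin{enumerate}
\item the greedy-vertex reason why increasing-weight order is the natural induced order at each receiver;
\item an exchange argument placing $(u_U,\{u_U\})$ after user $u_U$'s multicast atoms at receiver $u_U$.
\end{enumerate}
The paper skips the second point by treating all of a user's atoms as one consecutive block. Your version buys something real: it shows that the private-last tie-break costs nothing for the support functional, which the later terminal-increment step of Theorem~\ref{thm:gaussianpropagation} silently relies on. It gives only weak (without-loss) optimality, though. As you correctly note, the cross-receiver order must remain definitional, since an intersection of polymatroids is not a polymatroid.

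Two corrections are needed.

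\emph{The exchange step closes immediately}, so you need neither the strictly-larger-weight argument nor the fallback. Use adjacent swaps. Moving $y=(u_U,S)$ just ahead of $x=(u_U,\{u_U\})$ at receiver $u_U$ preserves the pair's sum there. It raises $\Delta_{u_U}(x)$ by some $\delta\ge 0$, lowers $\Delta_{u_U}(y)$ by the same $\delta$, and leaves every other increment unchanged, including all increments at receivers $r\neq u_U$. Since $\min_{r\in S}$ is $1$-Lipschitz in each argument, $\cI_{\min}(y)$ drops by at most $\delta$, while $b_x$ gains exactly $\delta$. Both atoms carry the common weight $\theta_{u_U}$, so the support functional cannot decrease. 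The strictly-larger-weight argument would not apply here anyway, since the same-user atom has the same weight.

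\emph{Your uniqueness paragraph does not work.} A lower-priority private atom $(j,\{j\})$ can legitimately be last at receiver $j$. This happens whenever receiver $j$ decodes no active atom of a higher-priority user, for example in an interference channel with zero cross gains. Strictness of the greedy order therefore yields no contradiction. The lemma's uniqueness means only that a strict ordering has a single top user $u_U$, and each user owns exactly one private atom. That is how the paper reads it, and that is all you need.
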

\begin{proof}
Because the user priorities have strict order, each receiver that decodes the
highest-priority user $u_U$ will decode it last. 
Consequently, its
mutual-information increment is the terminal chain-rule increment
for each such receiver.
\end{proof}

For linear Gaussian channels and any particular receiver, conditioning on all previously decoded atoms leaves the terminal increment as the mutual information of an equivalent single-user Gaussian channel.  By the classical single-user Gaussian capacity theorem, this terminal atom's
maximizing input distribution is Gaussian.
This Gaussian terminal atom becomes the basis for
recursive propagation of the Gaussian distribution.   The sequel here expands to all receivers in the common schedule-state. 

Lemma \ref{lem:terminalatom} establishes the base case for the Gaussian
construction. The next-to-last decoded atom sees only Gaussian interference/noise, and so too is optimally Gaussian.  By simple induction, all the activated (energized or non-zero rate) atoms are Gaussian.  The next theorem formalizes this Gaussian terminal atom
propagation recursively through the remaining synchronized
chain-rule representation at all the other receivers.

\begin{theorem}[Recursive Gaussian Propagation]
\label{thm:gaussianpropagation}
Consider a linear Gaussian multiuser channel, a strict
user-priority ordering
\[
\theta_{u_U}
>
\cdots
>
\theta_{u_2}
>
\theta_{u_1}
>
0 \; .
\]
and any fixed feasible allocation of atom covariance matrices.
For the synchronized receiver orders that this $\thetavec$ induces, the corresponding support functional has maximimum when
all active atom blocks are mutually independent Gaussian random
vectors with their prescribed covariance matrices.

Beginning with the terminal private atom identified in
Lemma~\ref{lem:terminalatom}, ``Gaussianity'' propagates recursively
along the synchronized receiver chain rules until every active
atom has been exhausted. Surviving multicast atoms retain their
receiver-minimum constraints through $\cI_{\min}$.
\end{theorem}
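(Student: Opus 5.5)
The plan is to fix the covariance allocation $\{K_a\}$ of the active atoms and show that, for each receiver, the $\thetavec$-weighted sum of that receiver's chain-rule increments can be rewritten as a nonnegative combination of conditional output entropies. Each such entropy is then maximized, for the given covariances, by independent Gaussian atoms. Lemma~\ref{lem:terminalatom} supplies the last term of each receiver's order. The ``backward propagation'' in the statement is then the Abel-summation telescoping of that order, read from the terminal atom downward. I would not argue stage by stage that each decoded atom ``sees Gaussian interference''. That argument silently assumes the later atoms are already Gaussian, whereas the support functional couples all stages at once.

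\textbf{Step 1 (per-receiver rewriting).} Fix receiver $r$, a schedule state $q$, and the induced order $a_1,\dots,a_K$ of $\mathcal A_r$. Weight atom $a_k=(i_k,S_k)$ by $w_k=\theta_{i_k}$. By Lemma~\ref{lem:terminalatom} and the strict ordering, $w_1\le w_2\le\cdots\le w_K$, with ties only between atoms of the same user. Write $I(M_{a_k};Y_r\mid M_{a_{<k}})=h(Y_r\mid M_{a_{<k}})-h(Y_r\mid M_{a_{\le k}})$ and sum by parts:
\[
\sum_{k=1}^K w_k\, I(M_{a_k};Y_r\mid M_{a_{<k}})
= w_1 h(Y_r)+\sum_{k=2}^{K}(w_k-w_{k-1})\,h(Y_r\mid M_{a_{<k}})-w_K\, h(Y_r\mid M_{\mathcal A_r}).
\]
All coefficients $w_k-w_{k-1}$ are nonnegative. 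The final term involves only the undecoded atoms plus Gaussian noise. It is handled separately: for interference-channel orders induced by $\thetavec$, the undecoded atoms are exactly those zeroed in Section~\ref{sec3.1.2}, so the final term reduces to the fixed noise entropy. In general I would absorb it by a worst-case-noise argument, because Gaussian interference of fixed covariance minimizes $I$, and I would check that this goes in the right direction.

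\textbf{Step 2 (max-entropy with independence).} The atoms are independent and the channel is linear. Hence, conditioned on $M_{a_{<k}}$, the output $Y_r$ equals a known term plus $\sum_{j\ge k}H_{r,a_j}X_{a_j}+N_r$. Its expected conditional covariance is therefore at most $\sum_{j\ge k}H_{r,a_j}K_{a_j}H_{r,a_j}^{*}+K_{N_r}$. Conditional max-entropy combined with Jensen's inequality (concavity of $\log\det$) bounds each $h(Y_r\mid M_{a_{<k}})$ by the Gaussian value. That bound is attained simultaneously for all $k$ by independent Gaussian $X_a$ with covariances $K_a$. Averaging over $Q$ preserves all of this. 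This gives the single-receiver version of the theorem, and it is where the terminal atom acts as the base case: the top coefficient $w_K$ is paired with the terminal atom, since $h(Y_r\mid M_{a_{<K}})$ in the $k=K$ term of the sum is exactly the entropy seen by the terminal atom.

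\textbf{Step 3 (multicast atoms and $\cI_{\min}$): the main obstacle.} The global support functional is $\sum_a\theta_{i(a)}\min_{r\in S(a)}\overline{\Delta}_r(a)$, not a sum of per-receiver functionals. I would write $\min_{r\in S}x_r=\min_{\lambda\in\Delta(S)}\sum_r\lambda_r x_r$. For each fixed $\lambda$ this gives a sum over receivers of chain-rule sums with atom weights $\lambda_{a,r}\theta_{i(a)}$. The difficulty is that these split weights may no longer be nondecreasing along a receiver's order, so the Abel coefficients could turn negative. I would first try to show that at a maximizing point the binding receiver for each surviving multicast atom can be taken to put all of $\lambda$ on one receiver. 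Along that receiver the order is still monotone in $\theta$, and receivers carrying zero weight for an atom contribute only through conditioning, which Step~2 already handles. If that fails, my fallback is to invoke the minimax inequality so as to maximize over input laws first, using that Gaussian inputs are simultaneously optimal for every $\lambda$ with monotone weights. I would then verify monotonicity directly on the surviving atom sets of Section~\ref{sec3.1.2}. In either route the surviving multicast atoms keep their $\cI_{\min}$ constraint, as the statement asserts.
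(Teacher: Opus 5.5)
\textbf{There is a genuine gap, and it sits in the entropies that carry negative coefficients.} Your Abel identity is correct; it is the entropy form of the paper's Step~1. Your Step~2 correctly bounds every conditional output entropy with a \emph{positive} coefficient. You have also isolated the right obstruction, but neither of your patches removes it.

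The last term $-w_K\,h(Y_r\mid M_{\mathcal A_r})$ is the entropy of the atoms receiver $r$ does not decode, plus noise. Gaussianizing those atoms at fixed covariance raises this entropy and therefore lowers the functional. Your claim that these atoms are exactly the deactivated ones holds only at the top-priority receiver. In Fig.~\ref{fig:example}, receiver~1 does not decode the active atoms $(3,\{3\})$, $(2,\{2,3\})$ and $(3,\{2,3\})$. In the degraded BC of Section~\ref{sec3.1.1}, receiver~2 treats the active terminal atom $A=(1,\{1\})$ as noise. The worst-case-noise fallback points the wrong way: ``Gaussian interference of fixed covariance minimizes $\cI$'' says precisely that Gaussianizing those atoms \emph{decreases} receiver $r$'s increments.

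Step~3 fails for the same reason. Charging each multicast atom to one binding receiver is legitimate: weak duality at the Gaussian point's minimizing $\lambda$, which can be taken at a vertex, is all you need, so no minimax theorem is required. The problem arises at another receiver that decodes that atom after one of its own charged atoms. There the weight drops from $\theta$ to $0$, and the resulting negative Abel coefficient places the atom inside a negatively weighted entropy as interference, not ``only through conditioning.''

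\textbf{Why this cannot be fixed receiver by receiver.} These terms must be paired with positive terms at other receivers, and that requires structure the statement does not assume.

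In the degraded BC the pairing works. The functional contains $\theta_1 h(2A+N_1)-\theta_2 h(A+N_2)$. With $N_2\stackrel{d}{=}N_1/2+\tilde N$ and $V=A+N_1/2$, this equals, up to constants, $(\theta_1-\theta_2)h(V)+\theta_2[h(V)-h(V+\tilde N)]$. The bracket is $-\cI(\tilde N;V+\tilde N)$, which Gaussian $V$ maximizes by the worst-case-noise lemma applied with the Gaussian $\tilde N$ as the \emph{input}. This step uses both degradedness and $\theta_1\ge\theta_2$.

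Without such a relation the statement itself fails. Take a two-user IC with $\theta_2>\theta_1$, consistent with the paper's deactivation rule:
\begin{itemize}
\item active atoms $c=(1,\{1,2\})$ and the terminal $(2,\{2\})$;
\item $Y_1=X_c+aX_2+N_1$ and $Y_2=bX_c+X_2+N_2$, with $b^2\ge 1+P_2$.
\end{itemize}
With Gaussian inputs, $\cI_{\min}(c)\le\tfrac12\log(1+P_c/(1+a^2P_2))\to 0$ as $a\to\infty$. Now replace $X_2$ by a fine PAM of the same power. This drives $\cI(X_c;Y_1)$ to $\tfrac12\log(1+P_c)$. The receiver-2 term stays at least $\tfrac12\log(1+P_c)$ by worst-case noise, since $b^2\ge1+P_2$. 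Meanwhile $\cI(X_2;Y_2\mid X_c)$ loses only a bounded amount, roughly the $0.25$-bit shaping gap. For $P_c$ large relative to $\theta_2/\theta_1$, the weighted functional therefore strictly increases. So an extra hypothesis is needed, not a cleverer argument.

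\textbf{The paper's proof does not supply that hypothesis either.} For every $m$, its $T^{(2)}_{r,m}$ measures the non-Gaussianity of the same residual: the atoms undecoded at $r$ plus $Z_r$. It therefore cannot cancel against $T^{(1)}_{r,m+1}$. The $\gamma$-weighted sum of these terms is the derivative counterpart of your final term, and it is simply absent from the paper's expression for $d\Psi_{\thetavec}/dt$.
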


\begin{proof}
The formal proof proceeds in four steps.

\medskip
\noindent
\textit{Step 1: Receiver-local support decomposition.}

This proof fixes a strict priority vector
$\thetavec$
and the corresponding synchronized receivers' single aggregate decoding order that it induces.
For every surviving multicast atom
$a$,
there is at least one receiver
$r(a)$
that attains this minimum; ties
resolve arbitrarily but consistently. Thus
\begin{equation}
\cI_{\min}(a)
=
\cI\!\left(
X_a;
Y_{r(a)}
\,\middle|\,
\mathcal F_a
\right),
\label{eq:imin-selection}
\end{equation}
where
$\mathcal F_a$
denotes the atoms decoded before
$a$
in receiver
$r(a)$'s
synchronized chain rule.
$L_r \le U$ is the number of users for which receiver $r$ sees each at least one energized atom.   $L_r$ may be less than $U$ in situations where no other-user-activated atoms reach it, but maximally $L_r = U$.  For instance in Figure \ref{fig:example}, $L_r = U = 3$.  Atoms for the same user will occur successively and thus count as 1 atom in the sequel. 

Grouping all atoms assigned to receiver
$r$
gives the receiver-local support contribution
\begin{equation}
\Psi_r
\triangleq
\sum_{k=1}^{L_r}
\theta_{u_k} \cdot
b_{r,u_k}.
\label{eq:receiver-support}
\end{equation}
With $a_{r,k}$ below denoting the succession of atoms associated with user $u_k$ in receiver $r$'s order (these will be successive as in Figure \ref{fig:example}'s example) for any user at any receiver $r$, then 
\begin{equation}
b_{r,u_k}
\triangleq
\cI\!\left(
X_{a_{r,k}};
Y_r
\,\middle|\,
\mathcal F_{r,k-1}
\right)
\label{eq:ordered-rate}
\end{equation}
is the
$k$th
ordered chain-rule increment and
$\theta_{u_k}\ge0$
denotes its induced support weight.

Because the receiver decoding order follows the strict priority
ordering,
the effective weights satisfy
\begin{equation}
0
\le
\theta_{u_1}
\le
\theta_{u_2}
\le
\cdots
\le
\theta_{u_{L_r}},
\label{eq:ordered-weights}
\end{equation}
so that every successive difference
\[
\Delta\theta_{k}
\triangleq
\theta_{u_k}
-
\theta_{u_{k-1}},
\qquad
k\ge2,
\]
is nonnegative.

The suffix mutual information quantities are
\begin{equation}
J_{r,m}
\triangleq
\sum_{k=m}^{L_r}
b_{r,k}
=
\cI\!\left(
X_{a_{r,m}},\ldots,X_{a_{r,L_r}};
Y_r
\,\middle|\,
\mathcal F_{r,m-1}
\right).
\label{eq:suffix-information}
\end{equation}
Application of Abel's summation transformation \cite{Knopp1990} to
(\ref{eq:receiver-support})
yields
\begin{equation}
\Psi_r
=
\theta_{u_1} \cdot J_{r,1}
+
\sum_{m=2}^{L_r}
\Delta\theta_{m} \cdot 
J_{r,m}.
\label{eq:abel-support}
\end{equation}

The synchronized support functional therefore admits the
decomposition
\begin{equation}
\thetavec \cdot \mathbf b
=
\sum_r
\Psi_r,
\label{eq:global-support}
\end{equation}
whose coefficients are all nonnegative and are strictly positive
for every active stage associated with the strict priority
ordering.
Equations
(\ref{eq:abel-support})
and
(\ref{eq:global-support})
provide the support-functional representation to which the next step applies
Gaussian interpolation.

\medskip
\noindent
\textit{Step 2: Covariance-preserving Gaussian interpolation and
receiver-local innovation representation.}

$\mathcal A_{\rm act}$
denotes the finite atom collection having nonzero covariance
under the fixed feasible covariance allocation, with any
common indexing
\[
\mathcal A_{\rm act}
=
\{a_1,\ldots,a_M\} \; .
\]
Their stacked channel-input vectors (with superscript $T$ as transpose) are
\begin{equation}
X_{\mathcal A}
\triangleq
\begin{bmatrix}
X_{a_1}^{T} &
\cdots &
X_{a_M}^{T}
\end{bmatrix}^{T},
\label{eq:active-atom-stack}
\end{equation}
and also
\begin{equation}
K_{\mathcal A}
\triangleq
\operatorname{cov}(X_{\mathcal A})
=
\operatorname{diag}
\left(
K_{a_1},\ldots,K_{a_M}
\right),
\label{eq:active-atom-covariance}
\end{equation}
where the block-diagonal form follows from the independent
atom construction.

A Gaussian signal 
$X_{\mathcal A}^{\rm G}$
is independent of
$X_{\mathcal A}$
and has mutually independent Gaussian atom blocks that also satisfy
\begin{equation}
X_{a}^{\rm G}
\sim
\mathcal N(\mathbf 0,K_a),
\qquad
a\in\mathcal A_{\rm act}.
\label{eq:gaussian-atom-comparison}
\end{equation}
Their Gaussian interpolation is 
\begin{equation}
X_{\mathcal A}(t)
\triangleq
\sqrt{1-t}\cdot X_{\mathcal A}
+
\sqrt t\cdot X_{\mathcal A}^{\rm G},
\qquad
0\le t\le1.
\label{eq:gaussian-interpolation}
\end{equation}
Its atomic blocks are
\begin{equation}
X_a(t)
=
\sqrt{1-t}\cdot X_a
+
\sqrt t\cdot X_a^{\rm G},
\qquad
a\in\mathcal A_{\rm act}.
\label{eq:atom-interpolation}
\end{equation}
Equation (\ref{eq:gaussian-interpolation})'s two interpolated vectors vectors are
are independent and have the same covariance,
\begin{equation}
\operatorname{cov}\!\left(X_{\mathcal A}(t)\right)
=
(1-t)\cdot K_{\mathcal A}
+
t\cdot K_{\mathcal A}
=
K_{\mathcal A}
\label{eq:interpolation-covariance}
\end{equation}
for every
$t\in[0,1]$.
Thus the interpolation preserves every atom covariance and hence
every transmitter covariance constraint. At its endpoints,
\begin{equation}
X_{\mathcal A}(0)=X_{\mathcal A},
\qquad
X_{\mathcal A}(1)=X_{\mathcal A}^{\rm G}.
\label{eq:interpolation-endpoints}
\end{equation}

The same interpolation occurs simultaneously at every receiver.
Receiver~$r$ 's channel output is
\begin{equation}
Y_r(t)
=
\sum_{a\in\mathcal A_{\rm act}}
H_{r,a} \cdot X_a(t)
+
Z_r,
\label{eq:interpolated-receiver}
\end{equation}
where
$H_{r,a}$
is the channel matrix through which atom
$a$
enters receiver
$r$,
with
$H_{r,a}=  0$
meaning the atom does not pass to receiver
$r$.
The Gaussian noise
$Z_r$
is independent of the complete interpolation family.
Atoms not decoded at receiver
$r$
remain embedded in
(\ref{eq:interpolated-receiver})
and do not convert into decoded inputs by this notation.
For the synchronized receiver order
\[
a_{r,1},\ldots,a_{r,L_r},
\]
the interpolated conditioning field is
\begin{equation}
\mathcal F_{r,k-1}(t)
\triangleq
\sigma\!\left(
X_{a_{r,1}}(t),\ldots,
X_{a_{r,k-1}}(t) 
\right) \; ,
\label{eq:interpolated-field}
\end{equation}
and the corresponding ordered increment is
\begin{equation}
b_{r,k}(t)
\triangleq
\cI\!\left(
X_{a_{r,k}}(t);
Y_r(t)
\,\middle|\,
\mathcal F_{r,k-1}(t)
\right).
\label{eq:interpolated-increment}
\end{equation}
Likewise, the suffix quantity in
(\ref{eq:suffix-information})
becomes
\begin{align}
J_{r,m}(t)
&\triangleq
\sum_{k=m}^{L_r} b_{r,k}(t)
\nonumber\\
&=
\cI\!\left(
X_{a_{r,m}}(t),\ldots,X_{a_{r,L_r}}(t);
Y_r(t)
\,\middle|\,
\mathcal F_{r,m-1}(t)
\right).
\label{eq:interpolated-suffix}
\end{align}

For each fixed receiver
$r$
and suffix index
$m$,
the conditional channel in
(\ref{eq:interpolated-suffix})
is an induced multiple-access channel whose decoded inputs are
the suffix atoms
\[
a_{r,m},\ldots,a_{r,L_r}.
\]
The atoms in
$\mathcal F_{r,m-1}(t)$
are already given, while every atom not decoded at receiver
$r$
remains part of the induced channel distribution.
Applying the receiver-$r$ MMSE \cite{kailath2000linear}, \cite{cioffi1997gdfe}, \cite{cioffi_ee379} in the synchronized order
produces a sequence of orthogonal innovations
\[
V_{r,m}(t),\ldots,V_{r,L_r}(t).
\]

\textit{A notational clarification, important for what follows:}
throughout this construction, ``MMSE'' and the conditional-expectation
operator denote the classical \emph{linear} MMSE (Wiener/Cholesky-whitening)
operator, not the general nonlinear conditional-expectation/MMSE estimator
associated with an arbitrary joint distribution. This discussion defers this interpretation's justification and
consequences until immediately after the innovation
$V_{r,k}(t)$'s formal definition below, since the argument is easiest to
state once both the residual observation and the innovation are available.

More explicitly, after conditioning on
$\mathcal F_{r,k-1}(t)$
and subtracting the corresponding known channel contribution,
\begin{equation}
Y_{r,k}^{\rm res}(t)
\triangleq
Y_r(t)
-
\mathbb E\!\left[
Y_r(t)
\,\middle|\,
\mathcal F_{r,k-1}(t)
\right]
\label{eq:receiver-residual}
\end{equation}
denotes the residual observation at stage
$k$.
Because $\mathcal F_{r,k-1}(t)$ is generated by the \emph{exact}
realizations of the previously revealed atom blocks and the channel
(\ref{eq:interpolated-receiver}) is linear, (\ref{eq:receiver-residual})
is an exact identity for any joint atomic distribution -- not a linear
approximation: by independence and zero mean of the unrevealed atom
blocks and of $Z_r$, the true (nonlinear) conditional expectation in
(\ref{eq:receiver-residual}) equals the known, deterministic
contribution of the revealed atoms alone, so that
\[
Y_{r,k}^{\rm res}(t) = \sum_{j\ge k} H_{r,a_{r,j}}\cdot X_{a_{r,j}}(t) + Z_r
\]
exactly, for every $t$.
Its MMSE innovation \cite{kailath2000linear} with respect to the previously generated
MMSE outputs is
\begin{equation}
V_{r,k}(t)
\triangleq
Y_{r,k}^{\rm res}(t)
-
\mathbb E\!\left[
Y_{r,k}^{\rm res}(t)
\,\middle|\,
V_{r,1}(t),\ldots,V_{r,k-1}(t)
\right].
\label{eq:gdfe-innovation}
\end{equation}
The notational clarification flagged above now has precise statement:
Both conditional-expectation operators used to construct
$Y_{r,k}^{\rm res}(t)$ in (\ref{eq:receiver-residual}) and $V_{r,k}(t)$
in (\ref{eq:gdfe-innovation}) denote the classical \emph{linear} MMSE
(Wiener/Cholesky-whitening) operator, with coefficients fixed by the
second-order statistics of $(Y_r(t),\mathcal F_{r,k-1}(t))$ and
$(Y_{r,k}^{\rm res}(t), V_{r,1}(t),\ldots,V_{r,k-1}(t))$ respectively --
statistics that Step~3 below shows to be independent of $t$ (see
(\ref{eq:suffix-innovation-covariance})) -- and \emph{not} the general
nonlinear conditional-expectation/MMSE estimator associated with an
arbitrary (generally non-Gaussian) joint distribution. This distinction is
immaterial at the interpolation's endpoints $t=0,1$, where the two
notions can coincide for jointly Gaussian variables, but it matters at
intermediate $t\in(0,1)$, where the interpolated atom family
$X_{\mathcal A}(t)$ is generally \emph{not} jointly Gaussian. As shown
above, $Y_{r,k}^{\rm res}(t)$ is already an exact linear functional of
only the unrevealed atom blocks and $Z_r$; the transform generating
$V_{r,k}(t)$ from $Y_{r,k}^{\rm res}(t)$ is, under this reading, a
fixed, invertible linear (Cholesky-whitening) map, and any invertible
linear transformation of an observation is information lossless with
respect to any other random variable, for any underlying joint
distribution -- this follows from invertibility alone and requires no
Gaussianity assumption on $X_{\mathcal A}(t)$. Consequently
$V_{r,k}(t)$ likewise depends only on atom blocks that remain, by
construction, strictly independent of the revealed atoms at every $t$
(cf.~(\ref{eq:interpolation-covariance})). Gaussianity of the
interpolated family is therefore invoked nowhere in this Step~2
construction; it enters the proof only later, and precisely once, at
the Cram\'er--Rao/Fisher-information-deficit comparison of
(\ref{eq:conditional-fisher-deficit})--(\ref{eq:conditional-cramer-rao}),
which is exactly where it belongs since Gaussian extremality is the
conclusion being established, not a standing assumption.
Consequently it preserves the ordered mutual-information
increments, so that
\begin{equation}
b_{r,k}(t)
=
\cI\!\left(
X_{a_{r,k}}(t);
V_{r,k}(t)
\,\middle|\,
\mathcal F_{r,k-1}(t)
\right).
\label{eq:increment-innovation-equivalence}
\end{equation}
Equivalently, for each suffix,
\begin{align}
J_{r,m}(t)
&=
\cI\!\Big(
X_{a_{r,m}}(t),\ldots,X_{a_{r,L_r}}(t);
\nonumber\\[-1mm]
&\hspace{15mm}
V_{r,m}(t),\ldots,V_{r,L_r}(t)
\,\Big|\,
\mathcal F_{r,m-1}(t)
\Big).
\label{eq:suffix-innovation-equivalence}
\end{align}

The interpolation is therefore global, because the same atom
$X_a(t)$
occurs in every receiver chain in which atom
$a$
appears, whereas the MMSE transformations are receiver-local.
This introduces no cooperation among receivers. Synchronization
means only that all receiver-local innovation representations arise from the same interpolated atom family and the same
priority-induced decoding-order collection.

Substituting
(\ref{eq:interpolated-suffix})
into the Abel representation
(\ref{eq:abel-support})
defines the interpolated support functional
\begin{equation}
\Psi_{\thetavec}(t)
\triangleq
\sum_r
\left[
\theta_{u_1} \cdot J_{r,1}(t)
+
\sum_{m=2}^{L_r}
\Delta\theta_{m} \cdot J_{r,m}(t)
\right] \;.
\label{eq:interpolated-support}
\end{equation}
Its endpoints are
\begin{equation}
\Psi_{\thetavec}(0)
=
\thetavec \cdot \mathbf b,
\qquad
\Psi_{\thetavec}(1)
=
\Psi_{\thetavec}^{\rm G},
\label{eq:support-endpoints}
\end{equation}
where
$\Psi_{\thetavec}^{\rm G}$
is the support value generated by mutually independent Gaussian
atoms having exactly the same covariance allocation.

It remains to show that
$\Psi_{\thetavec}(t)$
cannot decrease as the common interpolation moves from the
original atom family toward its Gaussian comparison family.
The next step establishes that monotonicity, together with its equality condition.


\medskip
\noindent
\textit{Step 3: Gaussian monotonicity and its equality condition.}

This proof now compares (\ref{eq:interpolated-support})'s interpolated support functional
 with its Gaussian endpoint.
For each receiver
$r$
and suffix index
$m$,
\begin{equation}
\mathbf V_{r,m}(t)
\triangleq
\begin{bmatrix}
V_{r,m}^{T}(t) &
\cdots &
V_{r,L_r}^{T}(t)
\end{bmatrix}^{T}
\label{eq:suffix-innovation-stack}
\end{equation}
denotes the corresponding stacked MMSE innovation.
Because the interpolation preserves all atom covariances,
the covariance
\begin{equation}
S_{r,m}
\triangleq
\operatorname{cov}\!\left(
\mathbf V_{r,m}(t)
\,\middle|\,
\mathcal F_{r,m-1}(t)
\right)
\label{eq:suffix-innovation-covariance}
\end{equation}
is independent of
$t$.
The receiver-local MMSE may therefore use these
fixed second-order statistics (throughout the
interpolation).

For a random vector
$\cU$
with nonsingular covariance
$S$
and conditioning field
$\mathcal F$,
define its conditional Gaussian Fisher-information deficit \cite{cramer1946,rao1945,Stam1959} by
\begin{equation}
\mathcal D_J(\cU \mid\mathcal F)
\triangleq
J(\cU \mid\mathcal F)-S^{-1},
\label{eq:conditional-fisher-deficit}
\end{equation}
where
$J(\cU \mid\mathcal F)$
is the conditional Fisher-information matrix.
The conditional matrix Cram\'er--Rao inequality \cite{vantrees2001} provides that
\begin{equation}
\mathcal D_J(\cU \mid\mathcal F)
\succeq
\mathbf 0,
\label{eq:conditional-cramer-rao}
\end{equation}
with equality if and only if
$\cU$
is conditionally Gaussian with conditional covariance
$S$
and conditional mean affine in the variables generating
$\mathcal F$.

For notational compactness,
\begin{equation}
\Delta_{r,m}(t)
\triangleq
\mathcal D_J\!\left(
\mathbf V_{r,m}(t)
\,\middle|\,
\mathcal F_{r,m-1}(t)
\right).
\label{eq:fisher-deficit-short}
\end{equation}
Although the conditioning field
$\mathcal F_{r,m-1}(t)$
varies with $t$, this dependence introduces no additional
term in the entropy derivative.  To see this explicitly, let
$\widetilde{\mathbf V}_{r,m}(t)$ denote the residual innovation
after the contribution of the atom blocks generating
$\mathcal F_{r,m-1}(t)$ has been removed.  Then
\begin{equation}
\mathbf V_{r,m}(t)
=
\boldsymbol{\mu}_{r,m}\!\left(\mathcal F_{r,m-1}(t)\right)
+
\widetilde{\mathbf V}_{r,m}(t),
\label{eq:innovation-conditional-translation}
\end{equation}
where
$\boldsymbol{\mu}_{r,m}(\mathcal F_{r,m-1}(t))$
is measurable with respect to
$\mathcal F_{r,m-1}(t)$.
By construction, $\boldsymbol{\mu}_{r,m}(\mathcal F_{r,m-1}(t))$ is the
fixed linear (Cholesky-whitening) functional of the revealed atom
blocks described above, so that $\widetilde{\mathbf V}_{r,m}(t)$ is
itself a fixed linear functional of the strictly unrevealed atom
blocks $\{X_{a_{r,j}}(t)\}_{j> m-1}$ and $Z_r$ alone. Because the
interpolated atom blocks $\{X_a(t)\}_{a\in\mathcal A_{\rm act}}$ remain
mutually independent for every $t$ -- a property that follows directly
from (\ref{eq:atom-interpolation})--(\ref{eq:interpolation-covariance})
and the independence of $X_{\mathcal A}$, $X_{\mathcal A}^{\rm G}$, and
holds regardless of whether $X_{\mathcal A}$ or $X_{\mathcal A}^{\rm G}$
is itself Gaussian -- any two \emph{functions} of disjoint atom-block
subsets are independent random objects. Consequently
$\widetilde{\mathbf V}_{r,m}(t)$ is exactly independent of
$\mathcal F_{r,m-1}(t)$, for every $t\in[0,1]$, without appeal to
Gaussianity of the intermediate interpolated family. (This proof emphasizes
this point because the corresponding claim for a \emph{general
nonlinear} conditional-expectation residual -- i.e., that
$U-\mathbb E[U\mid Y]$ is independent, rather than merely
mean-independent, of $Y$ -- is false for general non-Gaussian $(U,Y)$;
it holds here only because $\widetilde{\mathbf V}_{r,m}(t)$ is a fixed
linear functional of atom blocks that are independent of
$\mathcal F_{r,m-1}(t)$ by construction, not because of any conditional-expectation identity.)
Therefore, conditional translation
invariance of differential entropy and independence give
\begin{align}
h\!\left(
\mathbf V_{r,m}(t)
\,\middle|\,
\mathcal F_{r,m-1}(t)
\right)
&=
h\!\left(
\widetilde{\mathbf V}_{r,m}(t)
\,\middle|\,
\mathcal F_{r,m-1}(t)
\right)
\nonumber\\
&=
h\!\left(
\widetilde{\mathbf V}_{r,m}(t)
\right).
\label{eq:conditional-entropy-innovation}
\end{align}
Thus the realized value of the $t$-dependent conditioning field
does not enter the remaining innovation entropy; all remaining
$t$ dependence is solely through the covariance-preserving
interpolation of the unrevealed atom blocks.  The de Bruijn
identity \cite{Stam1959} therefore applies directly to this
remaining innovation and yields
\begin{equation}
\frac{d}{dt}
h\!\left(
\mathbf V_{r,m}(t)
\,\middle|\,
\mathcal F_{r,m-1}(t)
\right)
=
\frac{
\operatorname{tr}\!\left[
S_{r,m} \cdot \Delta_{r,m}(t)
\right]
}{
2\cdot (1-t)
}.
\label{eq:conditional-debruijn}
\end{equation}
The right-hand side is nonnegative by
(\ref{eq:conditional-cramer-rao}).
It vanishes to $\Delta_{r,m} (t) \rightarrow 0$ precisely when the corresponding conditional
innovation is Gaussian.

For the suffix mutual information
$J_{r,m}(t)$,
the innovation representation
(\ref{eq:suffix-innovation-equivalence})
expresses its derivative as the difference between the
conditional entropy derivatives before and after the suffix
inputs reveal.
For compactness,
\begin{equation}
\mathbf X_{r,m:L_r}(t)
\triangleq
\big(
X_{a_{r,m}}(t),\ldots,X_{a_{r,L_r}}(t)
\big).
\label{eq:suffix-input-vector}
\end{equation}
 Thus
\begin{align}
\frac{d}{dt}J_{r,m}(t)
&=
\frac{1}{2\cdot (1-t)} \cdot 
\Bigl(
T_{r,m}^{(1)}
-
T_{r,m}^{(2)}
\Bigr),
\label{eq:suffix-derivative}
\end{align}

\begin{align}
T_{r,m}^{(1)}
& \isdef
\operatorname{tr}
\!\left[
S_{r,m} \cdot \Delta_{r,m}(t)
\right],
\nonumber\\
T_{r,m}^{(2)}
&\isdef
\operatorname{tr}
\!\left[
S_{r,m}^{(0)} \cdot 
\mathcal D_J
\!\left(
\mathbf V_{r,m}(t)
\,\middle|\,
\mathcal F_{r,m-1}(t),
\mathbf X_{r,m:L_r}(t)
\right)
\right].
\label{eq:trace-terms}
\end{align}
where
$S_{r,m}^{(0)}$
is the covariance of the innovation remaining after the suffix
atoms have been revealed.

The proof at this stage still retains Section \ref{sec3.1}'s complete atomic decomposition. Individual atoms belonging to the same user have not yet been recombined into conventional user layers. Consequently, the interpolation acts on each atomic mutual-information increment separately, each retaining the support coefficient associated with its user. As illustrated by the degraded broadcast channel of Section \ref{sec3.1}, this postpones aggregation of the positive and negative entropy variations until after the Abel-weighted support functional has been formed. The subsequent cancellation therefore compares higher-priority atomic contributions only against lower-priority interference terms, rather than against prematurely aggregated user layers.
Equation
(\ref{eq:suffix-derivative})
need not have a fixed sign for each suffix separately.
The required sign appears only after the complete atom-by-atom support functional forms via the Abel coefficients. This weighted aggregation is precisely the operation that 
Section \ref{sec3.1} illustrates for the degraded BC.
Indeed, differentiating
(\ref{eq:interpolated-support})
and using the nested relation
\[
\mathbf V_{r,m+1}(t)
\subseteq
\mathbf V_{r,m}(t)
\]
causes the second trace at stage
$m$
to cancel with the corresponding first trace at the next
MMSE stage.
This is the MMSE innovation-domain counterpart \cite{kailath2000linear} of the chain-rule
telescoping used in Step~1.

After these cancellations, the derivative assumes the form
\begin{align}
\frac{d}{dt}\Psi_{\thetavec}(t)
&=
\frac{1}{2\cdot (1-t)} \cdot
\sum_r\sum_{m=1}^{L_r}
\gamma_{r,m} \cdot 
\nonumber\\[-1mm]
&\quad\times
\operatorname{tr}
\!\left[
S_{r,m}\cdot \Delta_{r,m}(t)
\right].
\label{eq:support-derivative-fisher}
\end{align}
where
\begin{equation}
\gamma_{r,1}
\triangleq
\theta_{u_1},
\qquad
\gamma_{r,m}
\triangleq
\Delta \theta_{m},
\quad
m=2,\ldots,L_r .
\label{eq:abel-coefficients}
\end{equation}
Every
$\gamma_{r,m}$
is nonnegative by
(\ref{eq:ordered-weights}).
Consequently,
\begin{equation}
\frac{d}{dt}\Psi_{\thetavec}(t)
\ge
0,
\qquad
0\le t<1.
\label{eq:support-monotonicity}
\end{equation}
Because the atomic decomposition has been retained throughout the interpolation, every surviving higher-priority contribution enters (\ref{eq:support-derivative-fisher}) with its original support coefficient, while any opposing receiver interaction can arise only through lower-priority surviving atoms (or, in the general MUC, this includes the remaining $\cbI_{min}$ bottlenecks). Consequently, every term in (\ref{eq:support-derivative-fisher}) is nonnegative.
Integration from
$t=0$
to
$t=1$
therefore gives
\begin{equation}
\Psi_{\thetavec}(0)
\le
\Psi_{\thetavec}(1)
=
\Psi_{\thetavec}^{\rm G}.
\label{eq:gaussian-support-bound}
\end{equation}
Thus the mutually independent Gaussian atom family with the same
covariance allocation attains a support value no smaller than
that of the original atom family.

The equality condition in
(\ref{eq:gaussian-support-bound})
is the essential part of the argument.
If the original atom family itself maximizes the selected support
functional, then equality must hold in
(\ref{eq:gaussian-support-bound}), and hence
\begin{equation}
\frac{d}{dt}\Psi_{\thetavec}(t)
=
0
\quad\text{for almost every }t\in[0,1).
\label{eq:zero-support-derivative}
\end{equation}
Since every term in
(\ref{eq:support-derivative-fisher})
is nonnegative, each term having
$\gamma_{r,m}>0$
must vanish separately:
\begin{equation}
\Delta_{r,m}(t)
=
\mathbf 0
\label{eq:stagewise-fisher-equality}
\end{equation}
for almost every
$t$
and every active receiver stage.

By the equality condition in the conditional matrix
Cram\'er--Rao inequality \cite{cramer1946}, \cite{rao1945},
(\ref{eq:stagewise-fisher-equality})
requires each positively weighted conditional MMSE innovation
to be Gaussian.
Equivalently, in conditional I--MMSE form, its nonlinear MMSE
coincides with its linear-MMSE value:
\begin{align}
&\operatorname{mmse}\!\left(
\mathbf V_{r,m}(t)
\,\middle|\,
\mathcal F_{r,m-1}(t)
\right)
\nonumber\\[-1mm]
&\qquad =
\operatorname{lmmse}\!\left(
\mathbf V_{r,m}(t)
\,\middle|\,
\mathcal F_{r,m-1}(t)
\right) \;l ,
\label{eq:mmse-lmmse-equality}
\end{align}
which is the MMSE-GDFE of \cite{cioffi1997gdfe}, \cite{cioffi1994gdfe}.
This equality can occur only when the corresponding conditional
innovation is Gaussian, apart from a zero-covariance component.

Step~3 has therefore established two facts.
First, Gaussian replacement cannot decrease the synchronized
support functional. Second, equality at a support optimum forces
Gaussianity at every positively weighted innovation stage once
one terminal atom's Gaussian stage is available to initiate the
successive equality recursion. Step~4 combines the terminal atom and the resulting
finite recursive exhaustion.

\medskip
\noindent
\textit{Step 4: Terminal-atom initiation and finite Gaussian
exhaustion.}

It remains to show that Step~3's  stage-wise equality condition is not merely local, but propagates through the complete
synchronized atom system.

By Lemma~\ref{lem:terminalatom}, the strict priority ordering
contains a unique terminal private atom, denoted
$a_\star$.
At every receiver that decodes this terminal atom, all atoms preceding it
in that receiver's synchronized chain rule are already
in the conditioning field. Hence its terminal increment has the
form
\begin{equation}
b_\star
=
\cI\!\left(
X_{a_\star};
\widetilde Y_\star
\right),
\label{eq:terminal-increment}
\end{equation}
where
\begin{equation}
\widetilde Y_\star
=
G_\star \cdot X_{a_\star}
+
Z_\star
\label{eq:terminal-effective-channel}
\end{equation}
is an equivalent point-to-point linear Gaussian channel\cite{cioffi1994gdfe},\cite{cioffi1997gdfe}.
The covariance of
$X_{a_\star}$
is fixed by the chosen covariance allocation.

The Gaussian capacity theorem for
(\ref{eq:terminal-effective-channel})
therefore gives
\begin{equation}
R_\star
\le
R_\star^{\rm G},
\label{eq:terminal-gaussian-bound}
\end{equation}
with equality only when the nonzero component of
$X_{a_\star}$
is Gaussian.
Because the selected support point attains equality in
(\ref{eq:gaussian-support-bound}), the positively weighted terminal
stage must also attain equality. Consequently,
\begin{equation}
X_{a_\star}
\ \text{is Gaussian on its nonzero covariance subspace}.
\label{eq:terminal-atom-gaussian}
\end{equation}
This supplies the base case for the recursive propagation.

This proof concludes now by considering any receiver chain containing
$a_\star$.
Once
$X_{a_\star}$
is Gaussian, its contribution to every earlier-conditioned
innovation in that chain is Gaussian.
Consideration now moves one stage backward in the corresponding MMSE-GDFE order.
At that preceding stage, all innovations occurring after it are
already Gaussian, while Step~3 requires equality in the
conditional Fisher-information bound:
\begin{equation}
\mathcal D_J\!\left(
\mathbf V_{r,m}
\,\middle|\,
\mathcal F_{r,m-1}
\right)
=
\mathbf 0.
\label{eq:recursive-fisher-equality}
\end{equation}
The equality condition then forces the newly exposed innovation,
and hence the nonzero component of its associated atom
$X_{a_{r,m}}$,
also to be Gaussian.

Thus, if all atoms appearing after stage
$m$
in a receiver's synchronized order are already
Gaussian, equality at stage
$m$
implies that
\begin{equation}
X_{a_{r,m}}
\ \text{is Gaussian on its active covariance subspace}.
\label{eq:backward-gaussian-step}
\end{equation}
This proves backward Gaussian propagation along that receiver
chain.

Whenever additional receivers decode the newly Gaussian atom,
the same physical atom appears in those receiver-local
GDFE chains.
Its Gaussianity is therefore transferred without modification to
each such chain.
At every one of those receivers it provides a later Gaussian
innovation, and the equality condition
(\ref{eq:recursive-fisher-equality})
then forces the immediately preceding active atom to be Gaussian.
The recursion consequently moves both backward within each receiver order
across receiver orders through shared multicast atoms, as in Figure \ref{fig:example}'s example, moving column-wise to the right.

This involves no receiver cooperation: the coupling arises solely
because the same atom codeword is present in every receiver chain
that decodes this atom.
To state the induction explicitly, 
$\mathcal G_q$
denotes the collection of atoms proved Gaussian after
$q$
recursive stages.
The terminal step gives
\begin{equation}
\mathcal G_0
=
\{a_\star\}.
\label{eq:gaussian-base-set}
\end{equation}
Given
$\mathcal G_q$, this chain
includes in
$\mathcal G_{q+1}$
every active atom that immediately precedes an atom of
$\mathcal G_q$
in at least one synchronized receiver order and whose Abel
coefficient is positive. Then
\begin{equation}
\mathcal G_q
\subseteq
\mathcal G_{q+1},
\label{eq:gaussian-set-monotonicity}
\end{equation}
and the equality condition from Step~3 implies that every atom
added at stage
$q+1$
is Gaussian.

Section~\ref{sec2} establishes that the complete atom representation is
finite. Hence the increasing sequence
\[
\mathcal G_0
\subseteq
\mathcal G_1
\subseteq
\cdots
\]
must terminate after finitely many stages.
If a nonzero-covariance atom remains outside the terminally
generated set, then its positively weighted innovation stage would
retain a strictly positive Fisher-information deficit in
(\ref{eq:support-derivative-fisher}), contradicting equality of
the original support point with its Gaussian endpoint.
Therefore the recursion exhausts every surviving active atom:
\begin{equation}
\bigcup_{q\ge0}\mathcal G_q
=
\mathcal A_{\rm act}.
\label{eq:gaussian-exhaustion}
\end{equation}

It follows that, for the fixed feasible covariance allocation,
every support-maximizing active atom block is Gaussian.
Because the atom codebooks are independent, the complete active
atom vector consists of mutually independent Gaussian blocks with
their prescribed covariance matrices.

Finally, the receiver-minimum definition is unaffected by the
propagation. Each surviving multicast atom retains the receiver minimum-information
increment
\begin{equation}
\cI_{\min}(a)
=
\min_{r\in\mathcal D_a}
\cI\!\left(
X_a;
Y_r
\,\middle|\,
\mathcal F_{r,a}
\right),
\label{eq:surviving-imin}
\end{equation}
now evaluated under the resulting Gaussian atom family.
Atoms whose covariance is zero contribute no rate and may be
deleted from the active collection.

Thus, for every strict supporting priority and every fixed
feasible covariance allocation, mutually independent Gaussian atom
blocks maximize the synchronized support
functional. The unique terminal private atom initiates the equality
recursion, and that recursion propagates through the synchronized
receiver chains until the finite active atom set is exhausted. {\bf QED.}
\end{proof}

Optimization over the feasible atom-covariance allocations may
assign zero covariance to atoms incompatible with the selected optimized
support. Such atoms then contribute neither energy nor
rate.

\subsection{Equal Priorities and Boundary Faces
\label{sec3.4}}

Theorem~\ref{thm:gaussianpropagation} establishes Gaussian extremality for every strict priority ordering because it retains the complete atomic decomposition throughout the interpolation. This subsection considers the remaining case in which one or more adjacent priority values are equal. Rather than introducing a new interpolation argument, equal-priority supporting vectors occur through limits of compatible strict-order perturbations. Consequently, the atomic weighting and Gaussian propagation established in Theorem~\ref{thm:gaussianpropagation} remain unchanged, with only the supporting point replaced by a supporting face.
With possible equalities, $\thetavec$ generalizes so that 
\begin{equation}
\theta_{u_U}
\ge
\cdots
\ge
\theta_{u_2}
\ge
\theta_{u_1}
\ge
0
\label{eq:nonstrict-priority-order}
\end{equation}
contains one or more ties. The tied components define a face of
the supporting hyperplane rather than a unique exposed point.
Correspondingly, the support functional does not distinguish
among user permutations within any tied-priority class.

The notation $\mathcal P(\thetavec)$
defines the finite collection of strict user orders that arise from user
permutations only within tied priority classes, while preserving
the order between distinct classes.  For each
\(\pi\in\mathcal P(\thetavec)\), a sequence of strict
priority vectors is
\begin{equation}
\thetavec^{(\pi,\epsilon)}
\longrightarrow
\thetavec,
\qquad
\epsilon\downarrow0,
\label{eq:strict-priority-perturbation}
\end{equation}
whose component ordering is \(\pi\).
Theorem~\ref{thm:gaussianpropagation} applies to every
\(\thetavec^{(\pi,\epsilon)}\). 
In particular, the atom-by-atom weighting established in the strict-order proof, including the comparison between higher-priority atomic contributions and lower-priority interference terms, still applies for every perturbation. The limiting argument therefore inherits the same Gaussian extremality without requiring a separate derivative calculation for tied priorities.
Hence, for each such strict
perturbation and each fixed feasible covariance allocation, the
corresponding support optimum has mutually independent
Gaussian active atoms.
Because the feasible covariance set is closed and bounded under
the transmitter covariance constraints, every sequence of these
Gaussian operating points contains a convergent subsequence.
The rate vector $\mathbf b^{(\pi)}$
denotes any limiting rate vector obtained as
\(\epsilon\downarrow0\).
 Gaussian mutual-information expressions' continuity yields
\begin{equation}
\thetavec \cdot \mathbf b^{(\pi)}
=
\max_{\mathbf b\in\mathcal C}
\thetavec \cdot \mathbf b .
\label{eq:tied-order-support}
\end{equation}
Thus every limiting strict order compatible with the tied
priority classes produces a Gaussian operating point on the same
supporting face.

A general point on that face need not correspond to a single
strict order.  With 
\[
\lambda_\pi\ge0,
\qquad
\sum_{\pi\in\mathcal P(\thetavec)}
\lambda_\pi=1 \; ,
\]
then a rate vector can form as
\begin{equation}
\mathbf b
=
\sum_{\pi\in\mathcal P(\thetavec)}
\lambda_\pi \cdot 
\mathbf b^{(\pi)} .
\label{eq:tied-face-convex-combination}
\end{equation}
Since every
\(\mathbf b^{(\pi)}\)
has the same support value,
\begin{align}
\thetavec \cdot \mathbf b
&=
\sum_{\pi\in\mathcal P(\thetavec)}
\lambda_\pi \cdot 
\thetavec \cdot \mathbf b^{(\pi)}
\nonumber\\
&=
\max_{\widetilde{\mathbf b}\in\mathcal C}
\thetavec \cdot \widetilde{\mathbf b}.
\label{eq:tied-face-support-value}
\end{align}
Consequently, synchronized time sharing among the compatible
strict-order Gaussian operating points generates the complete
supporting face.

Equivalently, introduce a finite-valued synchronized time-sharing
variable \(Q\), independent of the messages and channel noises,
with
\begin{equation}
\Pr\{Q=\pi\}
=
\lambda_\pi
\qquad
\pi\in\mathcal P(\thetavec).
\label{eq:tied-order-timesharing}
\end{equation}
Conditioned on
\(Q=\pi\),
all active atom blocks are mutually independent Gaussian random
vectors having the covariance allocation associated with order
\(\pi\), and all receivers use the corresponding synchronized
decoding orders. The resulting rates satisfy
\begin{equation}
b_a
=
\sum_{\pi\in\mathcal P(\thetavec)}
\lambda_\pi \cdot b_a^{(\pi)} .
\label{eq:tied-atom-rates}
\end{equation}
The unconditional input distribution may therefore be a finite
mixture of Gaussian operating points, but it is Gaussian
conditioned on the common time-sharing variable \(Q\).
This is the appropriate Gaussian representation of a
higher-dimensional boundary face.

The same argument includes zero components of
\(\thetavec\).
Rates assigned zero support weight may vary within the supported
face without changing the selected functional's value.
Such variation occurs through synchronized time sharing among
compatible strict perturbations and, when necessary, by assigning
zero covariance to atoms that do not contribute to the selected
face.

This therefore establishes the following conclusion:
\begin{corollary}[Gaussian Boundary Representation]
\label{cor:gaussian-boundary}
For every nonnegative supporting vector
\(\thetavec\), including vectors with tied or zero components,
every point on the corresponding supporting face of a linear
Gaussian multiuser capacity region is representable by
synchronized time sharing among finitely many strict-order
operating points. Conditioned on the synchronized time-sharing
variable, every active atom block is mutually independent
Gaussian, and every surviving multicast atom retains its
receiver-minimum constraint through
\(\cI_{\min}\).
\end{corollary}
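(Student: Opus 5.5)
The plan is to assemble the corollary from Theorem~\ref{thm:gaussianpropagation} and the strict-order perturbation construction of Section~\ref{sec3.4}, using convex geometry for the face-filling step. First, fix a nonnegative supporting vector $\thetavec$ and let $F(\thetavec)\triangleq\arg\max_{\mathbf b\in\mathcal C}\thetavec\cdot\mathbf b$ be its supporting face. $F(\thetavec)$ is a compact convex subset of $\R^U$: $\mathcal C$ is closed and bounded under the per-user trace constraints, and it is convex by Theorem~\ref{capregion}.

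By the Krein--Milman theorem together with Carath\'eodory's theorem, every point of $F(\thetavec)$ is a convex combination of at most $U+1$ extreme points of $F(\thetavec)$. It therefore suffices to show two things:
\begin{enumerate}
\item every extreme point of $F(\thetavec)$ is a limit point $\mathbf b^{(\pi)}$ of strict-order Gaussian operating points;
\item the resulting convex combination is realized by the synchronized time-sharing variable $Q$ of (\ref{eq:tied-order-timesharing}).
\end{enumerate}

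For the first step, take an extreme point $\mathbf e$ of $F(\thetavec)$. By Straszewicz's theorem, exposed points are dense in the extreme points, so I would first treat an exposed point of $F(\thetavec)$ and then pass to limits using closedness. An exposed point of the face is the unique maximizer over $\mathcal C$ of a lexicographic functional: first $\thetavec$, then some secondary direction $\boldsymbol\eta$. Hence it is the limit, as $\epsilon\downarrow0$, of maximizers of $\thetavec+\epsilon\boldsymbol\eta$.

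Next, I would choose $\boldsymbol\eta$ so that $\thetavec+\epsilon\boldsymbol\eta$ is strictly ordered, strictly positive, and preserves the order between distinct tied classes of $\thetavec$. Adding a tiny positive multiple of $\mathbf 1$ handles the zero components, and a generic small rotation breaks any remaining ties. This choice is possible because exposing directions of the point form an open cone in the face's normal space. The perturbed vector then has the form $\thetavec^{(\pi,\epsilon)}$ of (\ref{eq:strict-priority-perturbation}) for some $\pi\in\mathcal P(\thetavec)$.

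Theorem~\ref{thm:gaussianpropagation} applies to each $\thetavec^{(\pi,\epsilon)}$. Optimizing over the compact feasible covariance set, it yields maximizers consisting of independent Gaussian active atom blocks. Compactness gives a convergent subsequence of covariance allocations, and continuity of Gaussian log-det mutual-information expressions in the covariances gives convergence of the rates to a Gaussian operating point. Uniqueness of the lexicographic maximizer then identifies that limit with $\mathbf e$, and (\ref{eq:tied-order-support}) places it on $F(\thetavec)$.

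For the second step, I would write a general face point as in (\ref{eq:tied-face-convex-combination}), with finitely many terms by Carath\'eodory. I would then set $\Pr\{Q=\pi\}=\lambda_\pi$ and run the corresponding Gaussian covariance allocation and synchronized decoding orders in each $Q$-state. Theorem~\ref{achieve:shanclose} shows that this synchronized scheme achieves the atom rates (\ref{eq:tied-atom-rates}). Conditioned on $Q$, every active atom block is independent Gaussian. Each surviving multicast atom's rate is bounded by the $Q$-averaged receiver minimum, exactly as in (\ref{eq:Imin-average}), so the $\cI_{\min}$ constraint is retained. Atoms with zero covariance in a given state contribute no rate there and are deleted, which covers the zero-weight coordinates.

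The main obstacle is the first step: showing that the rate limits of strict-order Gaussian maximizers exhaust the extreme points of the face, rather than only some of them. The difficulty is that the maximizer of a perturbed functional need not be unique, and the limit of a chosen maximizer sequence need not be the intended extreme point. I would handle this through the lexicographic/exposed-point argument above, followed by a closure step. The closure step needs two ingredients:
\begin{itemize}
\item the set of Gaussian-achievable rate vectors on $F(\thetavec)$ is closed, which follows from compactness of the covariance set and continuity;
\item the limits of exposed points recover all extreme points.
\end{itemize}
A secondary check is that the order between distinct tied classes is preserved under perturbation, so that the decoding orders remain compatible with $\mathcal P(\thetavec)$.
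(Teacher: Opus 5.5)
There is a genuine gap in Step~1 at the zero components of $\thetavec$. The tie-breaking justification is also wrong as stated, though that part can be repaired.

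\textbf{Zero components.} Adding $c\mathbf 1$ to $\boldsymbol\eta$ is not harmless. When zero-weight rates vary along $F(\thetavec)$, $\mathbf 1$ is not orthogonal to the face's direction space, so the shift changes which face point is exposed. Some extreme points are exposed by no $\boldsymbol\eta$ that keeps $\thetavec+\epsilon\boldsymbol\eta$ positive.

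Take a two-user Gaussian MAC and $\thetavec=(1,0)$. Then $F(\thetavec)$ is the segment from $(b_1^{\max},0)$ to $(b_1^{\max},\beta)$, where $\beta>0$ is user~2's rate when it is decoded first. Every strictly positive perturbation is a multiple of $(1,t)$ with $t>0$. Your own lexicographic argument, with $\boldsymbol\eta=(0,1)$, sends all its maximizers to $(b_1^{\max},\beta)$ as $t\downarrow0$. So the extreme point $(b_1^{\max},0)$ is not a limit of strict-order maximizers, and the Step~1 claim fails whenever the face extends along zero-weight coordinates. Your Step~2 remark about deleting zero-covariance atoms does not reach such points.

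The missing ingredient is downward closedness. Each operating point is the whole polytope $0\le b_a\le\cI_{\min}$, so rate reduction is free. Step~1 therefore need only reach a point of $\mathcal C$ that dominates $\mathbf e$. Any such point lies in $F(\thetavec)$ automatically and differs from $\mathbf e$ only in zero-weight coordinates. The paper's remark that zero-weight rates ``vary within the same supported face'' stands in for exactly this, but your argument has to establish it.

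\textbf{Ties.} The directions exposing $\mathbf e$ form an open set only when the normal cone of $F(\thetavec)$ at $\mathbf e$, taken within the face's span, has nonempty interior. That holds at vertices of polyhedral faces and at endpoints of segments. At a smooth point of a curved face of dimension at least two, the exposing directions are one in-face ray plus the face's orthogonal complement, and this set can lie inside a tie hyperplane. For example, with $\thetavec=(1,1,1)$ and in-face normal proportional to $(1,1,-2)$, every exposing $\boldsymbol\eta$ has $\eta_1=\eta_2$.

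The repair uses upper semicontinuity of the argmax. Maximizers of $\thetavec+\epsilon\boldsymbol\eta+\boldsymbol\delta$ stay near those of $\thetavec+\epsilon\boldsymbol\eta$, hence near $\mathbf e$, for all small $\boldsymbol\delta$. A generic small $\boldsymbol\delta$ then breaks the ties.

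\textbf{Comparison with the paper.} The paper never decomposes into extreme points. It takes one limit $\mathbf b^{(\pi)}$ per compatible order $\pi\in\mathcal P(\thetavec)$ and time-shares among those. That shows such combinations stay on the face, but not that they fill it. Your extreme-point reduction targets exactly that unproved step.

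A shorter route avoids both problems. By Theorem~\ref{thm:gaussianpropagation} together with the covariance optimization you already use, $\mathcal C$ and the convex hull $\mathcal G$ of all strict-order Gaussian rate polytopes have the same support function on the dense set of strictly ordered positive vectors. By continuity they agree on the whole nonnegative orthant. Both sets are compact, convex, and downward closed, so $\mathcal C=\mathcal G$. In any Carath\'eodory decomposition of a face point, each component point with positive weight then already lies on $F(\thetavec)$.
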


\begin{proof}
Let
$\thetavec\succeq\mathbf 0$
be any supporting vector, possibly with tied or zero components.
Perturbations of the tied components by arbitrarily small distinct amounts
obtains strict priority vectors
$\thetavec^{(\pi,\epsilon)}$
whose orders
$\pi$
range over the finite set
$\mathcal P(\thetavec)$
of permutations compatible with the untied priority classes.

For every
$\pi$
and every
$\epsilon>0$,
the strict-order interpolation, atomic weighting, and Gaussian propagation of Theorem~\ref{thm:gaussianpropagation} apply without modification.
Along any sequence
$\epsilon\downarrow0$,
compactness of the feasible covariance set and continuity of the
Gaussian mutual-information expressions yield a limiting Gaussian
operating point
$\mathbf b^{(\pi)}$
that satisfies
\[
\thetavec \cdot \mathbf b^{(\pi)}
=
\max_{\mathbf b\in\mathcal C}
\thetavec \cdot \mathbf b .
\]
Hence every compatible limiting strict order lies on the same
supporting face.
Because that face is convex, synchronized time sharing among
these limiting Gaussian operating points produces every convex
combination
\[
\mathbf b
=
\sum_{\pi\in\mathcal P(\thetavec)}
\lambda_\pi \cdot \mathbf b^{(\pi)},
\qquad
\lambda_\pi\ge0,
\quad
\sum_\pi\lambda_\pi=1,
\]
without changing the support value.
Equivalently, conditioned on the common time-sharing variable
$Q=\pi$, the active atom blocks are mutually independent Gaussian
and the receivers use the synchronized strict order associated
with
$\pi$.
The unconditional distribution is therefore a finite Gaussian mixture,
while the conditional atom distribution remains Gaussian for every value
of
$Q$.

Zero components of
$\thetavec$
cause no difficulty: rates having zero support weight may vary
within the same supported face and are generated by the same
synchronized time-sharing construction.
Surviving multicast atoms retain their receiver-minimum rates
because the time sharing changes neither their decoding sets nor
the definition of
$\cI_{\min}$.
\end{proof}

Because the complete atom collection, the compatible receiver-order set, and the corresponding strict-order Gaussian operating points are all finite, any supporting face requires only finite synchronized time sharing. Together with
Theorem~\ref{thm:gaussianpropagation}, this extends Gaussian
optimality from exposed strict-priority points to all boundary
faces generated by nonnegative supporting vectors.

\subsection{Relation to Classical Gaussian Extremality}
\label{sec3.5}

Theorem~\ref{thm:gaussianpropagation} establishes Gaussian
extremality for the complete synchronized support functional
induced by the atomic representation. This statement is more
general than the classical Gaussian extremality results developed
for particular multiuser channels, since it applies directly to
Section~\ref{sec2}'s synchronized receiver-order representation.

The essential structural distinction is that the present proof uses a finite collection of active information
atoms. The support functional first decomposes into
receiver-local chain-rule increments. Each receiver admits its own
MMSE-GDFE representation, while the shared multicast atoms couple
these receiver-local recursions through the common atomic
structure. Gaussianity initiates through the terminal private atom
and then propagates recursively through the finite decoding
orders until every active atom is Gaussian. Because only finitely
many active atoms exist for any network, the recursion always
terminates after finitely many propagation steps.

The classical degraded Gaussian multiple-access channel appears as the
simplest special case of this construction. There, the receiver
order is unique, every active atom is private, the receiver
minimum operator reduces to a single mutual information, and the
receiver-local recursion collapses to one successive decoding
chain. Consequently, the synchronized support functional becomes
the familiar weighted sum rate, and
Theorem~\ref{thm:gaussianpropagation} reduces to the Gaussian
extremality. Thus, the
present argument does not replace the classical multiple-access-channel 
proof; rather, it embeds that proof within a more general
recursive framework that also accommodates arbitrary collections
of synchronized receiver orders.

The same recursive argument applies without modification to the
broadcast channel, interference channel, and other, arbitrarily
structured Gaussian multiuser channels represented by
Section~\ref{sec2}'s complete atomic expansion -- that is, what varies
across these examples is the multiuser \emph{topology} (which atoms are
active and which receivers share them), not the Gaussian assumption on
the channel, which Theorem~\ref{thm:gaussianpropagation} holds fixed
throughout. The differences among these
channels arise only through the active atom set, the induced
receiver orders, and the corresponding receiver-minimum mutual
information constraints. Once these are fixed, the Gaussian
extremality proof is identical.

The next section (\ref{sec4}) illustrates this unified
characterization through representative channel examples and
demonstrates how the finite atomic representation provides a
common structural framework for Gaussian capacity-region
optimization across these classical multiuser models.
\section{Channel Examples\label{sec4}}

This section pursues several examples and provides both their Gaussian BC and IC capacity regions.  
This section thereby illustrates the constructive design implications of Section~\ref{sec3}'s Gaussian extremality results. With Gaussianity now established through the atomic representation and terminal-atom recursion, the corresponding MMSE-GDFE and dual MAC/BC constructions provide explicit transmitter and receiver designs together with representative capacity regions.
Subsection \ref{sec4.1} investigates a simple degraded 2-user BC, which previously once had challenged proof Gaussian exhaustion.   
Subsection \ref{sec4.1} also provides the corresponding GDFE implementation for a particular capacity-region surface point.
This illustrates that the correct BC receivers' stacked GDFEs (or equivalent stacked nonlinear precoders) are MMSE based and specifically not simple subtraction using the channel $H$. These instead use proper matrix filtering, which consequently admits a simple atomization.  

Subsection \ref{sec4.2} progresses to a full-rank $2 \times 2$ BC example.    
This full-rank example's capacity region appears along with indications of both terminal atoms' subregion boundaries and a time-sharing region.   
The MMSE-GDFE implementation then uses a stacked $4 \times 4$ nonlinear precoder (or two rows of that stacked triangular matrix at each of the two receivers).  
Thus, Subsection \ref{sec4.2}'s result helps illustrate both the MMSE-mutual information relationships and also that the $2 \times 2$ BC {\bf does not} just use "$2 \times 2$ QR factorization" with canonical design.  

Subsection \ref{sec4.3} proceeds to a rank-defficient $3 \times 2$ three-user BC (which has a canonically designed MMSE-GDFE $6 \times 6$ precoder) on each of the $64$ subcarriers used for this Gaussian channel that has both MIMO and intersymbol interference effects.  The rate-sum improvement of the canonical design over present linear-MIMO approaches is very high for a simple and reasonably practical channel.   

Subsection \ref{sec4.4} returns to the $U=3$ IC and provides both a canonical design and its $3 \times 3$ capacity region.  Subsection \ref{sec4.5} briefly investigates MUCs that bridge the gap between the IC and the MAC/BC.  Subsection \ref{sec4.4} also provides a $2 \times 2$ IC capacity region when the users have band-limited channels (intersymbol interference). Subsections \ref{sec4.6} and \ref{sec4.7} respectively investigate single-user equivalents and rate sums. 

\subsection{Structural versus probabilistic complexity\label{sec4.1}}

{\bf Example 1: Degraded Broadcast Channel:} Section~\ref{sec3.1.1} used the degraded BC to illustrate the atomic representation underlying the Gaussian extremality proof. This subsection now returns to the same channel from a constructive viewpoint, illustrating the corresponding MMSE-GDFE implementation and capacity region.  This example illustrates how the terminal atom immediately reduces to a single-user Gaussian channel, after which the remaining Gaussian recursion follows directly, making the degraded BC the general theory's simplest manifestation.

The degraded $U=2$-user BC always has just one terminal atom, which is identified by the largest of the two channel gains.  The capacity region uses the specific boundary-point's energies to trace the corresponding capacity region.  
A simple degraded broadcast channel has 
\begin{eqnarray}
Y_1 & = & 2 \cdot X + N_1 \\ 
Y_2 & = & X + N_2
\end{eqnarray}
where both $N_1$ and $N_2$ are independent zero-mean unit-variance Gaussian noises.  This leaves
\[
H= \left[ \begin{array}{c} 2 \\ 1 \end{array} \right] \;.
\]
After recombination, any broadcast channel requires only the $U$ conventional user layers $\left\{ (i,\{i\} ) \right\}$ for implementation. Section \ref{sec3}'s Gaussian extremality proof represents these user layers by their complete atomic decomposition.
Any broadcast channel need activate only $U$ atoms $\{ (i, \{ i \} ) \}$ where $i \in [1:U ]$.  The $U \times 1$ degraded broadcast channel has a terminal atom that is $( i^* , \{ i^* \} )$ where $i = \mbox{arg} \max_{i} | h_i |$.  
This degraded BC has $i^* = 1$ for all decoders except when $X_1 = 0$, meaning it reduces to a single-user (user 2) channel.  Both receivers decode user~1 last and user~2 first.   In general, a $U$-user BC can have as many as $U$ terminal atom choices that depend on the priority vector $\thetavec$'s nonnegative elements relative size
as in Section \ref{sec4.2}.   Figure \ref{fig:BCdegraded} illustrates the specific example channel's capacity region for 1.5 BC-transmit-energy units. 

For this degraded BC, a design may desire a rate vector of 
\[
\mathbf b = \left[ \begin{array}{c} 1 \\ 1 \end{array} \right] \; .
\]
The following matlab commands generate a minimum-sum-energy design for this specific rate vector, using publicly available software at \cite{cioffi_ee379}.  The design finds these input energies  using duality (\cite{Vishwanath2003MIMODuality} (again, for more detail on this and software, see \cite{cioffi_ee379}).   The design finds a dual MAC \cite{Vishwanath2003MIMODuality}, optimizes energy sum for this rate vector, converts the MAC energies to BC energies, and finds the receiver settings.
\begin{verbatim}
>> Hbc=[2 ; 1];
>> Hmac=flip(Hbc)';
>> Lxu=[1 1];
>> bu_min = [1 1];
>> w=[1 1];
cb=1;
>> [FEAS_FLAG, bu_a, info] = minPMACmimo ...
    (Hmac, Lxu, bu_min, w, cb)
bu_a =  1.0000    1.0000
>> info.Eun % =
    1.0000
    0.5000
>> info.theta' %  =
    5.0000   2.0000 
>>  info.bun % =
    1.0000
    1.0000
>> Rxxm=zeros(1,1,2);
>> Rxxm(1,1,1)=info.Eun(1);
>> Rxxm(1,1,2)=info.Eun(2);
HmacE=zeros(1,1,2);
HmacE(1,1,1)=Hmac(1);
HmacE(1,1,2)=Hmac(2);
>> Rxxb=mac2bc(Rxxm,HmacE)

Rxxb(:,:,1) =    0.2500
Rxxb(:,:,2) =    1.2500
>> [Bu, GU, S0, MSWMFunb, B, MSWMFU] 
     = mu_bc(Hbc, [sqrt(Rxxb(1,1,1)) 
        sqrt(Rxxb(1,1,2))], [1 1], 1)

Bu  %=    1.0000    1.0000
GU % =  2x1 cell array
    {[1 2.2361]}
    {[     0 1]}
S0 % =  2x1 cell array
    {[2.0000]}
    {[2.0000]}
MSWMFunb % =  2x1 cell array
    {[1.0000]}
    {[0.8944]}
\end{verbatim}
The two user energies are 1 unit for user~1 and 0.5 unit for user~2.  Thus, further design and the capacity region use a total of 1.5 units of BC transmit energy/channel-use. 

The above quantity ``GU'' is a $2 \times 2$ feedback section that first decodes user~2 at receiver~1, and then decodes user~1 at that same receiver.  The last decoding is of the terminal atom and corresponds to a single-user Gaussian channel.
 Receiver~1 multiplies user~2's decision by 2.2361 and then subtracts it from user-channel~1's output.   
 
 The MSWFMunb shows that Receiver~2 multiplies its channel output by .8944 before decoding user~2 with user~1 as interference.  The decisions on user 2 may occur over an infinitely long sequence, ensuring there are no errors propagating asymptotically. 
\begin{figure*}[t]
\begin{framed}
\centering
\includegraphics[width=0.9\linewidth]{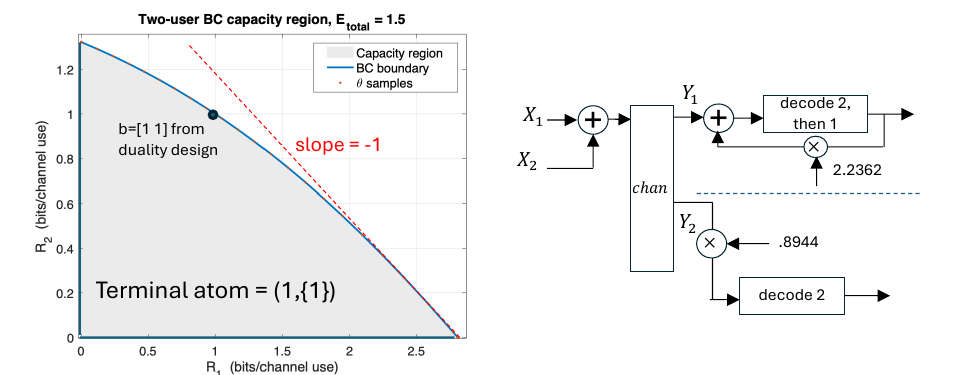}
\caption{Example $U=2$ Degraded Broadcast Channel Capacity Region.}
\label{fig:BCdegraded}
\end{framed}
\end{figure*}
Figure \ref{fig:BCdegraded}'s capacity-region slope magnitude is less than one for all $\thetavec$ except at the horizontal axis intercept, at which only one user is active.   
Such is characteristic of degraded BCs. 
The original $\mathbf b = [ 1 \;  ; \: 1 ]$ vector appears for the unequal ``info.theta'' values above, and the slope magnitude is their ratio at that point.   This is not the rate-sum maximum, which is instead the horizontal axis intersection point at about $b_1 = 2.8$.  

\subsection{Full Rank BC Example \label{sec4.2}}

Unlike the degraded BC, a full-rank broadcast channel may possess multiple terminal-atom regions as the supporting priority vector varies. This example illustrates how the same terminal-atom mechanism applies, while the active Gaussian recursion changes with the exposed support region.

A second $2 \times 2$ BC has full rank and a two-dimensional channel input:
\[
H =\left[ \begin{array}{cc} 1 & 2 \\ 1 & .5 \end{array} \right] \; .
\]
 Figure \ref{fig:BCnondegraded}'s MMSE-GDFE/chain-rule approach leads to a stacked $4 \times 4$ feedback section (or $4 \times 4$ triangular nonlinear precoder).  This could also be implemented as in Figure \ref{fig:BCdegraded} with two receiver-side GDFEs, but this time the figure shows the transmitter's nonlinear precoder (instead of receivers' feedback sections). 
 The two implementations have the same performance.
 This $4 \times 4$ nonlinear precoder $1/G$ has a nonlinear element that preserves average transmit energy for each dimension (implementations of this standard lossless precoder appear in \cite{cioffi_ee379}).   The receiver has identical nonlinear energy-preserving elements (or their equivalents).  The first dimension corresponds to the terminal atom's first dimension and needs no such nonlinearity, but the other 3 of 4 dimensions need the nonlinearity.  The processing is 4 dimensional (so any $2 \times 2$ ``QR'' factorization implementations are suboptimal, even though they may appear in other work).   Two outputs at a time from the precoder, first two outputs for user 2, then two more outputs for user 1 occur.  Each pair then enters its own $2 \times 2$ matrix filter before those two 2-dimensional filter outputs add to form a single 2-dimensional BC input.   

Figure \ref{fig:BCnondegraded} also shows the capacity region.  Both boundaries correspond to different $\thetavec$ orderings and appear in different colors.  A time-shared blue boundary also appears. 
\begin{figure*}[t]
\begin{framed}
\centering
\includegraphics[width=0.9\linewidth]{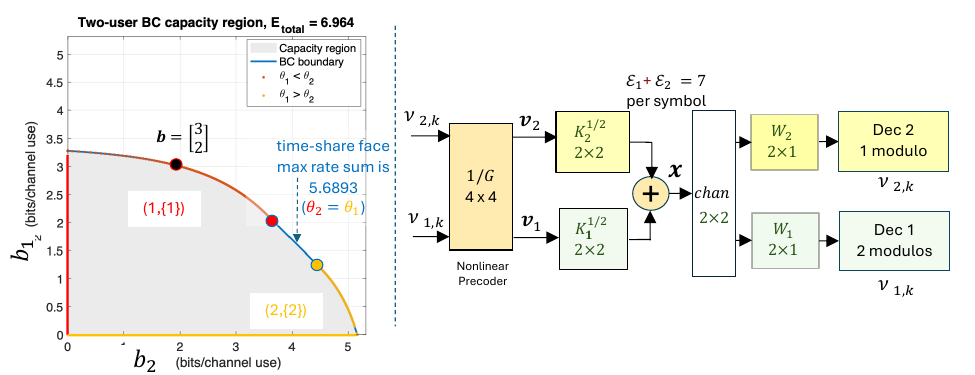}
\caption{Example $U=2$ Full-Rank Broadcast Channel Capacity Region.}
\label{fig:BCnondegraded}
\end{framed}
\end{figure*}

This BC has a specific design for the rate vector $\mathbb b = [ 3 \;  2]$ with total energy 7 units per transmitted symbol.  That point is on the region's boundary and uses those 7 units of energy as per the following steps that create it.  Thus, the remaining region is for 7 units of energy. 
\begin{verbatim}
>> Hbc=[1 2 ; 1 .5];
Hmac = flipud(Hbc).';
Lx=[1 1];
bu_min=[3 2];
w=[1 1];
cb=1;
[FEAS_FLAG, bu_a, info]=minPMACmimo(...
Hmac, Lx, bu_min, w, cb)
bu_a % =    3.0000    2.0000
>> info.Eun' % =    5.6000    1.3636
>> info.theta' % =  13.2959    3.6364
>> info.bun' % =    3.0000    2.0000
>> Rxxm=zeros(1,1,2);
Rxxm(1,1,1)=info.Eun(1);
Rxxm(1,1,2)=info.Eun(2);
HmacE=zeros(2,1,2);
HmacE(:,1,1)=Hmac(:,1);
HmacE(:,1,2)=Hmac(:,2)
HmacE(:,:,1) =
    1.0000
    0.5000
HmacE(:,:,2) =
     1
     2
>> Rxxb=mac2bc(Rxxm,reshape...
     (HmacE,2,1,2))
Rxxb(:,:,1) 
    0.0992   -0.3223
   -0.3223    1.0475
Rxxb(:,:,2) =
    4.6536    2.3268
    2.3268    1.1634
>>  [Bu, GU, S0, MSWMFunb, B, ...
       MSWMFU] = mu_bc(Hbc, [ ...
       sqrtm(Rxxb(:,:,1)) sqrtm ...
       (Rxxb(:,:,2))], [1 1], 1)

Bu % =    2.0000    3.0000
MSWMFU =
  -1.9632 + 0.0000i
   0.6041 + 0.0000i
   ----------------------
   0.4146 - 0.0000i
   0.8292 + 0.0000i
 
>> GU{:,:} %  =

   1.0000 + 0.00i  -3.2500 + 0.000i  -
      7.5759 + 0.0000i  -3.7879 - 0.00i
   0.0000 + 0.0000i   1.0000 + 0.000i
      2.3310 - 0.0000i   1.1655 + 0.00i
-----------------------------
   0.0000 + 0.0000i   0.0000 + 0.0000i 
     1.0000 + 0.0000i   0.5000 - 0.000i
   0.0000 + 0.0000i   0.0000 + 0.0000i
     0.0000 + 0.0000i   1.0000 + 0.000i

>> S0{:,:} % =
    1.2595         0
         0    3.1760
-------------------------------
    6.6000         0
         0    1.2121
>> B{:,:} % =
    0.3328
    1.6672
-----------------
    2.7225
    0.2775 
 % max rate-sum check   
  >> [Rxx, Rwcn, bmax] = bcmax(...
        diag(reshape(Rxxm,1,2)), ...
        Hbc, [1 1])
>>  2*bmax % per real dim, so 2x
    5.6893
\end{verbatim}
The maximum sum rate does not correspond to the $\mathbb b = [ 3 \; 2]$ because that point does not have equal $\thetavec$ elements.
Figure \ref{fig:BCnondegraded} also shows the transmission system's MMSE-GDFE structure.  The specific $4 \times 4$ triangular matrix $G$ values
and the corresponding two $1 \times 2$ each $W$ values appear above as GU and MSWMFU.

\subsection{Rank-Deficient $2 \times 3$ BC with memory\label{sec4.3}} 

As an example of the atomic MMSE-GDFE approach with intersymbol interference, a $2 \times 3$ BC with 3 users and two transmit antennas has the channels:
\begin{align}
H_{BC} (D) \;\;\;\;\;\;\;\; = \;\;\;\;\;\;\;\; &  \nonumber  \\
 \left[ \begin{array}{cc} 
.8D^3 & D^3 - D^2 \\
-.3D^2 +.2D & D^3 -D^2 -.63 D -.648 \\
D^3+.9D^2 & .5D^2 -.4D
\end{array}\right]  & \nonumber
\end{align}
and its (causal) dual
\begin{align}
H_{MAC-dual} (D) \;\;\;\;\;\;\;\; = \;\;\;\;\;\;\;\; &  \nonumber  \\
 \left[ \begin{array}{ccc} 
1+.9D & -.3D+.2D^2 & .8  \\
.5D-.4D^2 & 1-D-.63D^2 + .648D^3 & 1-D
\end{array}\right]  & \nonumber
\end{align}
where $D^k$ indicates $k$-sample delay. 
These were selected to have each user have one of lowpass, bandpass, or high-pass characteristics. 
The design in this case put equal priority $\thetavec = \frac{1}{3} \cdot [ 1 \; 1 \; 1 ]$ for a maximum rate-sum point and compares with the currently popular all-linear (``MIMO''' or ``Massive MIMO'') designs. The optimization over the 3 atoms with nonlinear processing nearly doubles the data rates, which Figure \ref{fig:BC2x3ratesum} illustrates with two curves.  (The linear curve is not smooth at lower numbers of subcarriers because optimization of the transmitter for the linear case is a difficult nonconvex problem, so it was approximated, which works better at larger block lengths). 
\begin{figure*}[t]
\begin{framed}
\centering
\includegraphics[width=0.9\linewidth]{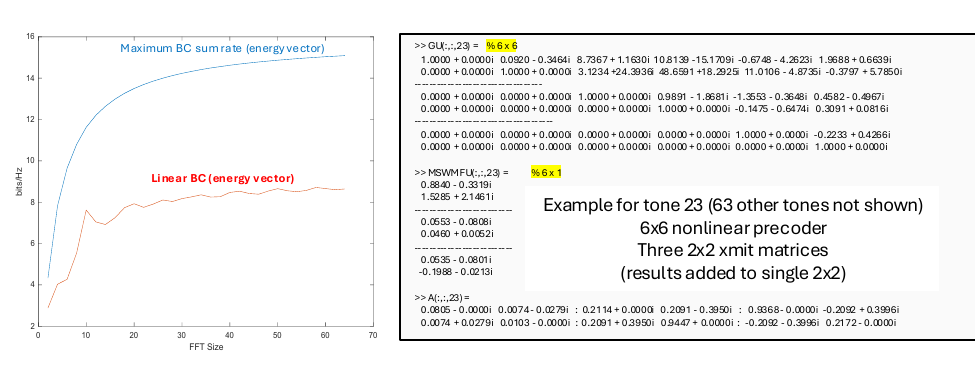}
\caption{Example $2 \times 3$ Broadcast Channel rate-sum maximum point versus linear-only processing.}
\label{fig:BC2x3ratesum}
\end{framed}
\end{figure*}
The nonlinear precoder, while only for 3 atoms in this case is two dimensional within each atom, leading to a $6 \times 6$ optimal nonlinear precoder, one on each subcarrier.  The figure arbitrarily picked one carrier, number 23, to illustrate the GDFE settings. There are up to 64 such GDFEs independently operating on each subcarrier, but whose energies and data rates add to the proper totals 

\subsection{Memoryless ICs\label{sec4.4}}

The interference channel provides the principal illustration for the general capacity-region construction because different receivers simultaneously participate in the decoding process. This example illustrates how the terminal-atom recursion remains valid while only the surviving multicast atoms continue to require the receiver-minimum operation $\cbI_{min}$.

This subsection's IC examples illustrate how the atomic decomposition and the associated
$\cI_{\min}$ rates can be computed via a convex optimization procedure
based on GDFE (generalized decision-feedback equalization) structures, see \cite{cioffi1997gdfe}, \cite{cioffi_ee379}.\
and also can generate the capacity region.

An example is a memoryless three-user Gaussian interference channel that has channel matrix (the additive Gaussian noise variance is 1 unit for each user). 
\[
H =
\begin{bmatrix}
80 & 50 & 30 \\
60 & 70 & 20 \\
25 & 35 & 90
\end{bmatrix},
\]
and target rate vector
\[
\bvec = [2 \;\; 2 \;\; 2].
\]

The optimization minimizes total transmit energy
\[
E = \sum_{i=1}^U \mathrm{tr}(K_{x_i x_i})
\]
subject to achieving the required $\mathbf b$ rates. The optimization uses an outer convex descent over the data-rate Lagrange multipliers $\theta$, combined
with inner updates of the input covariance matrices (energies in this example) and decoding orders,
including the $\cI_{min}$ choice between two instances of the same atom.

For this example, the algorithm converges to a feasible solution with
\[
\bvec = [2 \;\; 2 \;\; 2], \quad
E_{\text{total}} = 1.2614,
\]
and 
\[
E = \left[ \begin{array}{c} .8629 \\ .0078 \\ .3928 \end{array} \right]
\]
obtained via time-sharing over two decoding configurations in Table \ref{3x3}. 

\vspace{1ex}
\noindent
{\bf Atomic structure.}
The full system contains $U \cdot (2^{U - 1}) = 12$ subset-atoms with $U=3$:
\[
a = (i,S), \quad S \subseteq \{1,2,3\}, \ S \neq \emptyset.
\]
The IC, again, needs only 9 of these atoms. 
In each priority $\thetavec$ time-sharing component, only a subset of these atoms carry
nonzero rate and energy, with different decoding orders at each receiver.  For this example, a set of convex-descent operations converges to a solution that time-shares two sets of atoms as in Table \ref{3x3}. In this table, a first atomic subset uses 5\% of a block of codewords, while the second atomic subset uses the other 95\%.  This particular solution minimizes the equally weighted energy sum and has each atom use a capacity-achieving single-user code.  There is no lower energy sum for the rate vector $[2 \; 2 \; 2 ]$.  The average of the 3 users energy is about 1.26 units.  The average data rate slightly exceeds the $[2 \; 2 \; 2 ]$ rate target. 
\begin{table}[H]
\begin{framed}
\centering
\includegraphics[width=1.2\linewidth]{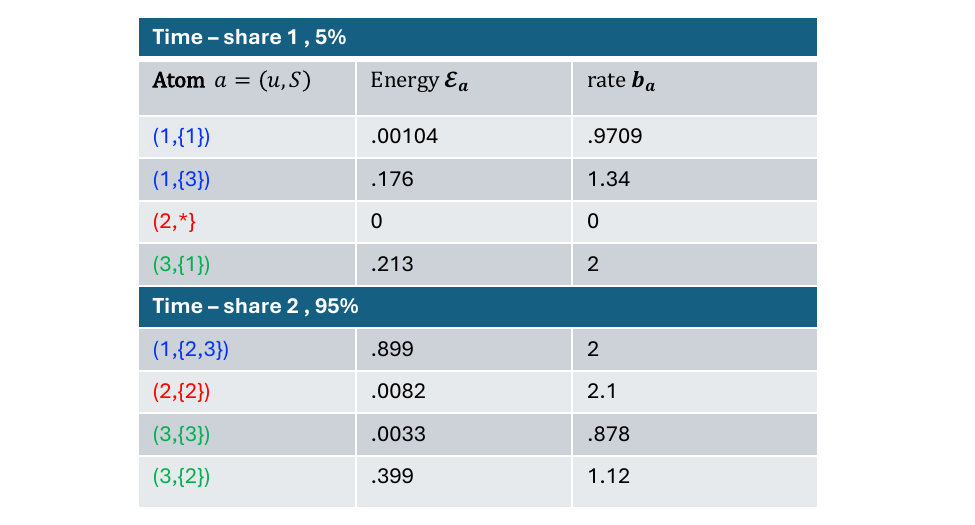}
\caption{Example $3 \times 3$ IC realization orders and energies}
\label{3x3}
\end{framed}
\end{table}
Table \ref{3x3} illustrates that the first time-share receiver one uses a successive-decoding GDFE for users 1 and 3 at receiver~1, while receiver 2 does not participate, and receiver~3 is simple single-code detector.  The optimization's rate-constraint variables $\thetavec$ determines the decoding order, which prioritizes user 2 over user~1 over user~3.  For the second time-share, user 3 uses a GDFE with users 2 and 1 as noise.  Receiver 3 decodes user 2 first before decoding its own signal, removing user 2's effect canonically.   In the second time share, user~2 is fully active and achieves its rate of 2 entirely.     The example illustrates that while the entire capacity-region's construction, and its optimality proof, are complex, the resulting single-point solution is straightforward.  The example here simply illustrates the actual design, with the comfort now of knowing it is the best minimum-energy-sum performance achievable. 
This also emphasizes this work's constructive approach to all capacity-region-rate variables.
Using the energies from the $[2 \; 2 \; 2]$ rate point, Figure \ref{fig:3DCR} shows the full capacity region. 
\begin{figure}
\begin{framed}
\centering
\includegraphics[width=1.1\linewidth]{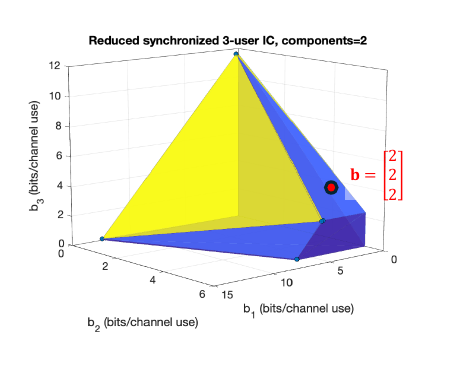}
\caption{Example $3 \times 3$ Interference Channel Capacity Region}
\label{fig:3DCR}
\end{framed}
\end{figure}

An IC with a single dimension per user (No MIMO nor ISI) will have a polytope of $2^U -1 +U$ facial boundaries, which is 10 in Figure \ref{fig:3DCR}.
When users have more than one dimension to share energy, the faces become more continuous and smooth, as the following $2 \times 2$ IC example illustrates.  
This $3 \times 3$ IC's capacity region is similar to that of a MAC for the same $H$, which also has 10 faces for one-dimensional user symbols\footnote{ The BC's energy sharing causes it to have smooth boundaries even for the scalar single-dimension per user case as in Figures \ref{fig:BCdegraded} and \ref{fig:BCnondegraded}.}.
The MAC region contains the IC region. 

{\bf Curved IC capacity-region boundaries:}
The following $2 \times 2$ MIMO-IC example provides a specific illustration. It considers a two-user interference channel with
multi-dimensional signaling (two effective dimensions per user).
For this channel, two natural operating points arise corresponding to distinct decoding
orders (or ``MAC vertices''), each maximizing a different user’s rate.


A $U=2$ Gaussian MIMO IC has a $4 \times 2$ channel matrix 
\beq
\yvec = \left[ \begin{array}{c} \yvec_2 \\ \yvec_1 \end{array} \right]
= \left[ \begin{array}{c} H_2 \\ H_1 \end{array} \right]
\left[ \begin{array}{c} x_2 \\ x_1 \end{array} \right]
+ \left[ \begin{array}{c} \nvec_2 \\ \nvec_1 \end{array} \right],
\eeq
with
\beq
H_2 = \left[ \begin{array}{cc} .9 & .3 \\ .3 & .8 \end{array} \right], 
\quad
H_1 = \left[ \begin{array}{cc} .8 & .7 \\ .6 & .5 \end{array} \right],
\eeq
and $R_{\nvec\nvec} = 0.01 \cdot I_4$. The input covariance is
\beq
R_{\xvec\xvec} = \mathrm{diag}({\cal E}_2, {\cal E}_1).
\eeq

Because each receiver has two antennas, the chain-rule increments
$\Delta_r(a;\pi)$ correspond to the MMSE/GDFE decomposition
\cite{cioffi1994gdfe,cioffi_ee379}. Different decoding orders 
yield different polymatroid vertices.
For two distinct decoding orders, the resulting user-rate vectors are
\[
\bvec^{(1)} = [3.2539 \;\; 2.7526], \qquad
\bvec^{(2)} = [3.3291 \;\; 0.4128].
\]
Taking the componentwise minimum across receivers yields the vertex
\[
\bvec_A = [3.2539 \;\; 0.4128],
\]
which lies on Figure \ref{3x3}'s  corresponding achievable region's boundary.
These vertices define a line segment of slope approximately $-.97$.

\vspace{1ex}
\noindent
{\bf Atom-based refinement.}
To move beyond this vertex structure, user $1$ decomposes into two
atoms with power split $0.9$ and $0.1$. In the subset-atom
representation, this corresponds to activating multiple atoms:
\[
(1,\{1\}), \quad (1,\{1,2\}), \quad (2,\{2\}),
\]
with distinct decoding roles across receivers.

A decoding order $\pi$ is chosen such that the portion of user $1$
decoded at both receivers (atom $(1,\{1,2\})$) is decoded first,
followed by user $2$, and then the remaining private portion
$(1,\{1\})$.

The resulting chain-rule increments atomic rates
\begin{align*}
b_{(1,\{1\})} = 3.0395,  & \\
b_{(2,\{2\})} = 0.6703, & \\
b_{(1,\{1,2\})} = 0.0321  ,
\end{align*}
which correspond to
\[
b_a = \min_{r \in S(a)} \Delta_r(a;\pi).
\]

The resulting user rates are obtained by summing over atoms:
\begin{align*}
b_1 = b_{(1,\{1\})} + b_{(1,\{1,2\})} = 3.0716, & \\
b_2 = b_{(2,\{2\})} = 0.6703. &
\end{align*}
This rate point lies strictly above the time-sharing line between
the vertex points, achieving the same or larger weighted sum rate,
and therefore cannot be expressed as a convex combination of the
vertex operating points.

\begin{figure*}[t]
\begin{framed}
\centering
\includegraphics[width=0.9\linewidth]{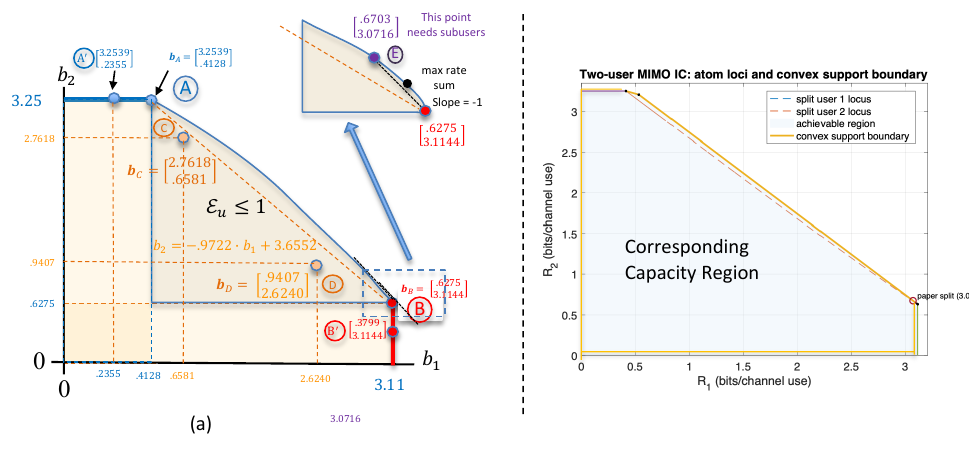}
\caption{Example region: a rate point lies above the time-sharing line (slope $-.97$)
between two vertex operating points, indicating that additional subuser/atom structure
is required beyond simple convex combinations of decoding orders.}
\label{fig:example293}
\end{framed}
\end{figure*}

The key observation is that this point cannot be represented by time-sharing between a
small number of decoding orders operating on a coarse message split. Instead, it requires
a finer decomposition into atoms—precisely what the synchronized-schedule subset-atom representation provides.

From the present perspective, this phenomenon has a direct structural explanation:
\begin{itemize}
\item The vertex points correspond to specific decoding orders over a restricted set of atoms.
\item The intermediate point requires simultaneous activation of multiple atoms whose
decodability sets differ across receivers.
\item In particular, the multi-dimensional signaling allows different components of a user's
signal to be decoded at different receivers in different orders, which cannot be captured
by a single private/common split.
\end{itemize}

Thus, the curvature observed in Figure \ref{fig:example293} is not a consequence of probabilistic
optimization, but of structural richness: it arises because the true capacity region
is the convex hull over a larger set of atomic decoding configurations than those
typically considered in reduced descriptions.

Figure \ref{fig:2x2ICmultitone} illustrates the capacity region for the $2 \times 2$ IC with intersymbol interference (the plot uses 8 subcarriers with one sample cyclic prefix presumed).  The lower right black dots (line) indicate the different $\cbI_{min}$ boundary points, and then the convex hull operation expands beyond those polygons to the full capacity region, which is close in this case but slightly above as shown. 
\begin{figure*}[t]
\begin{framed}
\centering
\includegraphics[width=0.9\linewidth]{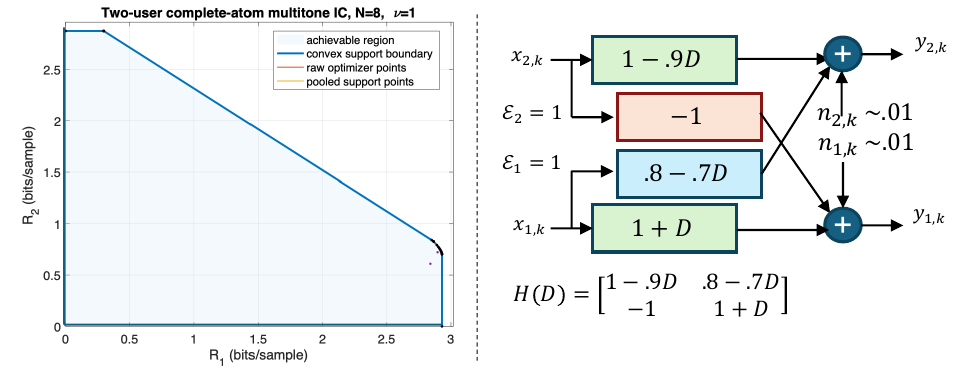}
\caption{Example capacity region for 2x2 IC with ISI.}
\label{fig:2x2ICmultitone}
\end{framed}
\end{figure*}

\subsection{MUC Transitions between IC and BC/MACs\label{sec4.5}}

Unlike the classical broadcast, multiple-access, and interference channels, a general multiuser channel permits arbitrary patterns of transmitter and receiver cooperation. Consequently, these classical models are not isolated channel classes but rather special structural points within a much larger family of multiuser channels. The present atomic representation applies without modification throughout this continuum. This final example illustrates two representative intermediate channels that lie between the familiar BC/MAC, and IC.

\paragraph*{Receiver cooperation.}
\begin{figure*}[t]
\begin{framed}
\begin{center}
\includegraphics[width=0.9\linewidth]{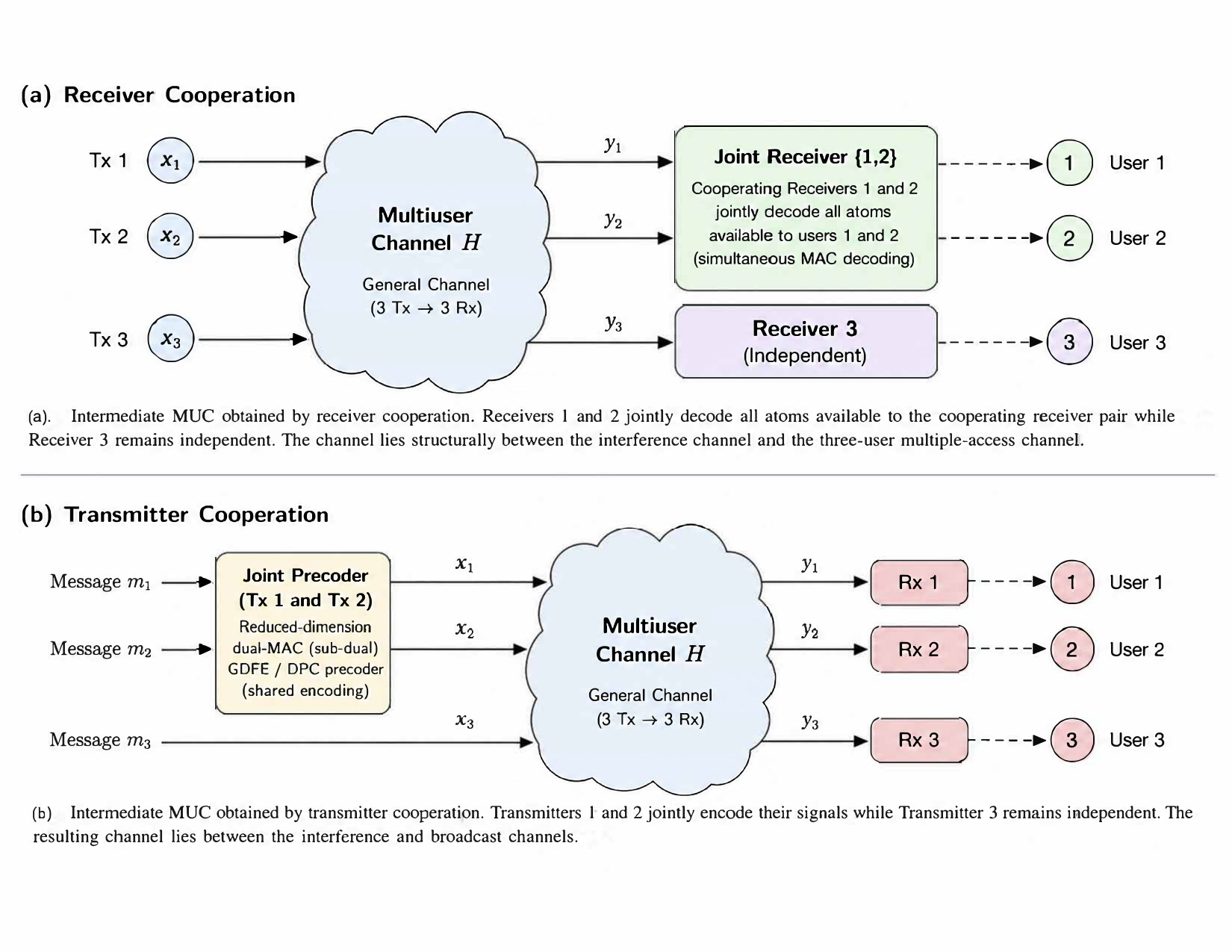}
\vspace{-.5in}
\caption{Transmitter and Recieiver Subgroup Coordination.}
\label{fig:coop}
\end{center}
a). The single-user equivalent has a $3 \times 3$ diagonal input covariance with worst-case-noise assumed between the sets $S=\{1 ,2\}$ and $S' =\{ 3\}$.\\
b).The single-user equivalent has a $2 \times 2$ block-diagonal input covariance with worst-case-noise assumed overall 3 outputs.\\
\end{framed}
\end{figure*}

Figure~\ref{fig:coop}(a) illustrates a representative intermediate
multiuser channel in which Receivers~1 and~2 cooperate
fully while Receiver~3 remains independent.  The interference remains among all 3 in $H$.
The cooperating receivers jointly process their received
signals and therefore induce a single higher-dimensional
multiple-access decoding problem for the atoms decoded
by either receiver.
Receiver~3 continues to decode independently.  

The finite atomic representation introduced in
Section~\ref{sec2} requires no modification.
Only the receiver subsets associated with each atom
change.
The synchronized receiver-order construction,
terminal-atom recursion, and Gaussian optimality
arguments remain exactly those developed earlier.
Consequently, this channel differs from the interference
channel only through the receiver subsets participating
in the chain-rule expansion, while remaining distinct
from the full multiple-access channel because only a
proper subset of receivers cooperate.  The coordination should lead to a larger capacity region than the IC, but smaller than either the full MAC or its dual BC.

With Fig.~\ref{fig:coop}(a) and the strict supporting
priority
\begin{equation}
\theta_3>\theta_1>\theta_2>0 \; ,
\label{eq:muc-rx-coop-priority}
\end{equation}
Receivers~1 and~2 fully share their observations,
they act as one joint receiver.  An atom decoded by either
member of this cooperating pair is therefore available to
both members.  The receiver-decoding subsets consequently
collapse to the two effective decoder classes
$\{1,2\}$ and $\{3\}$, together with their union
$\{1,2,3\}$.

For the priority order in
\eqref{eq:muc-rx-coop-priority}, the terminal atom is again
\[
a^\star=(3,\{3\}).
\]
The activated-atom collection may therefore be written as
\begin{align}
\mathcal A_{\mathrm{act}}^{\mathrm{Rx}}
= \;\;\;\;\;\;\;\;\;\;\;\; \;\;\;\;\;\;\;\;\;\;\;\; \;\;\;\;\;\;\;\;\;\;\;\;  & \nonumber \\
\left\{
(3,\{3\}),
(3,\{1,2,3\}),
(1,\{1,2,3\}),
(2,\{1,2,3\})
\right\} \; . &
\label{eq:muc-rx-coop-active}
\end{align}
The joint Receiver~$\{1,2\}$ decodes the three global atoms
\[
(1,\{1,2,3\}),\quad
(2,\{1,2,3\}),\quad
(3,\{1,2,3\}),
\]
while Receiver~3 decodes those same three atoms together
with the terminal private atom $(3,\{3\})$.
The common atoms remain subject to the corresponding
receiver minimum,
whereas $(3,\{3\})$ has only Receiver~3 in its decoding set.

The lower-priority atoms that do not include Receiver~3 are
deactivated:
\begin{equation}
\mathcal A_{0}^{\mathrm{Rx}}
=
\left\{
(1,\{1,2\}),
(2,\{1,2\})
\right\}.
\label{eq:muc-rx-coop-zero}
\end{equation}
Thus receiver cooperation merges the separate Receiver~1
and Receiver~2 chain rules into one higher-dimensional
MAC chain, but leaves the terminal-atom recursion unchanged.
Starting from $(3,\{3\})$, Gaussianity propagates backward
through the active global atoms in the two effective
receiver chains.

\paragraph*{Transmitter cooperation.}

Figure~\ref{fig:coop}(b) illustrates the dual intermediate channel,
where Transmitters~1 and~2 cooperate while
Transmitter~3 remains independent.
The cooperating transmitters jointly determine the
signals transmitted over their spatial dimensions,
whereas the third transmitter continues to operate
independently.

For Gaussian channels the cooperating transmitters
naturally employ a reduced-dimensional dual-MAC
(sub-dual) GDFE or nonlinear precoder over their
shared transmit dimensions.
The independent transmitter requires no modification.
Once again, the finite atomic representation,
synchronized decoding orders, terminal-atom
construction, and resulting Gaussian optimality remain
unchanged.
Only the participating transmitter subsets differ from
those of the classical broadcast channel.

Figure~\ref{fig:coop}(b) retains three separate receivers but permits
Transmitters~1 and~2 to coordinate their channel inputs.
Their atom codewords may therefore be jointly mapped through
the reduced-dimensional sub-dual nonlinear precoder, while
Transmitter~3 remains independent.  Transmitter cooperation
changes the feasible covariance and precoding structure, but
does not change the receiver subsets that define the atoms.

For the same priority
$\theta_3>\theta_1>\theta_2>0$, the terminal atom remains
$(3,\{3\})$.  The active atom set is
\begin{align}
\mathcal A_{\mathrm{act}}^{\mathrm{Tx}}
=\{&
(3,\{3\}),
(3,\{1,3\}),
(3,\{2,3\}),
\nonumber\\
&
(1,\{1,3\}),
(2,\{2,3\}),
\nonumber\\
&
(1,\{1,2,3\}),
(2,\{1,2,3\}),
(3,\{1,2,3\})
\}.
\label{eq:muc-tx-coop-active}
\end{align}

Accordingly, the three receivers decode
\begin{align}
\mathcal A_1^{\mathrm{Tx}}
={}&
\bigl\{
(1,\{1,3\}),
(3,\{1,3\}),
\nonumber\\[-1mm]
&\qquad
(1,\{1,2,3\}),
(2,\{1,2,3\}),
\nonumber\\[-1mm]
&\qquad
(3,\{1,2,3\})
\bigr\},
\label{eq:muc-tx-rx1}
\\[1mm]
\mathcal A_2^{\mathrm{Tx}}
={}&
\bigl\{
(2,\{2,3\}),
(3,\{2,3\}),
\nonumber\\[-1mm]
&\qquad
(1,\{1,2,3\}),
(2,\{1,2,3\}),
\nonumber\\[-1mm]
&\qquad
(3,\{1,2,3\})
\bigr\},
\label{eq:muc-tx-rx2}
\\[1mm]
\mathcal A_3^{\mathrm{Tx}}
={}&
\bigl\{
(3,\{3\}),
(1,\{1,3\}),
(3,\{1,3\}),
\nonumber\\[-1mm]
&\qquad
(2,\{2,3\}),
(3,\{2,3\}),
\nonumber\\[-1mm]
&\qquad
(1,\{1,2,3\}),
(2,\{1,2,3\}),
\nonumber\\[-1mm]
&\qquad
(3,\{1,2,3\})
\bigr\}.
\label{eq:muc-tx-rx3}
\end{align}
Receiver~3 therefore contains the terminal atom and initiates
the same backward Gaussian recursion as in Fig.~\ref{fig:example}.  The
pairwise and global atoms then transfer that Gaussianity to
the Receiver~1 and Receiver~2 chains.

The four atoms incompatible with this strict support point
are
\begin{equation}
\mathcal A_{0}^{\mathrm{Tx}}
=
\left\{
(1,\{1\}),
(1,\{1,2\}),
(2,\{2\}),
(2,\{1,2\})
\right\}.
\label{eq:muc-tx-coop-zero}
\end{equation}
They are owned by lower-priority Users~1 or~2 and are not
decoded at the highest-priority Receiver~3.  Their optimal
covariances are therefore zero for this exposed support
direction.  Transmitter cooperation does not alter this
activation rule; it instead permits the surviving User~1 and
User~2 atoms to be jointly realized across the cooperative
transmit subspace.

These examples emphasize that the broadcast,
multiple-access, and interference channels are not
isolated communication models but rather particular
members of a much broader family of multiuser
channels distinguished only by their cooperation
structure.
The finite atomic representation, synchronized
receiver-order construction, terminal-atom recursion,
and Gaussian optimality developed in this paper apply
uniformly throughout this continuum.
Changing the cooperation pattern changes only the
receiver or transmitter subsets participating in the
induced chain-rule expansions, while the underlying
capacity-region characterization remains unchanged.

\subsection{Single-User Gaussian-Channel Equivalents\label{sec4.6}}

The preceding examples also provide a simple retrospective
interpretation of Section~\ref{sec3}'s Gaussian-extremality result.
Each linear Gaussian multiuser channel has a represenation, for a fixed
covariance allocation and decoding configuration, by an equivalent
single-user matrix Gaussian channel.  The distinctions among the MAC,
BC, IC, and general MUC then appear as structural constraints on the
equivalent channel's input-covariance and noise-covariance matrices.

For a single-user matrix Gaussian channel
\begin{equation}
    \yvec = H\cdot \xvec+\nvec ,
\end{equation}
Gaussian extremality holds for any fixed admissible input covariance
$R_{\xvec\xvec}$ and Gaussian noise covariance
$R_{\nvec\nvec}$, effectively dating to Shannon \cite{shannon1948}.  Consequently, imposing particular structure on
either covariance does not alter the single-user Gaussian-extremality
argument.

For a MAC, the users' transmitters cannot coordinate their transmitted
signals.  The corresponding single-user equivalent therefore has the
block-diagonal restriction
\begin{equation}
 R_{\xvec\xvec}
   = \operatorname{block diag}
       \left(R_{\xvec_1 \xvec_1},\ldots,R_{\xvec_U \xvec_U}\right).
\end{equation}
This is simply a restricted member of covariance-matrix sets
covered by the above well-known single-user matrix Gaussian result.  Thus the MAC
Gaussian result follows directly from its single-user equivalent.

For a BC, the transmitter coordinates all transmitted dimensions, so
$R_{\xvec\xvec}$ need not be block diagonal; instead it sums all the users individual independent covariance-matrix contributions.  The stacked MMSE-GDFE,
or equivalently its nonlinear-precoder implementation, has an equivalent interpretation with a special
square-root modulation matrix for any such summed-covariance matrix, as originally noted by Yu \cite{yu2004sumcapacity} and detailed more completely in \cite{cioffi_ee379}.
In this equivalent single-user-matrix view, the BC receiver set's non-coordination optimality follows for any $\rxx$-dependent covariance with the special square-root modulation matrix and corresponding noiseless precoder. 
The resulting channel, transformed via lossless 1-to-1 matrix operations,  is again a
single-user matrix Gaussian channel, and thus Gaussian extremality again follows directly.  This expands even to singular worst-case noise,
as in \cite{cioffi2023singular}, where the diagonalization occurs for primary users, and any secondary users simply siphon energy from primary but use the same diagonalization structure. 

The IC's extension combines these two observations:  Its transmitters cannot
coordinate, constraining $\rxx$ to be block-diagonal, while its
receivers cannot coordinate either, producing the corresponding worst-case
noise covariance $R_{\nvec\nvec}^{\rm wc}$.  Neither restriction
changes the single-user Gaussian-extremality result: the pair
$\left(R_{\xvec\xvec}^{\rm bd},R_{\nvec\nvec}^{\rm wc}\right)$ is
still simply a particular covariance pair of a matrix Gaussian
channel.  Hence the Gaussian IC also has a single-user equivalent.

Figure~\ref{fig:coop} illustrates the generalization.  Partial
transmitter cooperation determines the block structure permitted in
$R_{\xvec\xvec}$, while partial receiver cooperation determines the
corresponding block structure of the worst-case noise covariance.
A general Gaussian MUC therefore produces some combination of these
two restrictions:
\begin{equation}
\boxed{
\begin{aligned}
\text{Gauss MUC}
&\;\longleftrightarrow\;
\text{single-user matrix Gaussian}\\
&\qquad\text{channel with }
R_{\xvec\xvec}^{\,\rm structured}
\text{ and }
R_{\nvec\nvec}^{\,\rm wc}.
\end{aligned}}
\end{equation}
Full transmitter cooperation removes the block-diagonal input
restriction (the BC extreme), while full receiver cooperation removes
the worst-case-noise separation (the MAC extreme).  The IC imposes
both, while intermediate MUCs impose the corresponding partial
structures.

Section~\ref{sec3}'s atomic construction established this
multiuser-to-single-user reduction formally, through the terminal-atom
recursion and Gaussian interpolation arguments.
Gaussian
optimality itself has a simple interpretation: all of these cases are
contained in the classical single-user matrix Gaussian result for
arbitrary admissible covariance matrices.  The multiuser difficulty
lies in establishing the correct equivalent covariance structure and
the synchronized atomic decomposition, rather than in a different
Gaussian-extremality principle for each multiuser channel.  Equivalently,
this subsection's preceding paragraphs provide a simpler, retrospective
argument for the same conclusion, following directly from Shannon's
single-user capacity, rather than an independent derivation of it --
the two arguments establish the same result by different routes, one
rigorous and one intuitive.
As in Section~\ref{sec3}, each user's channel-input vector here may
itself represent a fixed, finite-length super-symbol block, so this
single-user-equivalent argument is exactly as general, with respect to
Section~\ref{sec2}'s multi-letter freedom, as Theorem~\ref{thm:gaussianpropagation}.
Either way, Gaussian channels best use Gaussian inputs, no matter the multiuser structure.

\subsection{Multiuser Maximum Rate Sums\label{sec4.7}}

The maximum rate sum is a particularly simple capacity-region
boundary point.  It corresponds to equal user priorities,
\begin{equation}
    \theta_1=\theta_2=\cdots=\theta_U ,
\end{equation}
so that maximizing the supporting hyperplane reduces to
\begin{equation}
    b_{\rm sum}^{\max}
       = \max_{\bvec\in{\cal R}}
          \sum_{i=1}^{U} b_i .
\end{equation}
For a Gaussian MUC, Subsection~\ref{sec4.6}'s single-user equivalent
therefore reduces this optimization to simultaneous water-filling
subject to the covariance structures imposed by transmitter and
receiver coordination \cite{cheng1993multiuser}.  The iterative-waterfilling algorithm \cite{yu2004iterative} finds this optimum typically after a few cycles of waterfilling for each successive user, in any order, of a water-filling calculation where all other users in any water-filling substep are noise. 

At the maximum-rate-sum point, all user priorities are equal,
\begin{equation}
\theta_1=\theta_2=\cdots=\theta_U ,
\end{equation}
so there is no user-priority ordering.  Each transmitter water-fills
against the effective noise formed by the Gaussian noise plus all
other users' transmitted signals,
\begin{equation}
R_{{\rm in},i}
=
R_{\nvec\nvec}^{\rm wc}
+
\sum_{j\ne i}
H_j \cdot R_{\xvec_j\xvec_j} \cdot H_j^* .
\label{eq:muc-effective-noise}
\end{equation}
The corresponding MMSE receiver is equivalent, for this
water-filled solution, to treating all other users as Gaussian
interference.  Thus each user's covariance satisfies its
single-user water-fill with respect to $R_{{\rm in},i}$, while all
such water-fills are satisfied simultaneously.  This special
simplification occurs at the maximum-rate-sum point because the
equal $\thetavec$ elements remove the otherwise necessary
priority-dependent ordering.

With one coordinated transmitter, as for a BC, there is a single
input covariance $R_{\xvec\xvec}$ to optimize.  The maximum rate sum
consequently occurs with the usual matrix water-fill over the
single-user-equivalent Gaussian channel, simultaneously with the
worst-case-noise covariance associated with the uncoordinated
receivers \cite{yu2004iterative}.  Thus,
\begin{equation}
 \left(R_{\xvec\xvec}^{\star},
       R_{\nvec\nvec}^{\rm wc,\star}\right)
 =
 \arg\max_{R_{\xvec\xvec}}
 \;\min_{R_{\nvec\nvec}^{\rm wc}}
 I(\xvec;\yvec),
 \label{eq:bc-ratesum-wf}
\end{equation}
subject to the applicable transmit-energy constraint and the
worst-case-noise covariance constraints.  The covariance
$R_{\xvec\xvec}^{\star}$ is the conventional single-user matrix
water-fill for the resulting worst-case-noise channel.  The
worst-case noise and water-fill are therefore found simultaneously, again
iterating a worst-case noise calculation with a water-filling step, as in \cite{yu2004sumcapacity} with more detail in \cite{cioffi_ee379}.

For a single coordinated transmitter, as in the BC, these multiple
water-fills collapse to one matrix water-fill over the coordinated
transmit covariance, simultaneously with the worst-case-noise
covariance between uncoordinated receiver sets.  At the opposite
MAC extreme, the receiver is fully coordinated, so the
worst-case-noise separation disappears and the independent
transmitters perform the usual simultaneous water-fill.  The IC
requires both: simultaneous water-filling of its independent
transmitters and the worst-case-noise covariance associated with
its uncoordinated receivers.

When there are several uncoordinated transmitters, as in the IC,
the input covariance has again the block-diagonal structure
\begin{equation}
 R_{\xvec\xvec}
 =
 \operatorname{blockdiag}
 \left(
 R_{\xvec_1\xvec_1},\ldots,
 R_{\xvec_U\xvec_U}
 \right).
\end{equation}
The rate-sum solution then consists of simultaneous water-filling for the corresponding evolving worst-case noise (the latter of which depends on the block-diagonal input
autocorrelation matrix at any step),
Thus, the IC combines iterative/simultaneous water-filling (instead of the BC's single water-filling) with the BC's iteration on worst-case noise.  

Receiver non-coordination enters independently through
$R_{\nvec\nvec}^{\rm wc}$.  Fully cooperating receivers have a
single receiver-noise covariance and require no receiver-separation
constraint.  Uncoordinated receiver sets instead induce the
corresponding worst-case-noise covariance between those sets.
Partial transmitter or receiver cooperation produces exactly the
intermediate block structures described in Section~\ref{sec4.5}.

Consequently, the Gaussian MUC maximum-rate-sum problem has the
particularly simple interpretation
\begin{equation}
\boxed{
\begin{aligned}
&\text{maximum Gaussian MUC rate sum}\\[-1mm]
&\qquad\updownarrow\\[-1mm]
&\text{simultaneous water-filling with}\\
&\qquad R_{\xvec\xvec}^{\,\rm structured}
\text{ and } R_{\nvec\nvec}^{\,\rm wc}.
\end{aligned}}
\end{equation}
A BC is the one-transmitter extreme and therefore requires a single
water-fill together with its worst-case receiver noise.  A MAC is
the fully coordinated-receiver extreme and requires simultaneous
water-fills for its independent transmitters without the
worst-case-noise separation.  An IC has both independent
transmitters and independent receivers and therefore requires both
the simultaneous transmitter water-fills and the corresponding
worst-case-noise covariance.  General MUCs interpolate between
these cases according only to their transmitter- and
receiver-cooperation structure.

\subsubsection{Non-Gaussian iterative rate-sum calculation using Blahut Arimoto}
For the non-Gaussian case, this work poses that an iterative form of Blahut Arimoto \cite{blahut1972} replaces the iterative water-filling for the Gaussian MAC, while the
worst-case-noise extension to general distributions awaits future work.
This proposal, like the rest of this section, presumes a fixed
super-symbol length $m$ (here, $m=1$); it does not address whether a
larger $m$ could improve the rate sum for a general (non-Gaussian)
multiuser channel, which Section~\ref{sec2}'s discussion of the
closure's scope and the documented single-letter gap of \cite{nair}
shows is, in general, an open question.

\section{Broader Implications and Relation to Previous Work\label{sec5a}}

A historical difficulty in describing the IC capacity region is the perceived need for
auxiliary random variables with continuously distributed alphabets. In classical multiuser formulations, these continuously distributed variables
simultaneously represent message splitting, decoding structure, and conditioning order, intertwining the region's structural and
probabilistic aspects.  This work shows that no auxiliary random variables beyond the atom
structure and state scheduling are necessary to characterize multiuser capacity regions.
This atomic framework separates these two roles explicitly.  
Section~\ref{sec3} further shows that this atomic representation is not merely structural. By retaining the complete atomic decomposition during the Gaussian interpolation, the same representation also provides the natural framework that establishes Gaussian extremality before recombination into the conventional Gaussian superposition representation.
This section reviews differences, beginning by contrasting structure and probability, before emphasizing then the atomization's finite structural limit.   This section further discusses the chain rule, extrema vertices, and the $\cI_{min}$ concept before relating them to the previous best sub-optimum Han Kobayashi and related approaches.   

\subsection{Relation to Previous Work\label{sec5.1}}

Structural complexity is combinatorial, not probabilistic; auxiliary random variables historically combine these two functions, making the underlying structure appear infinite.

A central technical difficulty in the classical joint-typicality approach
to interference networks is that, once each user's message is split into
private and common sub-messages (superposition coding), a received
sequence can be jointly typical with more than one combination of the
{\em other} user's sub-codewords, even though the desired user's own
sub-message remains uniquely identifiable.
Bandemer, El Gamal, and Kim~\cite{BandemerElGamalKim2015} address this
overlapping-typical-set difficulty directly: they show that {\em
simultaneous nonunique decoding} -- which tolerates this overlap for the
undesired sub-messages while still uniquely resolving the desired ones --
already achieves every rate achievable by full maximum-likelihood
decoding, within the class of superposition-based random-code ensembles.
In particular, this shows that the Han--Kobayashi inner bound
cannot be improved merely by strengthening the decoding rule
(e.g., by moving from simultaneous nonunique decoding to maximum-likelihood
decoding).

This is a genuinely different question from the one this paper addresses,
and the distinction is worth stating precisely.
The result in \cite{BandemerElGamalKim2015} is ensemble-optimal: fixing
the superposition random-code architecture (and hence its auxiliary
random variables) in advance, it identifies the best {\em decoding rule}
for that architecture. It does not need to resolve the
overlapping-typical-set difficulty that motivates simultaneous nonunique
decoding because it does not enumerate
joint-typicality decoding events at all. Section \ref{sec2}'s converse applies Fano's inequality directly to an arbitrary reliable code's decoding
error probability, without reference to
any particular decoding rule, code architecture, or typical-set
construction. The overlapping-typical-set phenomenon that necessitates
simultaneous nonunique decoding in \cite{BandemerElGamalKim2015}'s
framework simply does not arise in the present framework, because no
typical sets -- overlapping or otherwise -- are ever constructed.
The present framework therefore differs from \cite{BandemerElGamalKim2015}
in logical direction, not merely in generality: rather than beginning
with a chosen coding architecture and optimizing the decoding rule for
it,
it begins with an arbitrary reliable code and derives,
via Fano's inequality and receiver-wise chain-rule expansions,
an architecture-independent decodability structure.  

Superposition coding's auxiliary random variables parametrize a particular transmission design, whereas the atoms characterize the receiver-decodability structure induced by an arbitrary reliable code.  Section~\ref{sec3}'s Gaussian extremality proof retains this same atomic representation as its coordinate system before finally recombining into the conventional Gaussian user layers.

Thus, the minimum mutual-information rate $\cI_{\min}$
arises as a structural consequence of reliability,
rather than from a chosen superposition architecture.  
Consequently, this work's atomic representation serves three distinct purposes: it characterizes arbitrary reliable codes' asymptotic best performance (Section~\ref{sec2}), provides the natural coordinate system for the Gaussian extremality proof (Section~\ref{sec3}), and finally supports constructive Gaussian transceiver design (Section~\ref{sec4}).
In this sense, the atomization and the $\cI_{\min}$ vector
constitute reliable coding's necessary structural invariants,
not a specific random-coding scheme's design parameters.

{\bf Two-user IC, HK, and the multi-letter achievability gap:}
For the standard two-user interference channel, each user's
message must be decoded by its own receiver.  Hence the
admissible IC atoms are those with $i\in S$:
\[
(1,\{1\}),\quad (1,\{1,2\}),\quad
(2,\{2\}),\quad (2,\{1,2\}).
\]
These are also the private/common message classes of the
Han--Kobayashi construction: $(i,\{i\})$ is user $i$'s private
atom and $(i,\{1,2\})$ is user $i$'s common atom.

Thus, for $U=2$, HK already spans the complete receiver-decodability atomization for the standard two-user interference channel, at super-symbol length $m=1$.
Indeed, HK is the one-letter superposition realization of the
complete two-user IC atomization.  The distinction is that HK
postulates this private/common split as a coding architecture,
whereas Lemma~\ref{lem:subsetatom} derives the receiver-decodability
split from an arbitrary reliable code sequence, at any $m$.

The known two-letter improvement of Nair, Xia, and Yazdanpanah \cite{nair} should therefore not be interpreted as introducing new receiver-decodability classes. Rather, they show that those classes' one-letter Han--Kobayashi realization is not closed under finite channel extensions: there exist two-user interference channels for which no length-1 super-symbol distribution attains the capacity region, so that a length-2 super-symbol distribution -- still using only the same four atoms above -- provably achieves a strictly larger weighted rate sum.
In the present formulation,
this is handled by applying the same atom construction to the
$m$-letter product channel and normalizing by $1/m$; Section~\ref{sec2} details this closure's scope, including its relation to Verd\'u and Han's general channel-capacity formula \cite{verduhan1994general} for single-user channels with memory.
Consequently, this paper's achievability closure includes multi-letter expanded-HK points, like Nair's example \cite{nair}, through Section~\ref{sec2}'s normalized finite super-symbol closure, rather than by introducing additional receiver-decodability classes.
Thus the finite-letter closure enlarges the set of realizable code constructions, not the underlying receiver-decodability atomization.

An instructive structural fact, worth stating explicitly, is that this
phenomenon has no single-user analogue. For a memoryless single-user
channel, i.i.d.\ single-letter inputs are exactly rate-optimal: because
$H(Y^m|X^m)=\sum_{k=1}^m H(Y_k|X_k)$ exactly for a memoryless channel
while $H(Y^m)\le\sum_{k=1}^m H(Y_k)$ always, $I(X^m;Y^m)\le\sum_k
I(X_k;Y_k)\le mC$, with equality when each $X_k$ is drawn i.i.d.\ from
the capacity-achieving distribution; no correlation across a single
user's own channel uses can ever help.
Nair, Xia, and Yazdanpanah's channel is itself memoryless in the
ordinary (single-letter joint-distribution) sense -- the gain arises entirely
because one user (in their example, a user whose sub-channel to its own
receiver is noiseless) deliberately correlates its own codeword across
the block, which costs that user a small, computable amount of its own
rate (via the same subadditivity bound above, applied to its own
decoder), but reshapes the {\em interference} experienced by the other
user's receiver in a way no i.i.d.\ choice could.
This is a genuinely multiuser phenomenon, requiring $m \ge2$, and is
distinct from the single-user channel-with-memory difficulty familiar
from intersymbol-interference and fading channels
(e.g.,~\cite{gallager1968information}), where the channel distribution,
rather than another user's codeword, is what carries memory.

Nair, Xia, and Yazdanpanah's construction uses a channel in which one
user's own sub-channel is noiseless -- in this work's terminology, that
user's private atom is degenerate: $H(Y_i|X_i)=0$.
This is, to the author's knowledge, the only published example in which
a finite-letter super-symbol distribution provably fails to attain
$\cI_{\min}$'s value, and it is a fair question whether such degeneracy
is necessary for the gap to occur, or merely a modeling choice that made
the specific example analytically tractable (their concave-envelope
evaluation exploits the noiseless sub-channel's closed-form entropy).
Preliminary numerical exploration by the present author, perturbing
Nair, Xia, and Yazdanpanah's construction by adding a small
binary-symmetric crossover probability $\delta>0$ to the previously
noiseless sub-channel -- so that the private atom is no longer
degenerate -- suggests that the two-letter gain over the best achievable
single-letter (Han--Kobayashi) rate persists continuously for small
$\delta$, rather than vanishing at $\delta=0^+$.  
 The boundary between
channels for which a finite-letter gain exists and those for which it
does not exist therefore appears to depend jointly on $\delta$ and the
priority ratio $\lambda \isdef \theta_1/\theta_2$, rather than on degeneracy of
either sub-channel alone. This numerical observation is preliminary and
appears here only to note that the sufficiency question remains
open.   Specifically, whether {\em any} finite, channel-checkable condition guarantees
that $m=1$ suffices for a general (non-Gaussian) multiuser channel is,
to the author's knowledge, unresolved, and is not  resolved here.

Under the above restriction, each receiver observes a multiple-access channel formed by the atoms
it decodes. For example, receiver $1$ decodes the atoms
\[
(1,\{1\}),\ (1,\{1,2\}),\ (2,\{1,2\}),
\]
which is precisely the MAC structure appearing in the HK achievable region. The associated
rate constraints coincide with these atomic chain-rule increments for these atoms under a suitable
decoding order.

This inclusion is structural: HK's auxiliary random variables correspond
to a particular selection of subset atoms and decoding patterns. The present formulation
contains this structure and extends it by including atoms' multi-letter expansion and decoding configurations
that are not representable within the classical HK private/common split. 
For $U \ge 3$, an expansion of the HK approach would encounter significant specification complexity (likely many more than 10 bounding planes) while the atomization approach smoothly generalzes conceptually
and exponentially with $U$. 

\subsection{Resulting interpretation\label{sec4.5}}

Therefore these structural aspects determine the capacity region:
\begin{enumerate}
\item a finite set of atoms,
\item a finite set of orders/priority,
\item a component-wise minimum of state-averaged chain-rule increments,
\item convexification (time-sharing over input distributions and averaged decoding-order extreme points), and
\item a linear projection from atom rates to user rates.
\end{enumerate}
This characterization applies to general multiuser channels, with the
interference channel representing the case where a potentially larger (than $U$ dimensions) atomic
structure is active at all receivers.  Other memoryless channels bridge the IC to the MAC/BC; indeed the full duality of the MAC/BC transcends into subsets of duals in certain MUCs where these various subsets more explicitly bridge the gap. 

The remaining stochastic optimization concerns only input distributions, at each atom, over some super-symbol length $m$.
Consequently, the {\em structural/combinatorial} complexity of the
memoryless-multiuser channel is finite and channel-independent.
This differs from
the apparent infinity of auxiliary-variable descriptions, which reflects
a representation choice rather than an information-theoretic necessity.
Whether the remaining per-atom {\em distributional} optimization is
finite-dimensional and computable, rather than requiring the
unbounded closure over $m$ discussed in Section~\ref{sec2}, depends on
the channel class: Section~\ref{sec3} settles this affirmatively for
Gaussian multiuser channels, while the general case remains open, per
the discussion above.

\subsection{Relation to Auxiliary-Random-Variable Formulations\label{sec5.6}}

Other frameworks \cite{han1981interference}, \cite{chong2008han}, \cite{etkin2008onebit}, \cite{annapureddy2009gaussian}, \cite{shang2009noisy} introduce auxiliary random variables to represent message splitting and decoding structure.
In the present formulation, these roles yield to explicit atoms indexed by decodability patterns.
Conceptually atomization first replaces message splitting.  This maps with asymptotically vanishing loss to receiver-specific constraints that derive from the universally applicable chain-rule increments.  Joint atomic decoding structure transcends to averaged order-indexed polymatroid corners' combinations or convex projections.  These steps avert the auxiliary-random-variable cardinality issues through finite atom indexing.
In particular, the subset index $(i,S)$ makes explicit
the decodability pattern that auxiliary random variables
implicitly represent through conditioning and superposition.

\subsection{Beyond Gaussian Channels}

Section \ref{sec3} specialized the general finite atomization to linear
Gaussian multiuser channels, where the terminal atom and the
I--MMSE/GDFE recursion imply that every active atom is
capacity-achieving with an independent Gaussian input.
The terminal-atom construction itself, however, does not depend
upon Gaussianity. Rather, it follows from the supporting
hyperplane associated with the selected priority vector and
therefore applies to arbitrary memoryless multiuser channels.

For non-Gaussian channels, the Gaussian extremality argument
is no longer available. Nevertheless, the terminal atom still
identifies the final stage of the synchronized chain-rule
expansion, while atoms inconsistent with that support direction
remain deactivated. The remaining active atoms therefore induce
a sequence of conditional single-user mutual-information
optimization problems.

This observation suggests a possible recursive optimization
framework in which the terminal atom is optimized first, followed
successively by the remaining active atoms in reverse chain-rule
order. For discrete memoryless channels, each stage resembles a
classical single-user capacity computation and suggests recursive
application of the Blahut--Arimoto algorithm \cite{arimoto1972,blahut1972} to the induced
conditional channels. As throughout this section's discussion of the
multi-letter achievability gap, this recursion presumes a fixed
super-symbol length $m$; it optimizes within one atom's own $m$-letter
distribution but does not by itself resolve whether increasing $m$
could still improve the result.

Whether this recursive optimization can be shown to recover the
globally optimum support-functional input distribution for
general memoryless multiuser channels remains an interesting
open question. The present work establishes the finite structural
framework and Gaussian specialization; extension of the
terminal-atom recursion to arbitrary channel distributions is left
for future investigation.

\section{Summary and Conclusions\label{sec6}}

A finite order-based characterization of the multiuser channel capacity region applies generally and simplifies canonical multiuser system design.
For linear Gaussian multiuser channels with trace covariance
constraints, retaining the complete atomic representation during the
Gaussian interpolation establishes that Gaussian signaling exhausts
the region, after which atoms belonging to the same user recombine
into the conventional Gaussian signaling architecture.   These systems more simply can map to a single-user-equivalent block-diagonal-covariance and/or worst-case-noise-covariance in all cases, illustrating simply in retrospect Gaussian optimality's fundamental property for all Gaussian channels with covariance/energy constraints. 

These results apply to arbitrary $U$-user channels, where some receivers may process a user subset while other transmitters may process the same or other user subsets.   In effect such systems may need the full $U \cdot (2^{U -1} )$-atom characterization, although particular channel structures may admit reductions.  
The atomic partitioning reduces to $U$ atoms for the MAC and BC.   
The complete atomic representation reduces to the conventional
$U$-layer MAC and BC implementations after Gaussianity has been
established, although the complete atomic decomposition remains the
natural representation for the structural characterization and the
Gaussian extremality proof.

From an engineering perspective, the principal consequence is not merely a new capacity formula, but a canonical communication architecture. Since every reliable code admits an asymptotically equivalent atomic representation (Section~\ref{sec2}), and every admissible atomic rate vector is achievable via Section~\ref{sec2}'s super-symbol closure, no fundamentally different communication architecture can asymptotically outperform the corresponding atomic design -- although, as Section~\ref{sec5a} discusses, realizing that design via a finite, computable construction, rather than the closure alone, is established here only for Gaussian multiuser channels (Section~\ref{sec3}); the general case remains open.
The synchronized receiver-order representation established in Section~\ref{sec2} provides the finite structural bridge between these converse and achievability results. 
For Gaussian channels, the same atomic representation also serves as
the proof coordinate system for Gaussian extremality before leading
naturally to Section~\ref{sec4}'s constructive MMSE-GDFE and dual MAC/BC designs.
The normalized finite super-symbol closure preserves equivalence with Shannon's general capacity theorem; unlike the atomization and converse themselves, this closure is not, in general, merely a matter of formal completeness -- Section~\ref{sec5a} documents a channel \cite{nair} for which it is genuinely required -- though it does not alter the canonical structural interpretation established above.

The same structural framework naturally suggests embedding of MAC, BC, and IC building blocks for more general memoryless multiuser networks.   Relay channels essentially constitute Cartesian products of such nested memoryless channels. Their embedded structure enlarges the atomic representation exponentially, making the associated optimization correspondingly more complex.  Another implication of the atomic framework is that distributed agentic reporting of atom-activation possibilities/opportunities to higher protocol layers may enable further improvements in multiuser network design. 
 
The atomic framework applies to any block-wise memoryless channel, potentially reopening problems in network queuing, distributed computation, and overall system optimization within a unified structural framework. For Gaussian multiuser channels, where the characterization is fully constructive (Sections~\ref{sec3}--\ref{sec4}), it provides realizable communication architectures that achieve every rate tuple in the capacity region; for general memoryless multiuser channels, the same rate tuples are achievable in the sense of Section~\ref{sec2}'s closure, while a finite, computable realization remains open, per Section~\ref{sec5a}. Both cases support increasingly complex multiuser communication and distributed artificial-intelligence workloads within the scope established.  With these ``beyond Gaussian channels,''   the same terminal-atom geometry suggests a recursive input-distribution optimization framework in which successive
terminal-atom optimizations replace a single high-dimensional joint optimization, whose rigorous development remains an interesting topic
for future work.

 {\bf Acknowledgment:}  The author would especially like to thank Prof. Merouane Debbah of Khalifa University and Prof Dongning Guo of Northwestern University for their encouragement and comments.  Similarly, I also want to thank Dr. Yuhan Zhou, Dr. Jung Hyun Bae, Dr. Jaber Borren, Dr. Hamid Saber, and Dr. Kee-Bong Song, all of Samsung Corporation for their significant comments and in particular their help on the clarity and detailed presentation.  Additionally, I would like to thank Prof Vince Poor of Princeton, Prof Amir Leshem of Bar-Ilan University,  Prof Ayfer Ozgur of Stanford, Prof Wei Yu of the University of Toronto, and Dr. Eric Ruzomberka of MIT Lincoln Laboratory for their thoughtful review and suggestions of this document.

\bibliographystyle{ieeetr}
\bibliography{refs}

\end{document}